\definecolor{fxtarget}{rgb}{0.80000,0.0000,0.0000}
\newtheorem{lemma}{Lemma}[section]
\newtheorem{theorem}{Theorem}[section]
\newtheorem{proposition}{Proposition}[section]
\newtheorem{remark}{Remark}[section]
\newtheorem{notation}{Notation}[section]
\theoremstyle{definition}
\newtheorem{definition}{Definition}[section]
\renewcommand{\implies}{\quad \Longrightarrow \quad}
\newcommand{\naturals}{\mathbb{N}}
\newcommand{\num}{\mathsf{n}}
\newcommand{\fv}{\mathsf{fv}}
\newcommand{\fn}{\mathsf{fn}}
\newcommand{\allnames}{\mathsf{n}}
\DeclareMathAlphabet{\mathbfsf}{\encodingdefault}{\sfdefault}{bx}{sl}
\newcommand{\freenames}[1]{\fn(#1)}
\newcommand{\subs}[2]{\left\{\nicefrac{#1}{#2}\right\}}
\newcommand{\qst}{\,:\,}
\newcommand{\defi}{\overset{{\text{def}}}{=}}
\newcommand{\domain}[1]{\mathsf{dom}(#1)}
\newcommand{\m}[1]{\mathsf{#1}}
\newcommand{\ov}[1]{\overline{#1}}
\newcommand{\nub}{\mathbf{\nu}}
\newcommand{\tra}[1]{\xrightarrow{#1}}
\newcommand{\with}{\mathbin{\binampersand}}
\newcommand{\G}{\Gamma}
\newcommand{\D}{\Delta}
\newcommand{\vval}{v}
\newcommand{\expr}{e}
\newcommand{\psend}[2]{#1 ! \langle #2 \rangle }
\newcommand{\ENCan}[1]{\langle #1 \rangle}
\newcommand{\precv}[2]{#1 ? (#2) }
\newcommand{\zero}{\mathbf{0}}
\newcommand{\prog}{\mathbfsf{P}}
\newcommand{\pin}[2]{#1 \, \m{in} \, #2}
\newcommand{\pite}[3]{\m{if}\, #1 \, \m{then} \, #2 \, \m{else}\, #3}
\newcommand{\mpite}[4]{\m{if}^{#4}\, #1 \, \m{then} \, #2 \, \m{else}\, #3}
\newcommand{\sel}[2]{\m{select}\{#1\}_{#2} }
\newcommand{\newch}[1]{\m{newchan}(#1)}
\newcommand{\true}{\m{true}}
\newcommand{\false}{\m{false}}
\newcommand{\close}[1]{\m{close}\,#1}
\newcommand{\newdef}[3]{#1(#2) = #3}
\newcommand{\defcall}[2]{#1\langle #2 \rangle}
\newcommand{\buff}[3]{{#2}\ENCan{{#3}}{::}#1}
\newcommand{\closedbuff}[3]{{#2}^\star\ENCan{#3}{::}{#1}}
\newcommand{\buflen}[1]{|#1|}
\newcommand{\parr}{\mid}
\newcommand{\tysend}[1]{\overline{#1}}
\newcommand{\tyrecv}[1]{#1} 
\newcommand{\tsend}[2]{\tysend{#1}} \newcommand{\trecv}[2]{\tyrecv{#1}}
\newcommand{\tch}[3]{#1 \oplus \{#2\}_{#3}}
\newcommand{\tbr}[3]{#1 \with \{#2\}_{#3}}
\newcommand{\tdefcall}[2]{\typevarfont{#1}\langle #2 \rangle}
\newcommand{\End}{\mathsf{end}}
\newcommand{\ts}{\blacktriangleright}
\newcommand{\fgo}{{\sf MiGo}\xspace}
\newcommand{\markred}[1]{\tra{#1}}
\newcommand{\emp}{\epsilon}
\newcommand{\llab}{\m{L}}
\newcommand{\rlab}{\m{R}}
\newcommand{\iflab}[2]{#1 \!\cdot\! #2}
\newcommand{\trace}{\mathcal{T}}
\newcommand{\traceset}{\mathbb{T}}
\newcommand{\barb}[1]{\downarrow_{#1}}
\newcommand{\wbarb}[1]{\Downarrow_{#1}}
\newcommand{\wbarbsym}[2]{\Downarrow^{#2}_{#1}}
\newcommand{\wbarbt}[1]{\Downarrow^{}_{#1}}
\newcommand{\clbarb}[1]{\labclose{#1}}
\newcommand{\cclbarb}[1]{\labclsnd{#1}}
\newcommand{\ac}{\m{AC}}
\newcommand{\ic}{\m{Inf}}
\newcommand{\convm}{\m{May}{\mbox{\scriptsize $\Downarrow$}}}
\newcommand{\infcond}{\m{InfCond}}
\newcommand{\EqTypes}{\mathbfsf{T}}
\newcommand{\typrefix}{\kappa}
\newcommand{\labtysnd}[1]{\overline{#1}}
\newcommand{\labtyrcv}[1]{{#1}}
\newcommand{\tychana}{a}
\newcommand{\tychanb}{b}
\newcommand{\tychanc}{c}
\newcommand{\tychand}{d}
\newcommand{\tychavar}{u}
\newcommand{\tyvar}{x}
\newcommand{\typevarfont}[1]{\mathbfsf{#1}}
\newcommand{\typevara}{\typevarfont{t}}
\newcommand{\typevarb}{\typevarfont{s}}
\newcommand{\labcom}{\alpha} 
\newcommand{\tyclosedbuffer}[1]{#1^\star}
\newcommand{\tyopenbuffer}[1]{\lfloor #1 \rfloor}
\newcommand{\tynew}[1]{(\m{new} \,#1)}
\newcommand{\labsync}[1]{[#1]}
\newcommand{\labclose}[1]{\End[#1]}
\newcommand{\labclosedual}[1]{\overline{\End}[#1]}
\newcommand{\labclsnd}[1]{\tyclosedbuffer{#1}}
\newcommand{\labspawn}[1]{\mathit{sp}}
\newcommand{\semty}[1]{\xrightarrow{#1}}
\newcommand{\closedchans}{{N}} \newcommand{\typesem}[2]{#2}
\newcommand{\typesemF}[1]{\typesem{\closedchans}{#1}}
\newcommand{\typesemFone}[1]{\typesem{\closedchans_1}{#1}}
\newcommand{\typesemFtwo}[1]{\typesem{\closedchans_2}{#1}}
\newcommand{\typesemFdash}[1]{\typesem{\closedchans'}{#1}}
\DeclareMathOperator{\tytyenvsep}{;}\newcommand{\tytyProc}{G}
\newcommand{\tytyRes}{\tilde{y}}
\newcommand{\tytyUnr}{\tilde{z}}
\newcommand{\tytyenvdef}{\Delta}
\newcommand{\tyfenced}[1]{\mathsf{Fenced}{(#1)}}
\newcommand{\tytyJudgeE}[1]{\tyfenced{#1}}
\newcommand{\tytyJudgeX}[5]{#5 \tytyenvsep #1 \tytyenvsep #2 \vdash_{#3} #4}
\newcommand{\tytyJudge}[3]{\tytyJudgeX{#1}{#2}{\typevara}{#3}{\tytyProc}}
\newcommand{\tytyJudgeDef}[1]{\tytyenvdef \vdash_{\typevara} #1}
\newcommand{\tytyRule}[3]{
  \frulename{#1}\inferrule[]
  {#2}
  {#3}
}
\DeclareMathOperator{\tytyprec}{\prec}
\DeclareMathOperator{\tytysucc}{\succ}
\newcommand{\typesymsem}[2]{#1 \lhd #2}
\newcommand{\typesymsemF}[1]{\symsemCur \lhd #1}
\newcommand{\symsemidx}{k}
\newcommand{\symsemCur}{N}
\newcommand{\symsemI}[2]{\xrightarrow{#1}_{#2}}
\newcommand{\symsem}[1]{\xrightarrow{#1}_{\symsemidx}}
\newcommand{\symsemTr}{\symsem{}^{\ast}}
\newcommand{\symsemTrI}[1]{\symsemI{}{#1}^{\ast}}
\newcommand{\typesymsemruleCI}[8]{
\symrulename{#1}\inferrule
  {#2}
  {\typesymsem{#3}{#4} \symsemI{#5}{#8} \typesymsem{#6}{#7}}
}
\newcommand{\typesymsemruleC}[7]{
  \typesymsemruleCI{#1}{#2}{#3}{#4}{#5}{#6}{#7}{\symsemidx}
}
\newcommand{\typesymsemrule}[5]{
  \typesymsemruleC{#1}{#2}{\symsemCur}{#3}{#4}{\symsemCur}{#5}}
\newcommand{\tyunfenv}{G}
\newcommand{\tysizeG}[2]{\lvert #1 \rvert^{#2}}
\newcommand{\tysizeD}[1]{\tysizeG{#1}{\tyunfenv}}
\newcommand{\tysize}[1]{\lvert #1 \rvert}
\newcommand{\tyocc}[2]{\lvert #1 \rvert_{#2}}
\newcommand{\tyocct}[1]{\lvert #1 \rvert_{\typevara}}
\newcommand{\tyunfX}[3]{\mathit{U}^{#1}_{#2}(#3)}
\newcommand{\tyunf}[1]{\tyunfX{\tyunfenv}{\tilde{\tychana}}{#1}}
\newcommand{\tycontxt}{\mathbb{C}}
\newcommand{\conv}{\downarrow}
\newcommand{\MAPAST}[1]{{#1}^\ast}
\newcommand{\resXdir}[2]{r{#1}\ensuremath{#2}}
\newcommand{\resoneless}{\resXdir{1}{<}}
\newcommand{\restwoless}{\resXdir{2}{<}}
\newcommand{\restwogeq}{\resXdir{2}{\geq}}
\newcommand{\resonegeq}{\resXdir{1}{\geq}}
\newcommand{\astyOpenbuf}[3]{\lfloor #1 \rfloor_{#2}^{#3}}
\newcommand{\astyBufRcv}[1]{#1^{\bullet}} \newcommand{\astyBufSend}[1]{^{\bullet} #1} \newcommand{\astynew}[2]{(\m{new}^{#2} \,#1)}
\newcommand{\trulename}[1]{\text{\scriptsize$\langle$\sc #1$\rangle$}\xspace}
\newcommand{\rulename}[1]{\text{\scriptsize[\sc #1]}\xspace}
\newcommand{\ltsrulename}[1]{\text{\scriptsize$|$\sc #1$|$}\xspace}
\newcommand{\frulename}[1]{\text{\scriptsize$\lfloor$\sc #1$\rfloor$}\xspace}
\newcommand{\symrulename}{\ltsrulename}
\newcommand{\primeChan}{a}
\newcommand{\primeN}[1]{\primeChan_{#1}}
\newcommand{\primeGN}{\typevarfont{g}}
\newcommand{\primeFN}{\typevarfont{f}}
\newcommand{\primeRN}{\typevarfont{r}}
\newcommand{\primeG}[1]{\primeGN \langle {#1} \rangle}
\newcommand{\primeF}[2]{\primeFN \langle {#1}, {#2} \rangle}
\newcommand{\primeR}[1]{\primeRN \langle {#1} \rangle}
\newcommand{\primecolorA}{blue}
\newcommand{\primecolorB}{red}
\newcommand{\primecolorC}{PineGreen}
\newcommand{\fibChan}{a}
\newcommand{\fibChanB}{b}
\newcommand{\fibGN}{\typevarfont{fib}}
\newcommand{\fibG}[1]{\fibGN \langle {#1} \rangle}
\newcommand{\fibocolorA}{blue}
\newcommand{\fibocolorB}{red}
\newcommand{\fibocolorC}{red}
\newcommand{\fibinftrees}[1]{
  child{
    node (#1dots1) {}   }
  child{
    node (#1dots2)  {}   }
  child{
    node (#1dots3) {}   }
}
\newcommand{\nofencechanColor}{blue}
\newcommand{\nofencechan}{{\color{\nofencechanColor}{a}}}
\newcommand{\infertool}[0]{\texttt{GoInfer}\xspace}\newcommand{\checktool}[0]{\texttt{Gong}\xspace}
\newcommand{\lstCodeSize}{\scriptsize}
\newcommand{\lstPrimitiveStyle}{\color{Blue}\bfseries}
\newcommand{\lstNumberStyle}{\tiny\sffamily\color{Gray}}
\itshape\color{gray},
\lstdefinestyle{nonumber}{  numbers=none,
  xleftmargin=1em,
  framexleftmargin=1em,
}
\begin{document}

\toappear{}

\setlength{\pdfpageheight}{\paperheight}
\setlength{\pdfpagewidth}{\paperwidth}

\title{Fencing off Go:\\
{\Large Liveness and Safety for Channel-Based Programming}}

\authorinfo{Julien Lange \and Nicholas Ng \and Bernardo Toninho \and Nobuko Yoshida}
{Imperial College London, UK}
           {\{j.lange, nickng, b.toninho, n.yoshida\}@imperial.ac.uk}

\maketitle

\begin{abstract}
  Go is a production-level statically typed programming language whose
  design features explicit message-passing 
  primitives and lightweight threads, enabling (and encouraging)
  programmers to develop concurrent systems where components interact
  through communication more so than by lock-based shared
  memory concurrency. 
  Go can only detect global deadlocks at runtime, 
  but provides no compile-time protection against 
  all too common communication mismatches or partial deadlocks.

  This work develops a static verification framework for bounded
  liveness and safety in Go programs, able to detect communication
  errors and partial deadlocks in a general class of realistic
  concurrent programs, including those with dynamic channel creation
  and infinite recursion.  Our approach infers from a Go program a
  faithful representation of its communication patterns as a
  behavioural type. By checking a syntactic restriction on channel
  usage, dubbed \emph{fencing}, we ensure that programs are made up of
  finitely many different communication patterns that may be repeated
  infinitely many times. This restriction allows us to implement
  bounded verification procedures (akin to bounded model checking) to
  check for liveness and safety in types which in turn approximates
  liveness and safety in Go programs.
We have implemented a type inference and liveness and safety checks
in a tool-chain and tested it against
publicly available Go programs.

\end{abstract}

\category{D.1.3}{Programming Techniques}{Concurrent Programming}
\category{D.2.4}{Software Engineering}{Software/Program Verification}
\category{D.3.1}{Programming Languages}{Formal Definitions and Theory}
\category{F.3.2}{Semantics of Programming Languages}{Program analysis}

\keywords 
Channel-based programming, Message-passing programming,
Process calculus, Types, Safety and Liveness, 
Compile-time (static) deadlock detection

\section{Introduction}
\begin{tabular}{lr}
\hspace{1.3cm} & {\em Do not communicate by sharing memory;}\\ 
\hspace{1.3cm}& {\em instead, share memory by communicating}\\
  & \hspace{2.2cm} {\sl Go language proverb~\cite{web:effective-go,web:share-by-comm}}\\[1mm]
\end{tabular}

\noindent 
Go is a statically typed programming language designed with explicit
concurrency primitives at the forefront, namely {\em channels} and
{\em goroutines} (i.e. lightweight threads), drawing heavily from
process calculi such as Communicating Sequential Processes (CSP)
\cite{book:csp}.  
Concurrent programming in Go is mostly guided towards
channel-based communication as a way to exchange 
data between goroutines, rather than more classical
concurrency control mechanisms (e.g.~locks).  
Channel-based concurrency in Go is lightweight, offering
logically structured flows of messages in large
systems programming \cite{flywheel:nsdi15,web:flywheel}, instead of
``messy chains of dozens of asynchronous callbacks spread over tens of
source files'' \cite{web:epaxos,epaxos:sosp13}.

On the other hand, Go inherits most problems commonly found in
concurrent message-passing programming such as communication
mismatches and deadlocks, offering very little in terms of
compile-time assurances of correct structuring of communication.
While the Go runtime includes a (sound) global deadlock detector, it is
ultimately inadequate for complex, large scale applications that may
easily be undermined by trivial mistakes or benign changes to the
program structure~\cite{NY2016,gh:golang:dld-failed}, nor can it
detect deadlocks involving only a strict subset of a program's
goroutines (partial deadlocks).

\paragraph{Liveness and Safety of Communication} While Go's type system does ensure
that channels are used to communicate values of the appropriate type,
it makes no static guarantees about the liveness or channel safety
(i.e.\ channels may be closed in Go, and sending on a closed channel
raises a runtime error) of communication in well-typed code.
Our work provides a framework for the \emph{static} verification of
liveness and absence of communication errors (i.e.\ safety) of Go
programs by extracting {\em concurrent behavioural types}
from Go source code (related type systems have been pioneered by~\cite{THK}
and~\cite{DBLP:journals/tcs/IgarashiK04}, among others).
Our types can be seen as an abstract representation of communication
behaviours in the given program. 
We then perform an analysis on behavioural types, which checks for a
bounded form of liveness and communication safety on types and study
the conditions under which our verification entails liveness and
communication safety of programs.

 \subsection{Overview} \label{sec:overview}
We present a general overview of the steps
needed to perform our analysis of concurrent Go programs. 

\begin{figure}
\begin{lstlisting}[
  breaklines=false,
  language=Go,style=golang,caption=Concurrent Prime Sieve.,label=lst:infinite-prime-sieve]
package main
func Generate(ch chan<- int) { /*<\label{lst:infinite-prime-sieve-gen:begin}>*/
  for i := 2; ; i++ { ch <- i } // Send sequence 2,3.../*<\label{lst:infinite-prime-sieve-gen:send}>*/
} /*<\label{lst:infinite-prime-sieve-gen:end}>*/
func Filter(in <-chan int, out chan<- int, prime int){/*<\label{lst:infinite-prime-sieve-fil:begin}>*/
  for { i := <-in // Receive value from 'in'.
    if i%prime != 0 { out <- i } // Fwd 'i' if factor./*<\label{lst:infinite-prime-sieve-filter}>*/
  }
} /*<\label{lst:infinite-prime-sieve-fil:end}>*/
func main() { /*<\label{lst:infinite-prime-sieve-main:begin}>*/
  ch := make(chan int) // Create new channel.
  go Generate(ch)      // Spawn generator. /*<\label{lst:infinite-prime-sieve-main:gen}>*/
  for i := 0; ; i++ {
    prime := <-ch /*<\label{lst:infinite-prime-sieve-main:prime}>*/
    ch1 := make(chan int)
    go Filter(ch, ch1, prime) // Chain filter.
    ch = ch1
  }
} /*<\label{lst:infinite-prime-sieve-main:end}>*/
\end{lstlisting}
\end{figure}

\paragraph{Prime Sieve in Go}  To illustrate the challenges in 
analysing Go programs, we begin by
considering a rather concise implementation of a concurrent prime
sieve (Listing~\ref{lst:infinite-prime-sieve}, adapting the example
from~\cite{web:concur-patterns} to output infinite primes). 
This seemingly simple Go program  
includes intricate communication 
patterns and concurrent behaviours that
are hard to reason about in general due to the combination
of (1) unbounded iterative behaviour, 
(2) dynamic channel creation
and (3) spawning of concurrent threads. 

The program is
made up of three functions: \verb=Generate=, that given a channel
\verb=ch= continuously sends along the channel an increasing sequence
of integers starting from $2$ (encoded as a \verb=for= loop without an
exit condition); \verb=Filter=, that given a channel for inputs
\verb=in=, one for outputs \verb=out=, and a \verb=prime=,
continuously forwards a number from \verb=in= to \verb=out= unless it
is divisible by \verb=prime=; and \verb=main=, which assembles the
sieve by creating a new synchronous channel \verb=ch=, spawning the
\verb=Generator= thread (i.e.\ \verb=go f(x)= spawns a parallel
instance of \verb=f=) with the channel \verb=ch=, and then iteratively
setting up a chain of \verb=Filter= threads, where the first filter
is connected to the generator and the next \verb=Filter=, and so on. We note
that with each iteration, a new synchronous channel \verb=ch1= is
created that is then used to link each filter instance.
The program spawns an infinite parallel
composition of \verb=Filter= threads, each pair connected by a
dynamically created channel; and the execution of
\verb=Generator= and the \verb=Filter= processes in the sieve is
non-terminating.

\paragraph{Types of Prime Sieve}  Our framework infers from the prime sieve
program the type $\typevara_0$ given by:
\[
\begin{array}{rcl}
\primeGN(x) & \triangleq & \tysend{x} ;\primeG{x}
\\
\primeFN(x,y) & \triangleq & \tyrecv{x};
(
\tysend{y}; \primeF{x}{y}
\ \oplus \
\primeF{x}{y}
)
\\
\primeRN(x) & \triangleq & \tyrecv{x};
\tynew{b}
(\primeF{x}{b} \parr \primeR{b})
\\
\typevara_0()  & \triangleq & \tynew{a} ( \primeG{a} \parr \primeR{a} )
\end{array}
\]
The type is described 
as a system of mutually recursive
equations, where the distinguished name $\typevara_0$ types the
program entry point. 
This language is equivalent to a subset of CCS \cite{MilnerR:calcs} with 
recursion, parallel and name creation, for which 
deciding liveness or safety is in general
\emph{undecidable}
\cite{DBLP:conf/icalp/BusiGZ03,DBLP:conf/icalp/BusiGZ04}. 
The type $\typevara_0$ specifies that
\verb=main= consists of the creation of a new channel $a$ and the two
parallel behaviours $\primeG{a}$ and $\primeR{a}$. The type given by equation
$\primeGN(x)$, which types the generator,  
identifies the infinite output behaviour along the given channel
$x$. The type equation $\primeFN(x,y)$ specifies the filter behaviour:
input along $x$ and then either output on $y$ and recurse or just
recurse. Finally, we decompose the topology set-up as
$\primeRN(x)$ which inputs along $x$ and then, through the creation of
a new channel $b$, spawns a filter on $x$ and $b$ and
recurses on $b$, creating an infinite parallel
composition of filters throughout the execution of the program.

Our type-level analysis relies on the fact that types are able to
accurately model a program's communication behaviour. The
analysis proceeds in two steps: 
a simple syntactic check on channel usage in types, 
dubbed \emph{fencing},
and a \emph{symbolic} finite-state execution
of \emph{fenced} types.\fxnote{Changed here.} 

\paragraph{Fenced Types}  Intuitively, if types are fenced, then  
during their execution there can only be a finite set of channels, or a
\emph{fence}, shared by finitely many types (threads). Moreover, fencing
ensures that recursive calls under parallel composition eventually
involve only local names, shared between finitely many threads.
This guarantees that a program consists of finitely many different
communication patterns (that may themselves be repeated infinitely
many times).

For instance, the recursive call to $\primeR{b}$ in the equation 
$\primeRN(x)$ is \emph{fenced}, since the recursive instances 
of $\primeRN$ will not know the channel parameter $x$
hence they cannot spawn threads which share $x$. 
This restriction
enforces a ``finite memory'' property wrt.\ channel names,
insofar as given enough recursive unfoldings all the parameters of the
recursive call will be names local to the recursive definition.

\paragraph{Symbolic Semantics and Bounded Verification} 
Fenced types can be \emph{symbolically} executed as CCS processes in a
representative \emph{finite-state} labelled transition system that
consists of a bounded version of the generally unrestricted
type semantics. 
This enables us to produce decision procedures for \emph{bounded}
liveness and safety of types, which deems the type
$\typevara_0$ as bounded live and safe. 
In the finite control case, 
the bounded correctness properties and their unbounded ones  
coincide. The approach is similar to bounded model checking, using a bound on
the number of channels in a type to limit its execution.

\paragraph{From Type Liveness to Program Liveness} The final step is to
formally relate liveness and safety of types to their analogues in Go
programs. For the case of safety, {safety of types
implies safety for programs.} 
{For the case of liveness,} 
programs typically rely on data values to guide their
control flow (e.g. in conditional branches) which are abstracted away
at the type level.  For instance, the \verb=Filter= function only
outputs if the received value \verb=i= is not divisible by the given
\verb=prime=, but the type for the corresponding process is given by
$\primeFN(x,y)$ which just indicates an internal choice between two
behaviours. The justification for the liveness of
$\typevara_0$ is that the internal choice always has the
\emph{potential} to enable the output on $y$ (assuming a fairness
condition on scheduling).  However, {it is not necessarily the case
that a}  conditional branch is ``equally likely'' to proceed to
the \verb=then= branch or the \verb=else= branch. 
{In \S~\ref{sec:properties}, we define 
the three classes of programs in which 
liveness of programs and their types coincide.}

\subsection{Contributions} 
We list the main contributions of our work:
To define and then show the bounded liveness and safety properties entailed by our
analysis, we formalise the message-passing concurrent fragment
of the Go language 
as a process calculus
(dubbed \fgo). 
The \fgo calculus mirrors very closely the
intended semantics of the channel-based Go constructs and allows us
to express highly dynamical and potentially complex behaviours  
({\bf \S~\ref{sect:lang}});
We introduce a typing system for \fgo 
which abstracts behaviours of \fgo as a subset of CCS process
behaviours ({\bf\S~\ref{sect:core_typ}});
We define a verification framework for our type language based on the
notion of \emph{fences} and \emph{symbolic execution}, showing that
for fenced types our symbolic semantics is finite control, entailing
the decidability of \emph{bounded} liveness and channel safety (i.e. the
notions of liveness and channel safety wrt.\ the symbolic
semantics -- {\bf \S~\ref{sec:types}})\fxnote{edited here};  We characterise the
\fgo programs whose liveness and safety are derived from our analysis on types ({\bf
  \S~\ref{sec:properties}}); We show that our results are
systematically extended to asynchronous communication semantics ({\bf
  \S~\ref{sec:asynchrony}});
we describe the implementation of our analysis in a tool
 that we use to evaluate our approach against
open-source, publicly available Go programs ({\bf \S~\ref{sec:impl}}).

\noindent
\iftoggle{cameraready}{
  The full proofs, omitted examples and definitions are available in
  appendix~\cite{APPENDIX}, while our implementation and benchmark
  examples are available online~\cite{TOOL}.
}{  
Our implementation and benchmark examples are available
 online~\cite{TOOL}.
}

 \section{\fgo: A Core Language for Go}\label{sect:lang}
This section introduces a core calculus that models
the message passing concurrency features of the Go programming
language, dubbed \fgo (mini-go). 
Beyond sending and receiving data values along channels,
the Go language supports three key concurrency features: 

{\bf FIFO Queues.}  Message-passing in Go is achieved via an abstract
notion of a lossless, order-preserving communication channel,
implemented as a (bounded) FIFO-queue.  When the bound on the queue
size is $0$, communication is \emph{fully synchronous}, whereas with
strictly positive bounds the communication is \emph{asynchronous}
(i.e.~sending is non-blocking if a queue is not full and, dually,
receiving is non-blocking if a queue is not empty). The bound is
defined upon channel creation and cannot be changed dynamically.

{\bf Goroutines.} 
Go supports lightweight threads, dubbed \emph{goroutines}, which denote the spawning of a thread to execute a
function concurrently with the main control flow of a program. 
This feature can be modelled by a combination of parallel composition
and process definitions. 

{\bf Select.} 
The $\m{select}$ construct in Go 
encodes a form of guarded choice, where each branch is guarded
by an input or an output on some channel. When multiple branches can
be chosen simultaneously, one is chosen at random (through
pseudo-random number generation). It is also possible to encode a
``timeout'' branch in such a choice construct.

Our fencing-based analysis is oblivious to buffer sizes, hence
we first focus on fully synchronous communication for ease of
presentation, addressing bounded asynchrony in
\S~\ref{sec:asynchrony}.

\subsection{Syntax of \fgo}\label{sect:core_syn}
The syntax of the calculus is given in Figure~\ref{fig:syntax}, where $P,Q$ range over
{\em processes}, $\pi$ over communication {\em prefixes}, $e,e'$ over
{\em expressions} and $x,y$ over {\em variables}.
We write $\tilde{\vval}$ and $\tilde{x}$ for a
list of expressions and variables, respectively
(we use $\cdot$ as a concatenation operator).
Programs are ranged over by $\prog$, consisting of a collection of
mutually recursive process definitions (ranged over by $D$),
parameterised by a list of (expression and channel) variables.
We omit a
detailed enumeration of types such as booleans, floating-point
numbers, etc., which are typed with payload types $\sigma$.
\begin{figure}
\[
\begin{array}{lcl}
P,Q & \coloneqq & \pi;P\\
    &  |  & \close{u};P\\
    &  |  & \sel{\pi_i;P_i}{i\in I}\\
    &  |  & \pite{\expr}{P}{Q}\\
    &  |  & \newch{y{:}\sigma};P \\
    &  |  & P \parr Q \ \mid \ \zero\\
    &  |  & \defcall{X}{\tilde \expr,\tilde{u}}\\
      &  |  & (\nu c) P \\
    &  |  & \buff{\tilde{\vval}}{c}{\sigma}  \mid  
\closedbuff{\tilde{\vval}}{c}{\sigma}
\end{array}
\ 
\begin{array}{lcl}
u & \coloneqq &  a \quad | \quad x
  \\[1mm]
\pi & \coloneqq & \psend{u}{\expr}  \mid  \precv{u}{y}
    \mid \tau\\[1mm]
  \vval & \coloneqq  & \num \mid \true \mid \false \mid x \\[1mm]
\expr & \coloneqq & \vval \mid \mathsf{not}(e) \mid \mathsf{succ}(e) \\[1mm]
    D   & \coloneqq & \newdef{X}{\tilde{x}}{P}\\[1mm]
\prog & \coloneqq & \pin{\{ D_i \}_{i\in I}}{P}\\[1mm]
\sigma & \coloneqq & \m{bool} \mid \m{int} \mid \dots\\[1em]
   \end{array}
\]
\caption{Syntax of \fgo.}\label{fig:syntax}
\end{figure}
$\fn(P)$ and $\fv(P)$ denote the sets of free names and variables.
Process variables $X$ are bound by {\em definitions} $D$ of the form of 
$\newdef{X}{\tilde{x}}{P}$ where $\fv(P)\subseteq\{ \tilde{x} \}$
and $\fn(D) = \emptyset$. 
We use $u$ to range over channel names $a$ or variables
$x$.

The language constructs are as follows: a {\em prefixed} (or guarded)
process $\pi ; P$ denotes the
behaviour $\pi$ (a {\em send action} of $\expr$ on $u$,
$\psend{u}{\expr}$, 
a {\em receive action} on $u$, bound to $y$, $\precv{u}{y}$, or an 
{\em internal action}
$\tau$) followed by process $P$; a {\em close process} 
$\close{u};P$ closes the channel
$u$ and continues as $P$; 
a {\em selection} process  
$\sel{\pi_i ; P_i}{i\in I}$ denotes a choice between
the several $P_i$ processes, where each $P_i$ is guarded by a prefix
$\pi_i$. Thus, the choice construct non-deterministically selects between any
process $P_i$ whose guarding action can be executed (note that a
$\tau$ action prefix can always be executed, cf.~\S~\ref{sect:core_sem});
the standard {\em conditional} process $\pite{\expr}{P}{Q}$, 
{\em parallel} process $P\parr Q$, and  
the {\em inactive} process $\zero$ (often omitted);
a {\em new channel} process 
$\newch{y{:}\sigma};P$ creates a new channel with payload type
$\sigma$, binding it to $y$ in the continuation $P$; 
{\em process call} 
$\defcall{X}{\tilde
  \expr,\tilde{u}}$ denotes an instance of the process definition
bound to $X$, with formal parameters instantiated to $\tilde{\expr}$
and $\tilde{u}$. 
Both {\em restriction} $(\nu c)P$ and {\em buffers} at channel $c$ 
denote \emph{runtime} constructs (i.e.~not
written explicitly by the programmer), where the
former denotes the runtime handle $c$ for a channel bound in $P$ and
a buffer for an {\em open} channel
$\buff{\tilde{\vval}}{c}{\sigma}$, containing messages $\tilde{\vval}$
of type $\sigma$, or
a buffer for a \emph{closed} channel
$\closedbuff{\tilde{\vval}}{c}{\sigma}$. A closed channel cannot be used
to send messages, but may be used for receive operations an unbounded
number of times.

Our representation of a Go program as a program $\prog$ in \fgo, written
$\pin{\{ D_i \}_{i\in I}}{P}$, consists of a set of mutually recursive
process definitions which encode all the goroutines and functions
used in the program, together with a process $P$ that encodes the
program entry point (i.e.~the \verb=main=).

\subsubsection{Example -- Prime Sieve in \fgo}\label{sect:ex_proc_sieve}
To showcase the \fgo calculus, we present a concurrent implementation of the
sieve of Eratosthenes that produces the infinite sequence of all prime
numbers. 

The implementation relies on a generator process 
$G(n,c)$ 
that
outputs natural numbers and a filter process $F(n,i,o)$ that
filters out divisible naturals. The code for the generator 
and filter
processes are given below as definitions:
\[
\begin{array}{rcl}
G(n,c) & \triangleq & \psend{c}{n};G\langle n{+1} , c\rangle\\
F(n,i,o) & \triangleq & \precv{i}{x};\pite{(x\% n \neq 0)}{\psend{o}{x};F\langle
               n,i,o \rangle\\ & &\hspace{2.6cm}}{F\langle n , i , o\rangle}
\end{array}
\]
Definition $G$ stands for the generator process: given the natural
number $n$ and channel $c$,
$G(n,c)$ sends the number $n$ along $c$ and recurses on
$n{+1}$. Definition $F$ stands for the filter: given a natural 
$n$ and a pair of channels $i$ and $o$, $F(n,i,o)$ inputs a number $x$
along $i$ and sends it on $o$ if $x$ is not divisible by $n$,
followed by a recursive call. We then need a way to chain
filters together, implementing the sieve:
\[
\begin{array}{lcl}
R(c) & \triangleq & \precv{c}{x};\newch{c'{:}\m{int}};(F\langle x
                    ,c,c'\rangle \parr R\langle c'\rangle)
\end{array}
\]
The process defined above inputs from the previous element in the
chain (either a generator or a filter), creates a new channel
which is then used to spawn a new filter process in parallel with 
a recursive call to $R$ on the new channel. Putting all the components
together we obtain the program:
\[
\pin{\{G(n,c) , F(n,i,o) , R(c) \}}
{\newch{c{:}\m{int}};(G(2,c) \parr R(c))}
\]

As we make precise
in \S~\ref{sub:fence-pred}, the execution of the processes is
\emph{fenced} insofar it is always the case that channels are
shared (finitely) by a finite number of processes. For instance, name
$c$ above is only known to $G(k,c)$ and $F(2,c,c')$. This point
is crucial to ensure the feasibility of our approach.

\subsubsection{Example -- Fibonacci in \fgo}\label{sec:sync-fib}
We implement a parallel Fibonacci number generator that computes the
$n$\textsuperscript{th} number of the Fibonacci sequence.
\[
\begin{array}{lcl}
\mathit{Fib}(n,c)  \triangleq 
\quad \pite{(n \leq 1)}{\psend{c}{n}} {\newch{c'{:}\m{int}};\\ 
\qquad (\mathit{Fib}\langle n{-1} ,
                          c'\rangle \mid \mathit{Fib}\langle n{-2} ,
                          c'\rangle \mid 
\precv{c'}{x};\precv{c'}{y};\psend{c}{x{+y}}
) }
\end{array}
\]
The definition ${Fib}(n,c)$ above tests if the given number $n$ is
less than or equal to $1$. If so, it sends $n$ on $c$ and
terminates. Otherwise, the process creates a new channel $c'$,
which is then used to run two parallel copies of $\mathit{Fib}$ for
the two predecessors of $n$. The parallel instances are composed
with inputs on $c'$ twice which are then added and
sent along $c$. 
A sample program that produces the $10$\textsuperscript{th} element of the Fibonacci
sequence is given below:
\[
\pin{\{\mathit{Fib}(n,c) \}}
{\newch{c{:}\m{int}};(\mathit{Fib}\langle 10,c \rangle \parr \precv{c}{u};\zero )}
\]
On the other hand, the following program should be deemed not live since
the outputs from  the recursive calls are never sent to the initial
$\mathit{Fib}_{\mathit{bad}}$ call and so the answer is never returned
to the main process (i.e. $\precv{c}{u};\zero$ can never fire).
\[
\begin{array}{lcl}
\mathit{Fib}_{\mathit{bad}}(n,c)  \triangleq {\newch{c'{:}\m{int}}};\\
\ (\mathit{Fib}_{\mathit{bad}}\langle n{-1},
              c'\rangle \mid \mathit{Fib}_{\mathit{bad}}\langle n{-2} ,
             c'\rangle \mid 
\precv{c'}{x};\precv{c'}{y};\psend{c}{x{+y}}
) 
\end{array}
\]

\subsection{Operational Semantics}\label{sect:core_sem}
The semantics of \fgo, written $P \tra{} Q$, is defined by
the reduction rules of Figure~\ref{fig:redsem}, together with the
standard structural congruence $P\equiv Q$ (which includes
$\equiv_\alpha$). 
For process definitions,
we implicitly assume the existence of an ambient set of definitions
$\{D_i \}_{i\in I}$.  Our semantics follows closely the semantics of
the Go language: a channel is implemented at runtime by a buffer that
is open or closed. Once a channel is closed, it may not be closed
again nor can it be used for output. However, a closed channel can always
be the subject of an input action, where the received value 
is a bottom element of the corresponding payload data
type. For now, we impose a synchronous semantics (represented in Go
with channel of size $0$), see \S~\ref{sec:asynchrony} for
the asynchronous semantics.

Rule \rulename{scom} specifies a synchronisation between a
send and a receive. Rule \rulename{sclose} defines inputs from
closed channels, which according to the semantics of Go are always
enabled, entailing the reception of a base value $\vval^\sigma$ of type
$\sigma$. Rule \rulename{close} changes the state of a buffer from
open ($\buff{\tilde{\vval}}{c}{\sigma}$) to closed
($\closedbuff{\tilde{\vval}}{c}{\sigma}$). Rule \rulename{newc} creates
a fresh channel $c$, instantiating it accordingly in the continuation
process $P$ and creating the buffer for the channel. Rule \rulename{sel}
encodes a mixed non-deterministic choice, insofar as any
subprocess $P_j$ that can exhibit a reduction may trigger the choice.
The rule \rulename{def} replaces $X$ by the corresponding process
definition (according to the underlying definition environment),
instantiating the parameters accordingly. The remaining rules are
standard from process calculus literature \cite{SangiorgiD:picatomp}.

\begin{figure}
\[
\begin{array}{l}
\rulename{scom}
\inferrule
 {\expr\conv \vval} 
 {\psend{c}{\expr};P \parr \precv{c}{y};Q \parr \buff{\emptyset}{c}{\sigma}
 \tra{}
  P \parr Q\subs{\vval}{y} \parr \buff{\emptyset}{c}{\sigma} }
\\[5mm]
\multicolumn{1}{l}{
\hspace{-2mm}
\begin{array}{ll}
\rulename{sclose}&
 {\precv{c}{y};P \parr \closedbuff{\emptyset}{c}{\sigma}}\tra{}
 {P\subs{\vval^\sigma}{y} \parr \closedbuff{\emptyset}{c}{\sigma}}
\hspace{1cm}
\\[1mm]
          \rulename{close}& 
 {\close c ; P \parr \buff{\tilde{\vval}}{c}{\sigma}}\tra{}
 { P \parr \closedbuff{\tilde{\vval}}{c}{\sigma}}\quad\quad
\\[1mm]
\rulename{tau}&
    {}
    {\typesem{\closedchans}{\tau; P}}\tra{}
    {\typesem{\closedchans}{P}}\\[1mm]
\end{array}
}
\\[1mm]
\rulename{newc}
\inferrule
  {c \notin \fn(P)}
  {{\typesem{\closedchans}{\newch{y{:}\sigma};P}}\tra{}
  {\typesem{\closedchans}{(\nu c) (P\subs{c}{y}} \mid \buff{\emptyset}{c}{\sigma})}}
  \\[5mm]
                              \multicolumn{1}{l}{
\hspace{-2mm}
\begin{array}{ll}
\rulename{par}
    \inferrule    {\typesem{N}{P} \tra{} \typesem{N'}{P'}}
    {{\typesem{N}{P \parr Q}}\tra{}
    {\typesem{N'}{P' \parr Q}}}
&
\rulename{res}
    \inferrule    {\typesem{N}{P} \tra{} \typesem{N'}{P'}}
    {{\typesem{N}{(\nu c) P}}\tra{}
    {\typesem{N'}{(\nu c) P'}}}
  \\[5mm]
\rulename{str}
\inferrule
    {P \equiv Q \tra{} Q'\equiv P'}
    {{\typesem{N}{P}}\tra{}
    {\typesem{N'}{P'}}}
    &
\rulename{sel}
\inferrule{\typesem{\closedchans}{\pi_j; P_j\parr P} \tra{} \typesem{\closedchans'}{R} \quad j \in I}
    {{\typesem{\closedchans}{\sel{\pi_i; P_i}{i\in I}}\parr P}\tra{}
    {\typesem{\closedchans'}{R}}}
  \\[5mm]
\rulename{ift}
\inferrule
    {e\conv \true}
    {{\typesem{\closedchans}{\pite{e}{P}{Q}}}\tra{}
    {\typesem{\closedchans}{P}}}
&
\rulename{iff}
\inferrule
    {e\conv \false}
    {{\typesem{\closedchans}{\pite{e}{P}{Q}}}\tra{}
    {\typesem{\closedchans}{Q}}}\\[5mm]
\end{array}
}
\\[5mm]
\rulename{def}
 \inferrule{\typesem{\closedchans}{P\subs{\tilde{\vval},\tilde{c}}{\tilde{x}} \parr Q} \rightarrow 
\typesem{\closedchans'}{R} \quad e_i \conv \vval_i \quad X(\tilde{x})
  = P\in \{ D_i \}_{i\in I}}   {{\typesem{\closedchans}{X(\tilde{e},\tilde{c})\parr Q}}{\typesem{\closedchans'}\tra{}{R}}}
\end{array}
\]
{\bf Structural Congruence}
\[
\begin{array}{l}
P \parr Q \equiv Q \parr P \qquad P \parr (Q \parr R) \equiv (P \parr
  Q) \parr R\quad 
P \parr \zero \equiv P \\
(\nub c)(\nub d)P \equiv (\nub d)(\nub c)P
\quad \;
(\nub c)\zero \equiv \zero
\quad \;
(\nub c)\buff{\tilde{\vval}}{c}{\sigma} \equiv \zero 
\\
P \mid (\nub c)Q \equiv (\nub c)(P\mid Q) \ \mbox{\scriptsize $(c\not\in \fn(P))$}
\quad 
(\nub c) \closedbuff{\tilde{\vval}}{c}{\sigma} \equiv \zero
\end{array}
\]
\caption{Reduction Semantics.}\label{fig:redsem}
\end{figure}

\subsection{Liveness and Channel Safety}\label{sub:live-safe-process}

We define a notion of liveness and channel safety for programs
through barbs in
processes (Definition~\ref{def:barbs}). Liveness identifies the
ability of communication actions to always eventually fire. 
Channel safety pertains to the semantics of channels in Go, where closing a
channel more than once or sending a message on a closed channel
raises a runtime error. 

A common pattern in the usage of select in Go is to 
introduce a timeout (or default) branch, which we model as 
a $\tau$-guarded branch. This notion of timeout  
makes the definition of liveness slightly challenging.
Consider the following:
\[
\begin{array}{ll}
P_1 \triangleq \sel{\psend{a}{v}, \, \precv{b}{x};\zero \, ,
  \, \tau ; P_{t}}{} & R_1 \triangleq \precv{a}{y};\zero \\
P_2 \triangleq \sel{\psend{a}{v}, \, \precv{b}{x};\zero}{}
&
R_2 \triangleq \precv{c}{y};\zero
\end{array}
\]
A select with a branch guarded by $\tau$ contains a branch that is
always enabled by default (since $\tau$ actions can always fire
silently). Hence if the continuation of the $\tau$ prefix $P_t$ is
live, then $P_1$ is live.  On the other hand, $P_2$ by itself
\emph{cannot} be live.  For $P_2$ to be live, it must be composed with
a process that can offer an input on $a$ or an output on $b$, with
respective live continuations. Hence $P_2\parr R_1$ is live.  However,
$P_1\parr R_i$ is not live unless $P_t \parr R_i$ is live ($i \in
\{1,2\}$).

Accounting for these features, we formalise 
safety and liveness properties, extending 
the notion of barbed process predicates \cite{MiSa92}. 
Most of the definitions are given in a standard way, with some
specifics due to the ability to close channels: 
input
barbs $P\barb{\labtyrcv{\tychana}}$, denoting that process $P$ is ready
to perform an input action on the free channel name $a$; output barbs
$P\barb{\labtysnd{\tychana}}$ are dual;
a synchronisation barb $P\barb{\labsync{\tychana}}$,
indicating that $P$ can perform a synchronisation on $a$; 
a channel close barb $P\barb{\labclose{\tychana}}$, denoting that $P$
can close channel $a$; and $P\barb{\labclsnd{\tychana}}$, denoting
that $P$ may send from closed channel $a$. 
We highlight the predicate $P\barb{\tilde{o}}$, where $\tilde{o}$ is a
set of barbs, which 
applies only for the
select construct, stating that the barbs of $\sel{\pi_i ; P_i}{i\in
  I}$ are those of all the processes that make up the external choice,
provided that \emph{all of them} can exhibit a barb. 

\begin{definition}[Barbs]~\label{def:barbs}\rm
We define the predicates
$\pi\barb{o}$, 
$P\barb{o}$ and $P\barb{\tilde{o}}$ 
with $o,o_i \in \{
\labtyrcv{\tychana}, 
\labtysnd{\tychana}, 
\labsync{\tychana},  
\labclose{\tychana}, 
\labclsnd{\tychana}\}$.
\[
\precv{c}{x}\barb{c}
\quad 
\psend{c}{\expr}\barb{\ov{c}}
\quad
\inferrule
{\pi \barb{o}}
{\pi ; Q \barb{o}}
\quad 
\close{c};Q\barb{\clbarb{c}}
\
\closedbuff{\vec{v}}{c}{\sigma}\barb{\cclbarb{c}}
\]
\[
\inferrule
{P\barb{o}}
{P\parr Q \barb{o}}
\quad 
\inferrule
  {P\barb{o}\quad a \not\in \fn(o)}
  {(\nu a)P\barb{o}}
\quad 
\inferrule
  {P\barb{o}\quad P \equiv Q}
  {Q\barb{o}}
\]
\[
\inferrule
{
  Q\subs{\tilde \expr,\tilde{a}}{\tilde{x}} \barb{o}\quad  \newdef{X}{\tilde{x}}{Q}
}
{
  \defcall{X}{\tilde{\expr}, \tilde{a}} \barb{o}
}
\]
\[
\inferrule
  {\forall i \in \{1,..,n\} \qst \pi_i \barb{o_i} }
  {\sel{\pi_i; P_i}{i\in \{1,..,n\}}\barb{ \{ o_1 \ldots o_n \} }}\quad 
\inferrule
{P\barb{a}\quad Q\barb{\ov{a}}\mbox{ or }Q\barb{\cclbarb{a}}}
{P\parr Q\barb{[a]}}
\]
\[
\inferrule
{P\barb{a}\quad \pi_i\barb{\ov{a}} } {P\parr \sel{\pi_i ; Q_i}{i \in I}\barb{[a]}}
\quad 
\inferrule
{P\barb{\ov{a}}\mbox{ or }P\barb{\cclbarb{a}}\quad \pi_i\barb{a}}
{P\parr \sel{\pi_i ; Q_i}{i \in I} \barb{[a]}}
\]
\noindent ${P}\wbarb{o}$ if ${P}\tra{}^* {P'}$ and 
${P'}\barb{o}$ with 
$o \in \{c , \ov{c} , [c], \clbarb{c}, \cclbarb{c}\}$.
\end{definition}
For example, 
we have that $\neg (P_1\barb{a})$ for any $a$,  
whereas $P_2 \barb{\ov{a},b}$. Note that
$(P_1 \mid R_1)\barb{[a]}$ and $(P_2 \mid R_1)\barb{[a]}$; and  
if $P_t\barb{o}$ then $P_1\wbarb{o}$ and 
if $P_t\barb{\ov{c}}$ then $P_1\parr R_2 \wbarb{[c]}$.

We may now define liveness and channel safety: 
a program is live if, for all the reachable process states, (a) if the
state can perform an input or output action on a channel, the
state can also eventually perform a synchronisation on that channel;
and, (b) if a state can perform a set of actions (i.e. a select where
all its guards are non-$\tau$), the state can also eventually
synchronise on one of the action prefixes.

\begin{definition}[Liveness]~\label{def:plive}\rm
The program $\prog$ satisfies liveness if for all $Q$ such that
$\prog \tra{}^* (\nub \tilde{c}){Q}$:
\begin{itemize}
\item[(a)] If ${Q}\barb{a}$ or 
${Q}\barb{\ov{a}}$ 
then 
${Q}\wbarb{\labsync{a}}$.  
\item[(b)] If ${Q}\barb{\tilde{a}}$ 
then ${Q}\wbarb{\labsync{a_i}}$ for some $a_i\in \{\tilde{a}\}$.   
\end{itemize}
\end{definition}

Channel safety states that in all reachable program
states, channels are closed at most once and no process performs
outputs on closed channels, as specified by the semantics of the Go language.

\begin{definition}[Channel Safety]~\label{def:psafe}
\rm
The program $\prog$ 
is {\em channel safe} if for all $Q$ such that 
$\prog\tra{}^*(\nub \tilde{c}){Q}$, if 
$Q\barb{\cclbarb{a}}$ 
then 
$\neg(Q\wbarb{\clbarb{a}})$ and $\neg (Q\wbarb{\ov{a}})$.
\end{definition}

 \section{ A Behavioural Typing System for \fgo}\label{sect:core_typ}
Go's channel types are related to those of the
$\pi$-calculus, where the type of a channel carries the type of the
objects that threads can send and receive along the channel. 
Our typing system augments Go's channel types by also
serving as a behavioural abstraction of a valid \fgo program,
where types take the form of CCS processes with name creation.

\subsection{Syntax of Types}
The syntax of types $T,S$ is
given in Figure~\ref{fig:types}, mirroring closely that of \fgo processes:
The type $\typrefix ;T$ denotes an output $\tysend{\tychavar}$,
input $\tyrecv{\tychavar}$ along channel $\tychavar$, or an explicit $\tau$
action (often used to encode timeouts in external choices), followed
by the behaviour denoted by type $T$. The type
$\tch{}{T_i}{i\in I}$ represents an internal choice between the
behaviours $T_i$, whereas $\tbr{}{  \typrefix_i ; T_i}{i\in I}$
denotes an external choice between behaviours $T_i$, respectively
guarded by prefixes $\typrefix_i$ which drive the choice. Types
include parallel composition of behaviours $T \parr S$, 
inaction $\zero$ and  
channel
creation $\tynew{\tychana} T$ (binding $\tychana$ in $T$). 
The type $\End[{\tychavar}];T$ denotes the closing of channel
$\tychavar$ followed by the behaviour $T$.
The type variable  
$\typevara_{X}\ENCan{\tilde{u}}$ 
associated to a process variable $X$,   
denotes the behaviour bound to
variable $\typevara_{X}$ in the definition environment, with formal
parameters instantiated to $\tilde{u}$. 
The type $\{\typevara_{X_i}(\tilde{y}_i) = T_i\}_i \ \mathsf{in} \ S$
codifies a set of (parameterised) mutually recursive type definitions
$\typevara_{X_i}(\tilde{y}_i) = T_i$, bound in $S$. This
set of equations denoted by $\EqTypes$
is the type assigned to top-level programs $\prog$. 

The type constructs $(\nu \tychana) T$, $\tyopenbuffer{\tychana}$ and
$\tyclosedbuffer{\tychana}$ denote the type representations of runtime
channel bindings, open and closed buffers, respectively. We write
$\fn(T)$ and $\fv(T)$ for the free names and variables of type $T$,
respectively; and $\allnames(T)$ denotes the set of bound and free
names.

\begin{figure}[t]
\[
\begin{array}{rcl}
T,S & :=& \typrefix ;T
    \mid \tch{}{T_i}{i\in I}
\mid \tbr{}{  \typrefix_i ; T_i}{i\in I}
\mid (T \parr S)
\mid \zero\\
& \mid & 
\tynew{\tychana} T \mid 
\End[{\tychavar}];T
\mid \typevara_{X}\ENCan{\tilde{u}}
 \mid (\nu \tychana) T
\mid  \tyopenbuffer{\tychana}
\mid \tyclosedbuffer{\tychana}\\[1mm]
\EqTypes & := &   \{\typevara_{X_i}(\tilde{y}_i) = T_i\}_i \ \mathsf{in} \
          S   \qquad \typrefix \coloneqq  
  \tysend{\tychavar}
   \mid  
  \tyrecv{\tychavar}
   \mid  
  \tau
\end{array}
\]
\caption{Syntax of Types.}\label{fig:types}
\end{figure}

\subsection{Typing System}
We first explain the two essential differences from (linear or
session-based) type systems of the
$\pi$-calculus~\cite{THK,DBLP:journals/tcs/IgarashiK04,Huttel:2016}:

{\bf Sharing of Channels.} We do not enforce linear (disjoint) 
channel usages, allowing processes to have races.  
For instance, the process below (with shared $y$) 
is typable:
\[\newch{y{:}\m{bool}};(\psend{y}{\true};\zero \parr
\psend{y}{\false};\zero\parr\precv{y}{x};\zero)\] 

{\bf Conditionals.} We do not enforce the same types of both branches 
of the conditional. 
This design choice stems from the fact
that most real programs make use of conditionals precisely to identify
points where behaviours need to be different. For instance, consider
the following definition: 
\[ {X}(c) = \precv{c}{x};\pite{x\geq
  0}{{X}\langle c\rangle}{\zero}\]
The recursive process defined by
${X}(c)$ receives a potentially
unbounded number of positive integers, stopping when the
received value $x$ is less than $0$. 
To type such a commonplace programming pattern, 
we allow the branches in the conditional to hold different types
(which are also incompatible by the usual branch subtyping). 

\paragraph{Process Typing}
The judgement ($\G \vdash P \ts T$) for processes is defined in
Figure~\ref{fig:typing} 
where $\G$
is a typing environment that maintains information about channel
payload types, types of bound communication variables and recursion
variables, $P$ is a process and $T$ a behavioural type. 

We write $\G \vdash \mathcal{J}$ for $\mathcal{J} \in \G$ and
$\G \vdash e : \sigma$ to state that the expression $e$ is
well-typed according to the types of variables in $\G$. We write
$u{:}\m{ch}(\sigma)$ to denote that $u$ stands for a channel with
payload type $\sigma$. We omit the
typing rules of expressions $e$, given that
expressions only include basic data types. We write $\domain{\G}$ for
the set of channel bindings in $\G$.  

The rules implement a very close
correspondence between processes and their respective types: Rule
{\trulename{out}} types output processes with the output prefix type
$\tsend{u}{\sigma};T$, checking that the type of the object to be sent
matches the payload type $\sigma$ of channel $u$, and that the
continuation $P$ has type $T$. The rule \trulename{in} for inputs
is dual. 
Rule \trulename{sel} types the select construct with the external
choice type, whereas rule {\trulename{if}} types the conditional
as a binary internal choice between the type $S$ corresponding to
$P$ and the type $T$ corresponding to $Q$. 

The typing rules for close, zero, parallel and $\tau$ are straightforward. 
Rule {\trulename{new}} allocates a fresh type-level channel name
with payload type $\sigma$. 
Rule {\trulename{var}} matches a
process variable with its corresponding type variable, checking that
the specified arguments have the appropriate types. 

\paragraph{Program Typing}
The judgement 
($\G \vdash \prog \ts \EqTypes$) is defined in
Figure~\ref{fig:typing}. 
A process declaration
$X(\tilde{x}{:}\tilde{\sigma},\tilde{y}{:}\m{ch}(\tilde{\sigma}')) =
P$ is matched with $\typevara_X(\tilde{y}) = T$, connecting the process level
variable $X$ with the type variable $\typevara$, where $P$ may use any of the
parameters specified in the recursion variable. The typing rule for
programs \trulename{def} assigns a program the type 
$\{\typevara_{X_i}(\tilde{y}_i) = T_i\}_{i \in I}\ \mathsf{in} \ S$,
checking that each definition is typed with
$\typevara_{X_i}(\tilde{y}_i) = T_i$ and that the main process $Q$
has type $S$. 

\paragraph{Runtime Process Typing}
The judgement ($\G \vdash_B P \ts T$) 
types a process created after execution of
a program (called runtime process). 
$B$ is a set of channels with associated runtime buffers
to ensure their uniqueness. 
Runtime channel bindings are typed by rule {\trulename{res}}, 
where given a process of type $T$ that can use the buffered
channel $c$, we type $(\nub c)P$ with $(\nub c)T$ removing $c$ from
the set $s$ since it is local to $P$ (and $T$).  
Closed and open buffers are typed by rules 
{\trulename{cbuff}}
and {\trulename{buff}}, respectively, noting that the set of active
buffers is a singleton containing the appropriate buffer
reference. The parallel rule \trulename{parr} 
ensures that the buffers of both processes do not overlap (hence 
only a single buffer for each name exists in the context). 

\begin{figure}
\framebox{$\G \vdash P \ts T$}\\
$
\begin{array}{c}
\trulename{out}
\inferrule{\G \vdash u {:} \m{ch}(\sigma) \quad \G \vdash e : \sigma \quad  \G \vdash P \ts  T }
{\G \vdash \psend{u}{e};P \ts \tsend{u}{\sigma};T  }
\\[1.5em]
\trulename{in}
\inferrule{\G \vdash u {:} \m{ch}(\sigma) \quad \G , x{:}\sigma\vdash P \ts T }
{\G \vdash \precv{u}{x};P \ts \trecv{u}{\sigma};T}
\ 
\trulename{tau}
\inferrule{\G \vdash P \ts T}
{\G \vdash \tau ; P \ts  \tau ; T}\\[1.5em]
\trulename{close}
\inferrule{ \G \vdash P \ts T }
{\G \vdash \close{u};P \ts \End[u];T 
   }
\quad 
\trulename{zero}
\inferrule{ }
{\G \vdash \zero \ts \zero }
\\[1.5em]
\trulename{sel}
\inferrule{ \G \vdash \pi_i;P_i \ts \kappa_i;T_i }
{\G \vdash \sel{\pi_i; P_i}{i\in I} \ts \tbr{}{\kappa_i;T_i}{i\in I} }
\\[1.5em]
\trulename{if}
\inferrule{\G \vdash e : \m{bool} \quad  \G \vdash P \ts S  \quad
  \G \vdash Q \ts T  }
{\G \vdash \pite{e}{P}{Q} \ts \tch{}{S \,
    , \, 
    T}{} }\\[1.5em]
\trulename{new}
\inferrule{\G , y{:}\m{ch}(\sigma) \vdash P \ts T \quad c\not\in \domain{\G}   \cup \fn(T)}{\G \vdash {\newch{y{:}\sigma}};P \ts
  \tynew{c}T\subs{c}{y} }
\\[1.5em] 
\trulename{par}
\inferrule{\G \vdash P \ts T \quad
 \G \vdash Q \ts S}
{\G \vdash P \parr Q \ts (T \parr S)}\\[1.5em] 
\trulename{var}
\inferrule{\G \vdash \tilde{e} {:} \tilde{\sigma}\quad  
\G \vdash \tilde{u} {:}\mathsf{ch}(\tilde{\sigma}')}
{\G, X(\tilde{\sigma},\mathsf{ch}(\tilde{\sigma}'))\vdash X\langle\tilde{e} , \tilde{u}\rangle \ts \typevara_X\langle \tilde{u}\rangle }\\[1.5em]
\end{array}$\\[1mm]
\framebox{$\G \vdash \prog \ts \EqTypes$}\\
\hspace*{-3mm}$
\begin{array}{c}
\trulename{def}
\inferrule{\begin{array}{c}
\G , {X_i(\tilde{\sigma}_i,\m{ch}(\tilde{\sigma}'_i))},
\tilde{x}_i{:}\tilde{\sigma}_i,\tilde{y}_i{:}\m{ch}(\tilde{\sigma}'_i)
\vdash P_i
\ts
T_i
\\
\G, X_1(\tilde{\sigma}_1,\m{ch}(\tilde{\sigma}_1')),...,
X_n(\tilde{\sigma}_n,\m{ch}(\tilde{\sigma}_n'))
 \vdash Q \ts S
\end{array}
}
{\G \vdash
\{
X_i(\tilde{x_i},\tilde{y_i}) = P_i 
\}_{i\in I} 
\ \mathsf{in} \ Q\ts 
\{\typevara_{X_i}(\tilde{y}_i) = T_i\}_{i \in I} \ \mathsf{in} \ S}\\[1em]
\end{array}$\\[1mm]
\framebox{$\G \vdash_B P \ts T$}\\
$\begin{array}{c}
\trulename{int}
\inferrule{\G \vdash P \ts T}
{\G \vdash_\emptyset P \ts T}
\quad 
\trulename{res}
\inferrule{\G , c{:}\m{ch}(\sigma) \vdash_B P \ts T}
{\G \vdash_{B\setminus c} (\nub c)P \ts (\nub c)T}\\[1.5em]
\trulename{cbuff}
\inferrule{\G \vdash a{:}\m{ch}(\sigma)}
{\G \vdash_{\{a\}}\closedbuff{\tilde{\vval}}{a}{\sigma}  \ts
  \tyclosedbuffer{\tychana} }\quad  
\trulename{buff}
\inferrule{\G \vdash a{:}\m{ch}(\sigma)}
{\G
\vdash_{\{a\}}\buff{\tilde{\vval}}{a}{\sigma}
  \ts \tyopenbuffer{\tychana} }\\[1.5em]
\trulename{parr}
\inferrule{\G \vdash_{B} P \ts T \quad
 \G \vdash_{B'} Q \ts S \quad B \cap B' = \emptyset}
{\G \vdash_{B\cup B'} P \parr Q \ts (T \parr S)}
\end{array}
$
\caption{Typing Rules (Processes and Programs).~\label{fig:typing}}
\end{figure}

\begin{notation}\rm
In the remainder of this paper, 
we refer to the type of a
program as a system of type equations $\EqTypes$ which is obtained
from the program by collecting all the types for definitions and
adding a distinguished unique definition $\typevara_0() = 
S$ for the program entry point. We often write $X_0$ to stand for the
process variable of the program entry point.
\end{notation}

 \section{Bounded Verification of Behavioural Types}
\label{sec:types}
This section introduces our main definition, {\em fencing}, and a bounded
verification of liveness and channel safety for types. Our development
consists of the following steps:

\begin{description}
\item[Step 1.] Define a syntactic restriction on types, dubbed
  a \emph{fence}, guaranteeing that whenever \emph{fenced} types model
  a program that spawns infinitely many processes, the program actually
  consists of finitely many communication patterns (which may be
  repeated infinitely many times). 
  \item[Step 2.] Define a \emph{symbolic semantics} for
  types, which generates a representative \emph{finite-state}
  labelled transition system (LTS) whenever types validate the
  fencing predicate.
  \item[Step 3.] Prove that liveness and channel safety are decidable
  for the bounded \emph{symbolic executions} of fenced types.
\end{description}

\subsection{Types as Processes: Semantics}\label{sub:sync-ty-sem}
\begin{figure}
  \[
  \begin{array}{c}
        \ltsrulename{snd}\quad  
    {\tsend{\tychana}{\sigma};T}\tra{\labtysnd{\tychana}}{T}
        \quad
    \ltsrulename{rcv}\quad  
    {\trecv{\tychana}{\sigma};T}\tra{\labtyrcv{\tychana}}{T}
\quad 
    \ltsrulename{tau}\quad 
    {\tau ; T}\tra{\tau}{T}
\\[0.5em]
        \ltsrulename{sel}
    \inferrule{j \in I}
    {\tch{}{T_i}{i\in I}\tra{\tau}T_j}
    \qquad
    \ltsrulename{bra}
    \inferrule{\typesemF{\typrefix_j ;T_j}  \semty{\labcom}
      \typesemFdash{T_j}}
    {\tbr{}{ \typrefix_i ; T_i}{i\in I}\tra{\labcom}T_j}
    \\[1.5em]
    \ltsrulename{par}
    \inferrule
    {\typesemF{T} \semty{\labcom} \typesemFdash{T'}}
    {{T \parr S}\tra{\labcom}{T'\parr S}}
    \quad
    \ltsrulename{com}
    \inferrule{\typesemFone{T} \semty{\beta} \typesemFone{T'} 
      \quad
      \typesemFtwo{S} \semty{\labtyrcv{\tychana}} \typesemFtwo{S'}
      \quad
      \beta = {\labtysnd{\tychana}}, {\labclsnd{\tychana}}
    }
    {{T \parr S}\tra{\labsync{\tychana}}
    {T'\parr S'}}
        \\[0.5em]
    \ltsrulename{new}\ \,
    {\tynew{\tychana}T}\tra{\tau}
    {(\nu \tychana)(T \parr {\tyopenbuffer{\tychana}})}
\quad 
\ltsrulename{end}\ \,
    {}
    {\End[\tychana];T}\tra{\labclose{\tychana}}{T}
    \\[0.5em]
\begin{array}{l}
\ltsrulename{buf}\quad 
    {}
    {\tyopenbuffer{\tychana}}\tra{\labclosedual{\tychana}}{\tyclosedbuffer{\tychana}}
     \\[0.1em]
    \ltsrulename{cld}\quad 
    {}
    {\tyclosedbuffer{\tychana}}\tra{\labclsnd{\tychana}}{\tyclosedbuffer{\tychana}}
\end{array}
\quad 
\ltsrulename{close}
\inferrule{\typesemFone{T} \semty{\labclose{\tychana}} \typesemFone{T'} 
      \quad
      \typesemFtwo{S} \semty{\labclosedual{\tychana}} \typesemFtwo{S'}
          }
    {{T \parr S}\tra{\tau}
    {T'\parr S'}}
    \\[0.5em]
        \ltsrulename{res-1}
    \inferrule
    {
      \typesemF{T} \semty{\labcom} \typesemFdash{T'} 
      \quad
      \freenames{\labcom}  \neq \{\tychana\} 
    }
    {(\nu \tychana) T\tra{\labcom}(\nu \tychana)T'}
    \qquad
    \ltsrulename{res-2}
    \inferrule{\typesemF{T} \semty{\labsync{\tychana}} \typesemF{T'}}
    {(\nu \tychana) T\tra{\tau}(\nu \tychana)T'}
    \\[1.5em]
   \ltsrulename{eq}
    \inferrule
    {T \equiv_{\alpha} T' \quad 
      \typesemF{T} \semty{\labcom} \typesemFdash{T''}}
    {T'\tra{\labcom}T''}
    \qquad
       \ltsrulename{def}
    \inferrule
    {
      \typesemF{T} \subs{\tilde{a}}{\tilde{\tyvar}} \semty{\labcom}\typesemFdash{T'}
      \quad
      \typevara(\tilde{\tyvar}) = T 
    }
    {\tdefcall{\typevara}{\tilde{a}}\tra{\labcom}{T'}}
  \end{array}
  \]
\caption{LTS Semantics of Types.}\label{fig:types-sem}
\end{figure}
 The semantics of our types is given by the
labelled transition system (LTS), extending that of CCS,  
defined in Figure~\ref{fig:types-sem}. 
The labels, ranged over by $\alpha$ and $\beta$, have the form:
\[
\alpha,\beta \coloneqq \ \labtysnd{\tychana} \ | \ \labtyrcv{\tychana}\ | \ 
\tau \ | \ \labsync{\tychana} 
\ | \ \labclose{\tychana}
\ | \ \labclosedual{\tychana}
\ | \ \labclsnd{\tychana}
\]
Labels denote send and receive actions ($\labtysnd{\tychana}$ and
$\labtyrcv{\tychana}$), silent transitions $\tau$,
synchronisation over a channel $\labsync{\tychana}$, the request and
acceptance of channel closure ($\labclose{\tychana}$ and
$\labclosedual{\tychana}$), and send actions from a closed channel
$\labclsnd{\tychana}$. 
We write $\typevara(\tilde{\tyvar}) = {T}$ if 
$\typevara(\tilde{\tyvar}) = {T}$ is in $\EqTypes$.

Rule \ltsrulename{snd} (resp.\ \ltsrulename{rcv}) allows a type to
emit a send (resp.\ receive) action on a channel $\tychana$.
Rules \ltsrulename{sel} and \ltsrulename{bra} model internal and
(mixed) external choices, respectively.
Rule \ltsrulename{com} allows two types to synchronise on a dual action
(send with receive, or closed channel with receive).
Rule \ltsrulename{new} creates a new (open) channel for $\tychana$.
Rule \ltsrulename{end}, together with rule \ltsrulename{close}, allows a
type to request the closure of channel $\tychana$.
Rule \ltsrulename{buf} models a transition from an open channel to a
closed one, and rule \ltsrulename{cld} models the perpetual ability of a
closed channel to emit send actions.
The other rules are standard from CCS. In Figure~\ref{fig:types-sem}, we omit the symmetric rules for
\ltsrulename{close}, \ltsrulename{par}, and \ltsrulename{com}.  
We define structural congruence rules over types as follows:
\[
\begin{array}{l}
  T \parr S \equiv S \parr T
  \quad
  T \parr (T' \parr S) \equiv (T \parr T') \parr S
  \quad 
  T \parr \zero \equiv T 
  \\
  (\nub \tychana)(\nub \tychanb)T \equiv (\nub \tychanb)(\nub \tychana)T
  \ \
  (\nub \tychana) \zero \equiv \zero 
  \ \
  (\nub \tychana) {\tyclosedbuffer{\tychana}} \equiv \zero
\ \  (\nub \tychana) \tyopenbuffer{\tychana} \equiv \zero 
\\
  T \mid (\nub \tychana)S \equiv (\nub \tychana)(T\mid S) 
  \  \mbox{\scriptsize $(\tychana\notin \fn(T))$}
\ \ \ 
  T \equiv_\alpha T' \ \Rightarrow T \ \equiv T'
\end{array}
\]
We write $\tra{}$ for $\tra{\tau}\cup\equiv$ and 
$T \tra{}^\ast \tra{\alpha}$ if there exist $T'$ and $T''$
such that $T\tra{}^\ast T'\tra{\alpha}T''$.

\subsection{Fenced Types and Fencing Predicate}\label{sub:fence-pred}

This section develops {\bf Step 1} by defining the fencing
predicate. We illustrate the key intuitions with an example.

Recall 
the prime sieve program
(\S~\ref{sect:ex_proc_sieve}) given in \S~\ref{sec:overview}, as
inferred by the type system of \S~\ref{sect:core_typ}.
We represent the execution wrt.\ the
semantics of \S~\ref{sub:sync-ty-sem}, via the diagram below.
\[
\begin{tikzpicture}[node distance= 0.5cm and 0.2cm
  , scale=0.47, every node/.style={transform shape}
  , font=\LARGE
  ]
    \node (g) { $\primeG{\color{\primecolorA}{\primeN{0}}}$};
    \node[gray,left=of g,xshift=-0.5cm] (t0) {$\typevara_0 \langle \rangle$};
    \node[gray,below = of g] (r1) { $\primeR{\color{\primecolorA}{\primeN{0}}}$};
  \node[right=of g] (f1) { $\primeF{\color{\primecolorA}{\primeN{0}}}{\color{\primecolorB}{\primeN{1}}}$};
    \node[gray,below = of f1] (r2) { $\primeR{\color{\primecolorB}{\primeN{1}}}$};
  \node[right=of f1] (f2) { $\primeF{\color{\primecolorB}{\primeN{1}}}{\primeN{2}}$};
  \node[gray,below = of f2] (r3) { $\primeR{\primeN{2}}$};
    \node[right = of f2,font=\small] (d1) {{$\ldots$}};
    \node[right=of d1] (fkm) { $\primeF{\primeN{k-1}}{\color{\primecolorC}{\primeN{k}}}$};
  \node[gray,below = of fkm] (rkm) { $\primeR{\color{\primecolorC}{\primeN{k}}}$};
    \node[font=\small]  (d2) at (r3-|d1)   { $\ldots$};
    \node[right=of  fkm] (fk) { $\primeF{\color{\primecolorC}{\primeN{k}}}{\primeN{k+1}}$};
  \node[below = of fk] (rk) { $\primeR{\primeN{k+1}}$};
    \node[right = of fk,font=\small] (d3) { $\ldots$};
  \node[font=\small]  (d4) at (rk-|d3)   { $\ldots$};
        \draw[-latex] (t0) -- (g);
  \draw[-latex] (t0) -- (r1);
    \draw[-latex] (r1) -- (f1);
  \draw[-latex] (r1) -- (r2);
  \draw[-latex] (r2) -- (f2);
  \draw[-latex] (r2) -- (r3);
    \draw[-latex] (rkm) -- (fk);
  \draw[-latex] (rkm) -- (rk);
      \begin{pgfonlayer}{background}
                            \path[draw=\primecolorA,fill=\primecolorA!5,fill opacity=0.3,densely dotted,rounded corners]
    (t0.north west) -- (f1.north east) -- (f1.south east)
    -- (r1.north east) 
    -- (r1.south east) 
    -- (r1.south west) 
    -- (r1.north west) 
    -- (t0.south west) -- cycle;
        \path[draw=\primecolorB,fill=\primecolorB!5,fill opacity=0.3,densely dotted,rounded corners]
    (f1.north west) -- (f2.north east) -- (f2.south east)
    -- (r2.north east) 
    -- (r2.south east) 
    -- (r2.south west)
    -- (r2.north west)
    -- (f1.south west)
    -- cycle
    ;
        \path[draw=\primecolorC,fill=\primecolorC!5,fill opacity=0.3,densely dotted,rounded corners]
    (fkm.north west) -- (fk.north east) -- (fk.south east)
    -- (rkm.north east)
    -- (rkm.south east)
    -- (rkm.south west)
    -- (rkm.north west)
    -- (fkm.south west)
    -- cycle
    ;
              \end{pgfonlayer}
  \end{tikzpicture}
\]
In the diagram, a node represents an instance of a concurrently
executing type (or thread) and the arrows represent the parent-child relation.
For instance, $\typevara_0 \langle \rangle$ is the parent of 
$\primeG{\primeN{0}}$ and $\primeR{\primeN{0}}$.

We observe that there are only finitely many kinds of
concurrent threads (i.e.\ $\typevara_0 \langle \rangle$, $\primeG{a}$,
$\primeF{a}{b}$, and $\primeR{a}$).
Also given any name in the diagram, e.g.\
$\primeN{1}$, it is shared only by finitely many threads. For instance,
$\primeN{1}$ is ``known'' only to $\primeF{\primeN{0}}{\primeN{1}}$,
$\primeF{\primeN{1}}{\primeN{2}}$, and $\primeR{\primeN{1}}$.
Note that this situation does not change during the execution of the
program because (1) types cannot exchange names in communication and
(2) the grand-children of, e.g.\ $\primeR{\primeN{1}}$, do not know $\primeN{1}$,
hence they cannot spawn further threads sharing that name. 

Intuitively, we can observe that despite the fact that the prime sieve
generates an unbounded number of fresh channels and threads, it
consists of finitely many different communication patterns
(i.e.~the coloured regions above). It is this general
pattern that we capture with \emph{fencing}.

\subsubsection{Fencing Types}
We introduce a judgement which 
ensures that a system of types is \emph{fenced}:
given a finite set of names called 
\emph{fence}, the types executing within that fence approximate 
a form of finite control.
We use judgements of the form
$
\tytyJudge{\tytyRes}{\tytyUnr}{T}
$
to guarantee that a type definition $\typevara(\tilde{x}) = T$ is
\emph{fenced}; where $\tytyProc$ records previously encountered
recursive calls, $\tytyRes$ represents the names that $\typevara$ can
use if $T$ is \emph{single-threaded}, and $\tytyUnr$ represents the
names that a sub-term of $T$ can use if $T$ is a \emph{multi-threaded}
type.
We write $\varepsilon$ for the empty environment.
\begin{figure}
\framebox{$\tytyJudge{\tytyRes}{\tytyUnr}{T}$}
  \[
  \begin{array}{c}
            \tytyRule{axiom}
    {
      {\tytyRes \neq \varepsilon \ \lor \ \tilde{\tychavar} \tytyprec \tytyUnr}
    }
    {\tytyJudge{\tytyRes}{\tytyUnr}{ {\tdefcall{\typevara}{\tilde{\tychavar}}} }}
    \qquad
    \tytyRule{def-\ensuremath{\in}}
    {
      {\tdefcall{\typevarb}{\tilde{\tychavar} } \in \tytyProc}
    }  
    {\tytyJudge{\tytyRes}{\tytyUnr}{ {\tdefcall{\typevarb}{\tilde{\tychavar}}} }}
                                                    \\[1.5em]
        \tytyRule{end}
    {\,}
    {\tytyJudgeDef{\zero}}
    \qquad
    \tytyRule{par}
    {
      \tytyJudge{\varepsilon}{\tytyUnr \cdot \tytyRes}{T_1}
      \quad
      \tytyJudge{\varepsilon}{\tytyUnr \cdot \tytyRes}{T_2}
          }
    {\tytyJudge{\tytyRes}{\tytyUnr}{T_1 \parr T_2}}
        \\[1.5em]
        \tytyRule{def-\ensuremath{\notin}}
    {
      \typevara \neq \typevarb
      \ \
      {\tdefcall{\typevarb}{\tilde{\tychavar} } \notin \tytyProc}
      \ \
      \tytyJudgeX{\tytyRes}{\tytyUnr}{\typevara}
      { T\subs{\tilde{\tychavar}}{\tilde{\tyvar}} }
      {\tytyProc \cdot \tdefcall{\typevarb}{\tilde{\tychavar} }}
      \ \
      \typevarb(\tilde{\tyvar}) = T    }  
    {\tytyJudge{\tytyRes}{\tytyUnr}{ {\tdefcall{\typevarb}{\tilde{\tychavar}}} }}
    \\[1.5em]
    \tytyRule{sel}
    {\forall i \in I \qst \tytyJudgeDef{T_i}}
    {\tytyJudgeDef{{\tch{}{T_i}{i\in I}}}}
    \quad\qquad
    \tytyRule{bra}
    {\forall i \in I \qst \tytyJudgeDef{T_i}}
    {\tytyJudgeDef{{\tbr{}{T_i}{i\in I}}}}
    \\[1.5em]
    \tytyRule{end}
    {\tytyJudgeDef{T}}
    {\tytyJudgeDef{\End[{\tychavar}]; T}}
    \
    \tytyRule{res}
    {\tytyJudgeDef{T}}
    {\tytyJudgeDef{\tynew{\tychana}T}}
    \
    \tytyRule{pref}
    {\tytyJudgeDef{T}}
    {\tytyJudgeDef{\typrefix ; T}}
                                      \end{array}
  \]
    \caption{Rules for Fencing,  
        where $\tytyenvdef$ stands for $\tytyProc \tytyenvsep \tytyRes
    \tytyenvsep \tytyUnr$.}\label{fig:fencing-pred-rule}
\end{figure}
 The judgement $\tytyJudge{\tytyRes}{\tytyUnr}{T}$ holds if it can be
inferred by the rules of Figure~\ref{fig:fencing-pred-rule}, which use
a relation on sequences of names that ensures a strictly decreasing
usage of names, defined below.
\begin{definition}[$\tytyprec$-relation] We write 
  $\tilde{u}
  \tytyprec
  \tilde{x}$ iff
  (1) $\tilde{x} =  x_{1} \cdots x_n$,
  (2) $\tilde{u} = x_{k+1} \cdots x_n \cdot \tychana_1 \cdots \tychana_k$, with $k \geq 1$, and
  (3) $\forall 1 \leq i \leq n \qst  \forall 1 \leq j \leq k \qst x_i \neq \tychana_j$.
                                  \end{definition}
\noindent The relation $\tytyprec$ enforces that types featuring parallel
composition have a ``finite memory'' wrt.\ the names over which they can
recurse. For instance, $yza \tytyprec xyz$, but $xaz \not\prec xyz$.

We comment on the key rules of
Figure~\ref{fig:fencing-pred-rule}.
Rule \frulename{par} identifies multi-threaded types by moving the
$\tytyRes$ environment to the $\tytyUnr$ environment.
The axiom \frulename{axiom} states that
${\tdefcall{\typevara}{\tilde{\tychavar}}}$ is fenced if either ($i$)
${\tytyRes \neq \varepsilon}$, i.e.\ the type corresponding to
$\typevara$ does not contain any parallel composition; or ($ii$)
${\tilde{\tychavar} \tytyprec \tytyUnr}$ holds, i.e.\ any recursive
call over $\typevara$ uses strictly fewer (non newly created) names.
The second part of the premise guarantees that after a certain number
of recursive calls, the type $\typevara$ will have completely
``forgotten'' the names it started executing with; hence moving
outside of the fence.
Rules \frulename{def-\ensuremath{\in}} and 
\frulename{def-\ensuremath{\notin}} deal with recursive calls to
different type variables.
\begin{definition}[Fenced types]\label{def:fencedtype}\rm 
We say $\EqTypes$ is {\em fenced} (denoted by 
$\tytyJudgeE{\EqTypes}$) if  
for
all $\typevara(\tilde{\tyvar})= T$ in $\EqTypes$, either
$\tilde{\tyvar} = \varepsilon$ or
$\tytyJudgeX{\tilde{\tyvar}}{\varepsilon}{\typevara}{T}{\varepsilon}$ holds.
\end{definition}

A consequence of fencing is that for each equation
$\typevara(\tilde{\tyvar})= T$ such that $\tilde{\tyvar} \neq
\varepsilon$, either
(1) $T$ is \emph{single-threaded} (i.e.\ there is no parallel
composition within $T$), hence there is no restriction as to the
parameters $\typevara$ recurses on; or
(2) $T$ is \emph{multi-threaded} (i.e.\ there is a parallel
composition within $T$), hence for each occurrence of a
$\typevara$-recursion, at least one of the parameters must be
forgotten.
In the prime sieve example, the types $\primeGN(x)$ and
$\primeFN(x,y)$ always recurse on the same parameters, but they do not
include parallel composition.
While the type $\primeRN(x)$ has a parallel composition, its
parameter $x$ is ``forgotten'' at the $\primeR{b}$ recursive call
(i.e.\ $b \tytyprec x$).

\subsubsection{Examples on Fencing}
\paragraph{No Fence} We now give an example that does \emph{not} validate
the fencing predicate.
The types below model a program that spawns a reader and
a writer on a channel $a$ infinitely many times.
\[
\begin{array}{r@{\ }lr@{\ }l}
\typevarfont{w}(x) \triangleq 
  & \tysend{x}; \typevarfont{w}\langle x \rangle
   =  T_w
  & \typevara_1(x) \triangleq & 
  \typevarfont{w}\langle x \rangle
  \parr
  \typevarfont{r}\langle x \rangle
  \parr
  \typevara_1 \langle  x\rangle
   =  T_1
  \\
  \typevarfont{r}(x) \triangleq & \tyrecv{x}; \typevarfont{r}\langle x \rangle
    =  T_r & 
  \typevara_0()   \triangleq & 
  \tynew{a}
  (
  \typevara_1 \langle a \rangle
  )
   =  T_0
  \end{array}
\]
We have:
$\tytyJudgeX{x}{\varepsilon}{\typevarfont{w}}{T_w}{\varepsilon}$,
$\tytyJudgeX{x}{\varepsilon}{ \typevarfont{r}}{T_r}{\varepsilon}$, and
$\tytyJudgeX{{\varepsilon}}{\varepsilon}{\typevara_0}{T_0}{\varepsilon}$
hold (since they do not feature any parallel composition).
However,
$\tytyJudgeX{{x}}{\varepsilon}{\typevara_1}{T_1}{\varepsilon}$ does
\emph{not} hold.
This is due to the fact that the axiom \frulename{axiom}
cannot be applied (i.e.~$\neg (
x \tytyprec x)$).

The topology induced by the type is given as:
\begin{center}
\begin{tikzpicture}[node distance= 0.7cm and 0.1cm
  , font=\Large,
  , edge from parent/.style={draw,-latex}
    , scale=0.55,transform shape
  ]
  \node (root) {$\typevara_1 \langle \nofencechan \rangle$};
    \node[left=of root,xshift=-1cm] (t0) {$\typevara_0 \langle \rangle$};
    \draw[-latex] (t0) -- (root);
    \node[below left=of root,xshift=0.5cm]  (w1)   {$\typevarfont{w}\langle \nofencechan \rangle$};
  \node[below right=of root,xshift=-0.5cm] (r1)  {$\typevarfont{r}\langle \nofencechan \rangle$};
  \draw[-latex] (root) -- (w1);
  \draw[-latex] (root) -- (r1);
    \node[right= of root,xshift=1.5cm] (root1) {$\typevara_1 \langle \nofencechan \rangle$};
  \node[below left=of root1,xshift=0.5cm] (w2)   {$\typevarfont{w}\langle \nofencechan \rangle$};
  \node[below right=of root1,xshift=-0.5cm] (r2)  {$\typevarfont{r}\langle \nofencechan \rangle$};
  \draw[-latex] (root) -- (root1);
  \draw[-latex] (root1) -- (w2);
  \draw[-latex] (root1) -- (r2);
      \node[right= of root1,xshift=2cm] (root3) {$\typevara_1 \langle \nofencechan \rangle$};
  \node[below left=of root3,xshift=0.5cm] (w3)   {$\typevarfont{w}\langle \nofencechan \rangle$};
  \node[below right=of root3,xshift=-0.5cm] (r3)  {$\typevarfont{r}\langle \nofencechan \rangle$};
  \draw[dotted] (root1) -- (root3);
  \draw[-latex] (root3) -- (w3);
  \draw[-latex] (root3) -- (r3);
    \draw[dotted] (r2) -- (w3);
        \begin{scope}[on background layer]
                               \path[draw=\fibocolorA,fill=\fibocolorA!5,fill opacity=0.7,densely dotted,rounded corners]
   (t0.north west) -- (root3.north east)
   -- (root3.south east)
   -- (r3.north east)
   -- (r3.south east)
   -- (w1.south west)
   -- (w1.north west)
   -- (t0.south west)
   -- cycle
   ;
  \end{scope}
  \end{tikzpicture}
\end{center}
We can observe that there are infinitely many instances of
types that ``know'' the name $a$.

\paragraph{Fibonacci}
Below, we give the types for the Fibonacci example, cf.~\S~\ref{sec:sync-fib}.
\[
\begin{array}{lclcr}
  \fibGN(x) & \triangleq & 
  \tysend{x}
  \ \oplus \
  \tynew{b}
    (
  \fibG{b}
  \parr
  \tyrecv{b};
  \tyrecv{b};
  \tysend{x}
  \parr
  \fibG{b}
  )
  & =   T_{\fibGN}
  \\
  \typevara_0()  & \triangleq &
  \tynew{a}
  (
  \fibG{a}
  \parr
  \tyrecv{a}
  ) 
  & =  T_0
  \end{array}
\]
Observe that $\typevara_0()$ is trivially fenced since it has no
parameter
(cf.\ Definition~\ref{def:fencedtype}). The judgement 
$\tytyJudgeX{x}{\varepsilon}{\fibGN}{T_{\fibGN}}{\varepsilon}$ 
also holds since we have $b \tytyprec x$.
Essentially, the equation $\fibGN$ validates the fencing predicate
because each recursively spawned child does not have access to the
parameter $x$.
We illustrate the behaviour of this type in the diagram below.
\begin{center}
\begin{tikzpicture}[node distance= 0.7cm and 0.3cm
  , font=\Large,
  , edge from parent/.style={draw,-latex}
  , level 1/.style={sibling distance=50mm}
  , level 1/.style={level distance=12mm},
  , level 2/.style={sibling distance=30mm}
  , level 3/.style={sibling distance=20mm}
  , level 4/.style={sibling distance=10mm}
  , scale=0.55,transform shape
  ]
  \node (root) {$\typevara_0 \langle \rangle$}
  child{ 
    node (f1) {$\fibG{\color{\fibocolorA}{\fibChan}}$}
    child{
      node (f11) {$\fibG{\color{\fibocolorB}{\fibChanB_1}}$}
      child{
        node (f111) {$\fibG{\fibChanB_{2}}$}
        \fibinftrees{a}
      }
      child{
        node (f113) {$\fibG{\fibChanB_{2}}$}
        \fibinftrees{b}
      }
      child{
        node (b1c1a) {$
          \tyrecv{\fibChanB_{2}};
          \tyrecv{\fibChanB_{2}};
          \tysend{\color{\fibocolorB}{\fibChanB_1}}$}
      }
          }
    child{
      node (f12) {$
        \tyrecv{{\color{\fibocolorB}{\fibChanB_1}}};
        \tyrecv{\color{\fibocolorC}{\fibChanB_1}};
        \tysend{\color{\fibocolorA}{\fibChan}}$}
    }
    child{
      node (f13) {$\fibG{\color{\fibocolorC}{\fibChanB_1}}$}
      child{
        node (f122) {$
          \tyrecv{\fibChanB_{3}};
          \tyrecv{\fibChanB_{3}};
          \tysend{\color{\fibocolorC}{\fibChanB_1}}$}
      }
      child{
        node (f121) {$\fibG{\fibChanB_{3}}$}
        \fibinftrees{c}
      }
      child{
        node (f123) {$\fibG{\fibChanB_{3}}$}
        \fibinftrees{d}
      }
    }
  }
  child{
    node (a1) {$\tyrecv{\color{\fibocolorA}{\fibChan}}$}
  } 
  ;
          \begin{scope}[on background layer]
                                    \path[draw=\fibocolorB,fill=\fibocolorB!5,fill opacity=0.7,densely dotted,rounded corners]
    (f1.north west) -- (f11.north west)
    -- (f11.south west)
    -- (f11.south east)
    -- (b1c1a.north west)
    -- (b1c1a.south west)
    -- (b1c1a.south east)
    -- (f122.south east)
    -- (f122.north east)
    -- (f13.south west)
    -- (f13.south east)
    -- (f13.north east)
    -- (f1.north east)
    -- cycle
    ;
    \path[draw=\fibocolorA,fill=\fibocolorA!5,fill opacity=0.7,densely dotted,rounded corners]
    (root.north east) -- (a1.north east)
    -- (a1.south east)
    -- (f12.north east)
    -- (f12.south east)
    -- (f12.south west)
    -- (f1.south west)
    -- (f1.north west)
    -- (root.north west)
    -- cycle
    ;
                                      \end{scope}
  \end{tikzpicture}
\end{center}
Two fences are highlighted in the diagram: the
$\{{\color{\fibocolorA}{\fibChan}}\}$-fence includes four parallel
types (including the initial $\typevara_0$); while the
$\{{\color{\fibocolorB}{\fibChanB_1}}\}$-fence includes five components:
one instance of $\fibGN(x)$ as well as two of its recursive children
and three instances of the non-recursive component of $\fibGN$.

 \subsection{Symbolic Semantics} \label{sub:sym-sem-tyes}
For {\bf Step 2}, we introduce a symbolic semantics for types, which is parameterised by
a bound on the number of free names that can be used when unfolding a
recursive call, e.g.\ $\tdefcall{\typevara}{\tilde{u}}$, by its
corresponding definition.
The overall purpose of the symbolic semantics is that for any
$\EqTypes$ such that $\tyfenced{\EqTypes}$, the symbolic LTS of
$\EqTypes$ is \emph{finite state}.

\begin{figure}
  \[
  \begin{array}{c}
    \typesymsemrule{def}
    {
      \typesymsemF{T} \subs{\tilde{\tychana}}{\tilde{\tyvar}} \symsem{\labcom} \typesymsemF{T'}
      \ \
      \typevara(\tilde{\tyvar}) = T       \ \
      \tilde{\tychana} \cap \symsemCur \neq \emptyset
    }
    {\tdefcall{\typevara}{\tilde{\tychana}}}{\labcom}{T'}
        \\[1.5em]
    \typesymsemruleC{\resoneless}
    {
      \typesymsem{\symsemCur \uplus \{\tychana\}}{T} 
      \symsem{\labcom} 
      \typesymsem{\symsemCur \uplus \{\tychana\}}{T'} 
      \ \
      \freenames{\labcom}  \neq \{\tychana\} 
      \ \
      \lvert \symsemCur \rvert < k
    }
    {\symsemCur}{(\nu \tychana) T}{\labcom}{\symsemCur}{(\nu \tychana)T'}
    \\[1.5em]
    \typesymsemruleC{\restwoless}
    {
      \typesymsem{\symsemCur \uplus \{\tychana\}}{T} 
      \symsem{\labsync{\tychana}} 
      \typesymsem{\symsemCur \uplus \{\tychana\}}{T'}
      \quad
      \lvert \symsemCur \rvert < k
    }
    {\symsemCur}{(\nu \tychana) T}{\tau}{\symsemCur}{(\nu \tychana)T'}
    \\[1.5em]
    \typesymsemruleC{\resonegeq}
    {
      \typesymsemF{T} \symsem{\labcom} \typesymsemF{T'} 
      \quad
      \freenames{\labcom}  \neq \{\tychana\} 
      \quad
      \lvert \symsemCur \rvert \geq k
    }
    {\symsemCur}{(\nu \tychana) T}{\labcom}{\symsemCur}{(\nu \tychana)T'}
    \\[1.5em]
    \typesymsemruleC{\restwogeq}
    {
      \typesymsemF{T} \symsem{\labsync{\tychana}} \typesymsemF{T'}
      \quad
      \lvert \symsemCur \rvert \geq k
    }
    {\symsemCur}{(\nu \tychana) T}{\tau}{\symsemCur}{(\nu \tychana)T'}
                                      \end{array}
  \]
  \caption{Symbolic Semantics for Types.}\label{fig:type-sym-sem}
\end{figure}

The symbolic semantics for types is given in
Figure~\ref{fig:type-sym-sem}, where we show only the interesting new
rules.
The other rules are essentially those of
Figure~\ref{fig:types-sem}, with the additional parameters
$\symsemidx$ and $\symsemCur$ as expected.
Rule \symrulename{def} replaces its counterpart from
Figure~\ref{fig:types-sem}, while rules \rulename{\resoneless} and\
\symrulename{\resonegeq} replace rule \symrulename{res-1}, and rules
\symrulename{\restwoless} and \symrulename{\restwogeq} replace
\symrulename{res-2}.

In a term $\typesymsemF{T}$, $\symsemCur$ can be seen as a subset of
the free names of $T$.
Whenever a new name is encountered, e.g.\ through $(\nu \tychana)T$,
$\tychana$ is recorded in $\symsemCur$ as long as $\symsemCur$ has
less than $k$ elements.
Rule \symrulename{def} states that a recursive call can only be unfolded
if some of its parameters are in $\symsemCur$.
Note that, in rule \symrulename{def}, we assume that the unfolding of a
type is such that there is no clash with the names in $\symsemCur$.

For $k \geq 0$, we write $\typesymsemF{T} \symsemTr \symsem{\alpha}$
if there exist $T'$ and $T''$ such that $\typesymsemF{T} \symsemTr
\typesymsemF{T' } \symsem{\alpha} \typesymsemF{T''}$.

We consider a fragment of the prime sieve example and show its
behaviour according to the symbolic semantics, with $k=1$.
We have:
\[
\begin{array}{l}
\typesymsem{\{a\}}
{
  (\nu {b})
  (
  \primeG{a}
    \parr
    \primeF{a}{b} \parr \primeR{b})
}
\\
\qquad
\symsemTrI{1} \symsemI{\labsync{a}}{1}
\typesymsem{\{a\}}
{
  (\nu {b})
  (
  \primeG{a}
    \parr
   \tysend{b}; \primeF{a}{b} 
    \parr
    \primeR{b}
  )
}
\end{array}
\]
at this point the process is stuck.
The sub-term $\tysend{b}; \primeF{a}{b}$ awaits to synchronise on
$b$, however the dual action on $b$ is ``hidden'' in the unfolding of
$\primeR{b}$, which cannot be unfolded by rule \symrulename{def} since
$b \notin \{a\}$ and, since $k=1$, $b$ cannot be added to the set of names.
If we set the bound to $k=2$:
\[
\begin{array}{l}
  \typesymsem{\{a, b\}}
  {
    \primeG{a}
        \parr
        \tysend{b}; \primeF{a}{b} 
        \parr
        \primeR{b}
  }
  \\
  \qquad \symsemTrI{2} \symsemI{\labsync{b}}{2}
  \typesymsem{\{a, b\}}
  {
    \primeG{a}
        \parr
        \primeF{a}{b} 
        \parr
        \tynew{c}
    (\primeF{b}{c} \parr \primeR{c})
  }
\end{array}
\]

 \subsection{Liveness and Channel Safety for Types}\label{sub:ty-live-safe}
Following \S~\ref{sub:live-safe-process}, we
define liveness and channel safety properties for types.
The definitions of liveness and channel safety rely on
barbs. \label{sub:barbs-types-app}
The predicate $T\barb{\ov{\tychana}}$ (resp.\
$T\barb{\tychana}$) denotes a type ready to send
(resp.\ receive) over channel $\tychana$.
Barb $T\barb{\clbarb{\tychana}}$ denotes a type ready to close channel
$\tychana$ and barb $T\barb{\cclbarb{\tychana}}$ denotes a closed
channel.
Barb ${T\barb{\labsync{\tychana}}}$ denotes a synchronisation over
channel $\tychana$.
Barb $T\barb{\tilde{o}}$ denotes a type that is waiting to synchronise
over the actions in $\tilde{o}$.
\begin{definition}[Type Barbs]~\label{def:type-barbs}
We define the predicates
$\typrefix\barb{o}$, 
$T\barb{o}$ and $T\barb{\tilde{o}}$ 
with $o,o_i \in \{
\labtyrcv{\tychana}, 
\labtysnd{\tychana}, 
\labsync{\tychana},  
\labclose{\tychana}, 
\labclsnd{\tychana}\}$.
\[
\tyrecv{\tychana}\barb{\tychana}
\qquad
\tysend{\tychana}\barb{\ov{\tychana}}
\qquad
\inferrule
{\typrefix\barb{o}}
{\typrefix ; T \barb{o}}
\qquad
\labclose{\tychana};T\barb{\clbarb{\tychana}}
\qquad 
\tyclosedbuffer{\tychana}\barb{\cclbarb{\tychana}} 
\]
\[
\inferrule
  {T\barb{o}}
  {T\parr T' \barb{o}}
\quad 
\inferrule
{T\barb{o}\ \ \tychana \notin \fn(o)}
{(\nu \tychana)T\barb{o}}
\quad 
\inferrule
{T\barb{o}\ \ T \equiv T'}
{T\barb{o}}
\]
\[
\inferrule
{T\subs{\tilde{\tychana}}{\tilde{x}} \barb{o}\quad \typevara(\tilde{x}) = T}
{{\tdefcall{\typevara}{\tilde{\tychana}}}\barb{o}}
\]
\[
\inferrule
{\forall i \in \{1, \ldots, n \} \qst \typrefix_i \barb{o_i} }
{\tbr{}{ \typrefix_i ; T}{i\in \{1, \ldots, n \}}\barb{ \{ o_1 \ldots o_n \} }}
\quad 
\inferrule
{T\barb{\tychana}\quad T'\barb{\ov{a}}\mbox{ or }T'\barb{\cclbarb{\tychana}}}
{T\parr T'\barb{\labsync{\tychana}}}
\]
\[
\inferrule
{T\barb{\tychana}\quad \typrefix_i\barb{\ov{a}}}
{T\parr \tbr{}{\typrefix_i ; S_i}{i \in I}\barb{\labsync{\tychana}}}
\quad 
\inferrule
{T\barb{\ov{\tychana}}\mbox{ or }T\barb{\cclbarb{\tychana}}\quad \typrefix_i \barb{\tychana}}
{T\parr \tbr{}{\typrefix_i ; S_i}{i \in I} \barb{\labsync{\tychana}}}
\]
\end{definition}

Given $k \in \naturals$, we write ${T}\wbarbsym{o}{k}$ if
$\typesymsem{\symsemCur}{T} \symsemTrI{n}
\typesymsem{\symsemCur}{T'}$ and ${T'}\barb{o}$, with $\symsemCur =
\freenames{T}$, $n = k + \lvert N \rvert$ and $o \in \{\tychana , \ov{\tychana} ,
{\labsync{\tychana}}, \clbarb{\tychana}, \cclbarb{\tychana}\}$.
Observe that the predicate ${T}\wbarbsym{o}{k}$ is defined wrt.\ the
symbolic semantics. 
We write ${T}\wbarbt{o}$ if  ${T}\wbarbsym{o}{\infty}$.

\begin{definition}[Liveness]\label{def:type-live}\rm
  The system $\EqTypes$ satisfies {\em $k$-liveness} if for all $T$ such that
  $\typesymsem{\emptyset}{\typevara_0\ENCan{}}
  \symsemTr \typesymsem{\emptyset}{(\nu \tilde{a}) T}$,
    
  \begin{itemize}
  \item[(a)] If ${T}\barb{\tychana}$ or 
    ${T}\barb{\ov{\tychana}}$
    then 
    ${T}\wbarbsym{\labsync{\tychana}}{k}$.  
      \item[(b)] If ${T}\barb{\tilde{\tychana}}$
    then ${T}\wbarbsym{\labsync{\tychana_i}}{k}$ for some $\tychana_i\in \tilde{\tychana}$.   
  \end{itemize}

   If $\EqTypes$ is $\infty$-live, then we say
 that $\EqTypes$ is \emph{live}.
  \end{definition}

Consider the type equations below, where the reduction of
$\typevara_0\ENCan{}$ leads to terms that are \emph{not}
$k$-live, for any $k \geq 0$.
\[
\begin{array}{lcl}
  \typevara_1(x)  \triangleq   \tynew{b} (   \tdefcall{\typevara_1}{b} \parr   \tyrecv{b}; \tysend{x})
    \quad
  \typevara_0()   
\triangleq 
  \tynew{a}
  (
  \tdefcall{\typevara_1}{a}
  \parr
  \tyrecv{a}
  )
  \end{array}
\]
Intuitively, the system is not live since it is not
possible to find a synchronisation for, e.g.\ the receive action on
$a$, within bounded unfolding.

The definition of channel safety follows the same structure as
that of liveness.

\begin{definition}[Channel Safety]~\label{def:type-safe}\rm 
  The system $\EqTypes$ satisfies {\em $k$-channel safety} if for all $T$ such that
  $\typesymsem{\emptyset}{\typevara_0\ENCan{}}
  \symsemTr \typesymsem{\emptyset}{(\nu \tilde{a}) T}$,
    if 
  $T\barb{\cclbarb{\tychana}}$
  then 
  $\neg(T\wbarbsym{\clbarb{\tychana}}{k})$ and $\neg (T\wbarbsym{\ov{a}}{k})$.
  
  If $\EqTypes$ is $\infty$-safe, then we say
  that $\EqTypes$ is \emph{channel safe}.
\end{definition}

\paragraph{Example}
We illustrate the need of a sufficiently large
bound to detect liveness errors with an example.
Consider a variation of the Prime Sieve example with a non-recursive
filter:
\[
\begin{array}{rcl}
        \primeFN(x,y) & \triangleq & 
  \tyrecv{x};\tysend{y};
  \tyrecv{x};\tysend{y};
  \tyrecv{x};\tysend{y};
  \tyrecv{x};\tysend{y};  \zero
              \end{array}
\]
with $\primeGN(x)$, $\primeRN(x)$, and
$\typevara_0()$ as in \S~\ref{sec:overview}.
The system above is $2$-live, but not $3$-live.
A bound of $2$ is too small to allow the symbolic
semantics to explore all the states of $\primeFN(x,y)$;
while a bound of $3$ enables spawning another filter process,
hence to explore all the states of $\primeFN(x,y)$.

 \subsection{Decidability of the Bounded Verification}\label{sub:proof-types}\fxnote{edited}

We show that $k$-liveness and $k$-channel safety are decidable, noting
that these are defined wrt the symbolic semantics ({\bf Step 3}).
\fxnote{edited}

The \emph{decidability} result mainly follows from fencing,
which ensures finite control over the symbolic semantics.
The key idea is to show that the set of non-equivalent terms
reachable by the symbolic semantics is finite. This is formalised
below.

\begin{restatable}[Finite Control]{lemma}{lemfinitelts}
\label{lem:finite-lts}
  If $\tytyJudgeE{\EqTypes}$, then the set
  \\
  $\left\{ [T]_{\equiv} \mid \typesymsem{\emptyset}{ {\typevara_0\ENCan{}} }
  \symsemTr \typesymsem{\emptyset}{T} \right\}$  
  is finite, for any finite $k$.
  \end{restatable}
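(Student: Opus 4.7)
The plan is to prove finiteness by establishing an invariant on reachable terms: up to $\equiv$, every reachable term decomposes as a bounded parallel composition of components drawn from a finite pool of syntactic templates, over a bounded pool of names. Concretely, I will bound (i) the shape of each parallel component, (ii) the set of names that can appear, and (iii) the number of parallel components.

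First, by induction on the length of the symbolic reduction from $\typevara_0\langle\rangle$, I would show that every reachable $T$ can be put (up to $\equiv$) in canonical form $(\nu \tilde a)(S_1 \parr \cdots \parr S_n)$, where each $S_i$ is either a buffer $\tyopenbuffer{c}$ or $\tyclosedbuffer{c}$, or of the form $T'\subs{\tilde b}{\tilde y}$ for some subterm $T'$ of the right-hand side of an equation $\typevara(\tilde y) = T_\typevara$ in $\EqTypes$. Since $\EqTypes$ is finite, the set of such syntactic "templates" is finite. The only rules that introduce new components are \ltsrulename{new} (which adds a buffer) and \symrulename{def} (which replaces a call by its body), so this closure under reachability is direct.

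Second, I would control names. The symbolic semantics keeps $|\symsemCur| \leq k$ throughout execution by construction of rules \symrulename{\resoneless}/\symrulename{\resonegeq}: once $|\symsemCur| = k$, further restrictions create bound names that are not tracked. By the side-condition $\tilde a \cap \symsemCur \neq \emptyset$ in rule \symrulename{def}, untracked names cannot by themselves trigger recursive unfolding. Up to $\equiv_\alpha$, the tracked names come from a pool of size $k$, and so the set of substitutions $\subs{\tilde b}{\tilde y}$ that can appear in active components is finite.

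Third, and this is the main obstacle, I must bound the number $n$ of parallel components using $\tyfenced{\EqTypes}$. The idea is to exploit rule \frulename{par} together with \frulename{axiom}: whenever a definition $\typevara$ contains parallel composition, any recursive call to $\typevara$ satisfies $\tilde u \tytyprec \tytyUnr$, i.e.\ at least one parameter is strictly ``forgotten'' at each recursive step, and replaced only by names introduced locally by $\tynew{\cdot}{\cdot}$. After at most $|\tytyUnr|$ recursive unfoldings along any chain of parallel spawning, the recursive call's parameters are entirely local, hence restricted within the spawned subterm. Once such a subterm sits under a restriction whose bound name is not in $\symsemCur$, rule \symrulename{def} forbids further unfolding, so that subterm contributes no further parallel components. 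I would then define a well-founded measure on terms, e.g.\ pairs $(|\symsemCur \setminus \text{names recursed upon}|, \text{remaining fenced unfoldings})$ ordered lexicographically, and show it strictly decreases along any spawning chain that does not add to $\symsemCur$. Combined with the $k$-cap on $|\symsemCur|$ and the finiteness of templates, this caps $n$ by a function of $k$ and $|\EqTypes|$, completing the finiteness argument for $\{[T]_\equiv\}$. The delicate point is ensuring this measure is robust to communication steps (which can rearrange but not increase the set of active components), so I would prove decrease only on spawning steps and stability under the other rules.
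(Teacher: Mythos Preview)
Your overall strategy---canonical form, bound the names, bound the parallel width via fencing---is the same as the paper's, and your reading of rule \symrulename{def} and of the role of $\tytyprec$ is correct. Two points of comparison and one genuine loose end:

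\textbf{Comparison.} Where you propose an ad~hoc well-founded measure on spawning chains, the paper makes step~(iii) concrete by introducing a \emph{limited unfolding} operator $\tyunfX{k}{\tilde a}{T}$ that unfolds $\tdefcall{\typevara}{\tilde b}$ only when $\tilde b \cap \tilde a \neq \emptyset$ (mirroring the side condition of \symrulename{def}), and then proves directly that for fenced $\EqTypes$ one has $\tyocc{\tyunfX{k}{\tilde x}{T}}{\typevara} \leq \tysize{T}^{|\tilde x|}$ for every $k$. The argument is exactly your $\tytyprec$-chain observation (formalised as a separate lemma that any $\tytyprec$-chain loses all of $\tilde x$ within $|\tilde x|$ steps), but packaged as an explicit numeric bound rather than a termination measure. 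This buys them a clean statement and avoids having to argue stability of a measure under communication steps.

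\textbf{Loose end in your step (ii).} Your claim that ``the set of substitutions $\subs{\tilde b}{\tilde y}$ that can appear in active components is finite'' does not follow from $|\symsemCur|\leq k$ alone: a call $\tdefcall{\typevara}{a,b}$ with $a\in\symsemCur$ and $b\notin\symsemCur$ can still unfold, so untracked names do occur inside active components. The name bound therefore cannot precede the width bound; it must be derived from it. The paper handles this in the order you should: first bound the number of occurrences of each $\typevara$ (hence the width), then observe that only finitely many names are actually used, and finally invoke structural congruence to garbage-collect superfluous restrictions (e.g.\ rewriting $(\nu\tilde a)\tdefcall{\typevara}{\tilde b}$ to $(\nu\tilde b)\tdefcall{\typevara}{\tilde b}$ when $\tilde a\supseteq\tilde b$) together with $\equiv_\alpha$. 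If you reorder steps (ii) and (iii) and replace your measure sketch by the explicit occurrence bound, your proof becomes essentially the paper's.
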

The crux of the proof is to show that the number of occurrences of a
type variable $\typevara$ is bounded in the maximal unfolding of a
term up-to a given set of names $\symsemCur$. That is, an unfolding
which unfolds a term $\tdefcall{\typevara}{\tilde{\tychana}}$ only if
$\tilde{\tychana} \cap \symsemCur \neq \varnothing$.

\begin{restatable}[Decidability]{theorem}{thmdecidability}
  \label{thm:decidability-ty}
  For all $\EqTypes$ s.t.\  $\tytyJudgeE{\EqTypes}$, it is
  decidable whether or not $\EqTypes$ is $k$-live (resp. channel safe), for
  any $k \geq 0$.
\end{restatable}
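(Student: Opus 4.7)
The plan is to reduce both properties to reachability checks in a finite transition system, using the Finite Control Lemma as the central ingredient. First I would observe that the symbolic LTS of Figure~\ref{fig:type-sym-sem} is finitely branching up to $\equiv$: the prefix rules yield unique successors, internal and external choices branch over their finite index sets, and the $(\nu \tychana)$ rules (together with \symrulename{def}) either fire in one of the $\resXdir{1}{}$/$\resXdir{2}{}$ variants depending on the cardinality of $\symsemCur$, or are blocked by the bound. Combined with Lemma~\ref{lem:finite-lts}, this shows that for a fenced $\EqTypes$ and any finite $k$, the $k$-bounded symbolic LTS rooted at $\typevara_0\ENCan{}$ is a finite directed graph that can be effectively constructed (e.g.\ by a fixed-point computation enumerating new $\equiv$-classes).

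Next I would check that each component appearing in Definitions~\ref{def:type-live} and~\ref{def:type-safe} is computable on this graph. The strong barbs of Definition~\ref{def:type-barbs} are inductively defined on the syntax of types, with a single deterministic unfolding step for calls $\tdefcall{\typevara}{\tilde{\tychana}}$ (given that each type variable is bound by a unique equation in $\EqTypes$), so $T \barb{o}$ and $T \barb{\tilde{o}}$ are decidable syntactic tests. The weak barb ${T}\wbarbsym{o}{k}$ is by definition a reachability question in the symbolic LTS with starting context $\symsemCur = \fn(T)$; since $T$ is itself a sub-term of a fenced system, Lemma~\ref{lem:finite-lts} applies equally to this auxiliary LTS (fencing is preserved by symbolic transitions, as it is a property of the equation system $\EqTypes$ rather than of a particular initial state), hence this reachability is decidable as well.

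With these pieces in hand, the two decision procedures are direct. For $k$-liveness, enumerate every state $T$ (up to $\equiv$) reachable from $\typevara_0\ENCan{}$; for each $T$, test syntactically whether any of $T\barb{\tychana}$, $T\barb{\ov{\tychana}}$, or $T\barb{\tilde{\tychana}}$ holds, and for each triggered barb verify the corresponding reachability obligation ${T}\wbarbsym{\labsync{\tychana}}{k}$ or the existence of some $\tychana_i \in \tilde{\tychana}$ with ${T}\wbarbsym{\labsync{\tychana_i}}{k}$. The procedure for $k$-channel safety is analogous: for each reachable $T$ with $T\barb{\cclbarb{\tychana}}$, verify that neither ${T}\wbarbsym{\clbarb{\tychana}}{k}$ nor ${T}\wbarbsym{\ov{\tychana}}{k}$ holds. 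Since we perform finitely many decidable checks, the global property is decidable.

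The main obstacle I expect is the bookkeeping around the parameter $\symsemCur$: the outer search starts from $\typesymsem{\emptyset}{\typevara_0\ENCan{}}$ while each inner weak-barb query uses $\symsemCur = \fn(T)$ for the current $T$, and the bound $n = k + |N|$ hidden in the definition of $\wbarbsym{o}{k}$ must be carried correctly. Resolving this amounts to observing that every reachable $T$ has a bounded $|\fn(T)|$ (again by the fencing-induced finite control), so each inner query is itself a reachability problem over a finite state space, and the overall algorithm halts.
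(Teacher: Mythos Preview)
Your proposal is correct and follows the same route as the paper: invoke Lemma~\ref{lem:finite-lts} to obtain a finite $k$-symbolic LTS and then observe that $k$-liveness and $k$-channel safety reduce to finitely many decidable reachability/barb checks over that graph. The paper's own proof is essentially a two-line appeal to Lemma~\ref{lem:finite-lts}, whereas you spell out the finite-branching argument, the decidability of the barb predicates, and the handling of the inner weak-barb queries (with their shifted $\symsemCur = \fn(T)$ and bound $n = k + |\symsemCur|$)---details the paper leaves implicit but which are indeed needed for a complete argument.
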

Theorem~\ref{thm:decidability-ty} follows from the fact that checking
$k$-liveness (resp.\ channel safety) is decidable over any finite LTS
(finiteness is guaranteed by Lemma~\ref{lem:finite-lts}).

\begin{remark}\rm
  It is not always possible to compute a finite $k$ such that
  $k$-liveness (resp.\ $k$-channel safety) implies general liveness
  (resp.\ channel safety) of fenced types.
    However, if the types are \emph{finite control} (i.e., parallel
  composition does not appear under recursion), then liveness and
  channel safety are indeed decidable, see e.g.~\cite{finitecontrolccs}.
\end{remark}

 \section{Properties of \fgo}
\label{sec:properties}
We now make precise the properties our behavioural type analysis
ensures on \fgo programs. We show that if a program $\prog$ is typed
by a safe type, then $\prog$ is safe according to
Definition~\ref{def:psafe}. For liveness, we identify the classes of
programs for which liveness of types implies program liveness. We note
that type liveness and safety refers to $\infty$-liveness and
$\infty$-safety, respectively. Moreover, we recall that our bounded
analysis on types implies its $\infty$-counterpart only in the finite
control fragment.

\subsection{Type and Channel Safety in \fgo}\label{subsec:typesafety}
The typing system of \S~\ref{sect:core_typ} ensures that channel
payloads always have the expected type. This property is made precise
by a standard subject reduction result, stating that the semantics of
types simulates the semantics of processes.

\begin{restatable}[Subject Reduction]{theorem}{thmsubreduction}
  \label{thrm:subred}
Let $\G \vdash_B P \ts T$ and $P \tra{} P'$. Then 
there
exists $T'$ such that 
$\G \vdash_{B'} P' \ts T'$ with $T\tra{} T'$. 
\end{restatable}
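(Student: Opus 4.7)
The plan is to proceed by induction on the derivation of $P \tra{} P'$, with a case analysis on the last reduction rule used. Before starting the main induction, I would establish three supporting lemmas that handle the routine (but essential) bookkeeping: (i) a \emph{substitution lemma}, stating that $\G, x{:}\sigma \vdash P \ts T$ and $\G \vdash e : \sigma$ with $e \conv v$ imply $\G \vdash P\subs{v}{x} \ts T$, and similarly for channel substitutions replacing a channel variable $y{:}\m{ch}(\sigma)$ by a channel name $c{:}\m{ch}(\sigma)$; (ii) a \emph{congruence lemma}, showing that if $\G \vdash_B P \ts T$ and $P \equiv Q$ then there exists $T'$ with $T \equiv T'$ and $\G \vdash_B Q \ts T'$ (proved by induction on the derivation of $\equiv$, using the corresponding type-level equations in \S\ref{sub:sync-ty-sem}); and (iii) a \emph{process-variable lemma}, stating that if $\newdef{X}{\tilde x,\tilde y}{Q}$ is in the definition environment with $\typevara_X(\tilde y) = T_X$ in $\EqTypes$, then unfolding preserves typing, i.e., $\G \vdash X\langle\tilde e, \tilde c\rangle \ts \typevara_X\langle\tilde c\rangle$ implies $\G \vdash Q\subs{\tilde v,\tilde c}{\tilde x,\tilde y} \ts T_X\subs{\tilde c}{\tilde y}$ whenever $\tilde e \conv \tilde v$; this follows from (i) combined with inversion on \trulename{var} and \trulename{def}.

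With these in hand, I would then do the case analysis. The simple cases are \rulename{tau}, \rulename{ift}, \rulename{iff}, \rulename{close}, \rulename{sclose}, each matched by the corresponding type-level rule (\ltsrulename{tau}, the two projections from $\tch{}{S,T}{}$ via \ltsrulename{sel}, \ltsrulename{end} combined with \ltsrulename{close}, and \ltsrulename{cld} combined with \ltsrulename{com}, respectively). For \rulename{scom}, by inversion the typing decomposes the parallel into $\tsend{c}{\sigma};T_1$, $\trecv{c}{\sigma};T_2$, and $\tyopenbuffer{c}$; at the type level \ltsrulename{snd}, \ltsrulename{rcv}, and \ltsrulename{com} synthesise a $\labsync{c}$ transition to $T_1 \parr T_2\subs{c}{y} \parr \tyopenbuffer{c}$, which retypes the reduct using the substitution lemma. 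The \rulename{newc} case is matched by \ltsrulename{new} together with \trulename{res} and \trulename{buff}, adjusting the buffer set $B$ to include the freshly restricted $c$. The \rulename{sel} case is handled by \ltsrulename{bra}: inversion on \trulename{sel} yields a branch type for each prefix, and the $j$-th branch firing in the process is matched by the same branch firing at the type level.

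The congruence cases \rulename{par} and \rulename{res} follow directly from the induction hypothesis, combined with the type-level rules \ltsrulename{par} and \ltsrulename{res-1}/\ltsrulename{res-2}, and with rule \trulename{parr}/\trulename{res} at the typing level to reconstruct the derivation (observing that the buffer-set side-condition $B \cap B' = \emptyset$ is preserved because reductions in $P$ can only grow $B$ by freshly restricted names, which remain disjoint from $B'$). The \rulename{str} case reduces to lemma (ii). Finally, the \rulename{def} case requires lemma (iii): unfolding $X\langle\tilde e,\tilde c\rangle$ at the process level is matched by \ltsrulename{def} at the type level unfolding $\typevara_X\langle\tilde c\rangle$, after which we apply the induction hypothesis on the continuation.

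The main obstacle is the interaction between buffer sets $B$ and the dynamic restriction/closure machinery. Specifically, in the \rulename{close} case the buffer for $c$ transitions from $\tyopenbuffer{c}$ to $\tyclosedbuffer{c}$ at the type level via \ltsrulename{buf}, and the overall step requires the reducing process contexts to be re-assembled with the new buffer typed by \trulename{cbuff} while keeping $B$ invariant; I would need to carefully check that \trulename{parr}'s disjointness premise is maintained. A related subtle point is ensuring that in the \rulename{newc} case, the fresh channel $c$ chosen at the process level can also be chosen at the type level (via $\alpha$-conversion and \ltsrulename{eq}) so that the resulting type $(\nu c)(T\subs{c}{y} \parr \tyopenbuffer{c})$ is well-formed and the buffer-set side-conditions line up with the typing rule \trulename{res}. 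Apart from these, the cases are routine once the three supporting lemmas are in place.
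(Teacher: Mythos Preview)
Your proposal is correct and follows essentially the same approach as the paper: the paper's proof merely states that one uses inversion lemmas (Lemma~\ref{lem:inversion} and similar ones for the remaining constructs) and then proceeds by induction with a case analysis on the reduction rules, which is exactly the structure you spell out in detail. Your three supporting lemmas (substitution, congruence, unfolding of definitions) are the expected ingredients; the paper simply subsumes them under ``similar inversion lemmas'' and ``straightforward''. One small slip: in the \rulename{scom} case the residual type should be $T_1 \parr T_2 \parr \tyopenbuffer{c}$ rather than $T_1 \parr T_2\subs{c}{y} \parr \tyopenbuffer{c}$, since $y$ is a \emph{data} variable (typed by a payload type $\sigma$ via \trulename{in}) and does not occur in $T_2$ at all---the substitution lemma is needed only to retype the process $Q\subs{v}{y}$, not to adjust the type.
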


We prove that type safety implies program safety, using 
a correspondence between barbs.
Hereafter we write ${P}\wbarb{o}$ including the case $o=\tilde{a}$.

\begin{lemma}\label{lem:barb}
Suppose $\G \vdash_B P \ts T$. 
If ${P}\wbarb{o}$ then 
${T} \wbarbt{o}$. 
\end{lemma}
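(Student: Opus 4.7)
The plan is to deduce the lemma from a ``strong barb'' correspondence combined with subject reduction (Theorem~\ref{thrm:subred}). The strong barb statement I aim to establish first is: if $\G \vdash_B P \ts T$ and $P \barb{o}$, then $T \barb{o}$ (where $o$ ranges over the base barbs as well as the tuple barb $\tilde{a}$ coming from selects).

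I would prove this strong correspondence by induction on the derivation of $P\barb{o}$, with a case analysis on the last barb rule. The leaf cases $\precv{c}{x}\barb{c}$, $\psend{c}{e}\barb{\ov{c}}$, $\close{c};Q\barb{\clbarb{c}}$, and $\closedbuff{\vec v}{c}{\sigma}\barb{\cclbarb{c}}$ are immediate once we invoke the corresponding typing rules \trulename{in}, \trulename{out}, \trulename{close}, and \trulename{cbuff}, since each of them produces a type whose head construct ($\trecv{c}{\sigma};T$, $\tsend{c}{\sigma};T$, $\End[c];T$, $\tyclosedbuffer{c}$) carries exactly the matching type-level barb from Definition~\ref{def:type-barbs}. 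The prefix case $\pi;Q\barb o$ follows by inverting \trulename{in}/\trulename{out}/\trulename{tau} and appealing to IH on the prefix; the parallel and restriction cases follow by inverting \trulename{parr}, \trulename{par}, and \trulename{res} and using IH together with the matching type-level congruence rules.

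The two delicate cases are the select barb and the synchronisation barb. For the select, if $\sel{\pi_i;P_i}{i\in I}\barb{\{o_1,\ldots,o_n\}}$ because each $\pi_i\barb{o_i}$, then \trulename{sel} types the process with $\tbr{}{\kappa_i;T_i}{i\in I}$ where each $\kappa_i$ comes from typing $\pi_i$; by the leaf arguments each $\kappa_i\barb{o_i}$, so the type exhibits the tuple barb by the corresponding clause of Definition~\ref{def:type-barbs}. For synchronisation, $P\parr Q\barb{[a]}$ splits into the ordinary dual pair and the closed-channel pair; by IH the component types carry compatible barbs ($a$ against $\ov a$, or $a$ against $\cclbarb a$), and the matching type-level clauses produce $T\parr S\barb{\labsync a}$. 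The select-within-parallel variants use the same recipe combined with the select case above. The case for process calls $\defcall{X}{\tilde e,\tilde a}\barb o$ is handled by inverting \trulename{var}: the premise unfolds $X$'s body $Q\subs{\tilde e,\tilde a}{\tilde x}$, whose type is $T\subs{\tilde a}{\tilde y}$ obtained by instantiating the parameters in the definition clause $\typevara_X(\tilde y)=T$; applying IH and then the type-level barb rule for $\tdefcall{\typevara_X}{\tilde a}$ closes the case.

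Once the strong correspondence is established, the weak statement is immediate: assume $P\wbarb{o}$, so there exists a sequence $P\tra{}^{\ast}P'$ with $P'\barb{o}$. Applying Theorem~\ref{thrm:subred} step by step along this reduction chain yields a matching (possibly $\tau$ or $\equiv$) sequence $T\tra{}^{\ast}T'$ with $\G\vdash_{B'}P'\ts T'$. The strong barb lemma gives $T'\barb{o}$, and therefore $T\wbarbt{o}$ by the definition of the weak barb at the type level. The main technical obstacle I anticipate is the careful bookkeeping in the synchronisation case involving closed buffers, where one must line up the typing of $\closedbuff{\tilde v}{c}{\sigma}$ as $\tyclosedbuffer{c}$ with the sync-barb clause that pairs an input barb with the $\cclbarb{\tychana}$ barb; the rest is largely routine inversion on typing together with the already-matched shape of the reduction and type LTS rules.
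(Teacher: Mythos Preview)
The paper states Lemma~\ref{lem:barb} without proof (it appears neither in the main text nor in Appendix~\ref{app:properties}), so there is no paper argument to compare against. Your two-step plan---first a strong barb correspondence $P\barb{o}\Rightarrow T\barb{o}$ by induction on the barb derivation using inversion on the typing rules, then a lift to weak barbs via Theorem~\ref{thrm:subred}---is the natural route and the case analysis you outline is sound; this is almost certainly what the authors had in mind.

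One point to tighten. The weak type barb $T\wbarbt{o}$ is \emph{defined} in the paper as $T\wbarbsym{o}{\infty}$, i.e.\ via the \emph{symbolic} semantics of Figure~\ref{fig:type-sym-sem}, whereas subject reduction (Theorem~\ref{thrm:subred}) produces a step in the \emph{standard} type LTS $\tra{}$ of Figure~\ref{fig:types-sem}. Your last paragraph silently conflates the two when you conclude ``therefore $T\wbarbt{o}$ by the definition of the weak barb at the type level''. At $k=\infty$ the two semantics essentially coincide---every $(\nu a)$ can be opened by \symrulename{\resoneless}/\symrulename{\restwoless}, so $N$ accumulates all names in scope and the side condition $\tilde{a}\cap N\neq\emptyset$ of the symbolic \symrulename{def} rule is met for any call with non-empty parameter list---but this bridging observation should be made explicit rather than assumed. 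Once you add that sentence, the argument is complete.
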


\begin{restatable}[Process Channel Safety]{theorem}{thmchansafe}
  \label{thm:chansafe}
  Suppose $\G \vdash \prog \ts \EqTypes$ and  
  $\EqTypes$ is safe. Then $\prog$ is safe. 
\end{restatable}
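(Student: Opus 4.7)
The plan is to argue by contradiction, transporting a process-level safety violation to the type level via Subject Reduction (Theorem~\ref{thrm:subred}) and the barb correspondence (Lemma~\ref{lem:barb}). Suppose $\prog$ is not channel safe: by Definition~\ref{def:psafe} there exist $\tilde{c}$, $Q$ and a name $a \notin \tilde{c}$ with $\prog \tra{}^\ast (\nu \tilde{c}) Q$, $Q \barb{\cclbarb{a}}$, and either $Q \wbarb{\clbarb{a}}$ or $Q \wbarb{\ov{a}}$. I would first iterate Subject Reduction along $\prog \tra{}^\ast (\nu \tilde{c}) Q$, starting from $\G \vdash \prog \ts \EqTypes$. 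This produces a matching sequence $\EqTypes \tra{}^\ast T_Q$ in the standard type LTS of Figure~\ref{fig:types-sem}, together with a typing $\G \vdash_B (\nu \tilde{c}) Q \ts T_Q$ for some $B$.

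Second, I would invoke Lemma~\ref{lem:barb} on this typing to transfer the process barbs. From $(\nu \tilde{c}) Q \wbarb{\cclbarb{a}}$ (immediate from $Q \barb{\cclbarb{a}}$ since $a \notin \tilde{c}$) and $(\nu \tilde{c}) Q \wbarb{\clbarb{a}}$ (or $(\nu \tilde{c}) Q \wbarb{\ov{a}}$), the lemma yields $T_Q \wbarbt{\cclbarb{a}}$ together with $T_Q \wbarbt{\clbarb{a}}$ (or $T_Q \wbarbt{\ov{a}}$). By definition, $\wbarbt{o}$ abbreviates $\wbarbsym{o}{\infty}$, so each weak barb is witnessed by a symbolic reduction starting from $\typesymsem{\freenames{T_Q}}{T_Q}$ to a term exhibiting the corresponding strong barb.

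To close, I would bridge the two type-level semantics via an auxiliary lemma showing that the standard reduction $\EqTypes \tra{}^\ast T_Q$ can be mirrored symbolically by $\typesymsem{\emptyset}{\typevara_0\ENCan{}} \symsemTrI{\infty} \typesymsem{\symsemCur}{T_Q}$, where $\symsemCur$ collects the free names generated along the way. At $k = \infty$ the bounds $\lvert \symsemCur \rvert < k$ in rules \symrulename{\resoneless} and \symrulename{\restwoless} are vacuous, so the standard and symbolic rules coincide save for the name-tracking of $\symsemCur$; the premise of rule \symrulename{def} is satisfied because each recursive call involves at least one name already in $\symsemCur$. Combining this bridging with the weak symbolic barbs above produces a reachable symbolic state with strong barb $\cclbarb{a}$ and weak barb $\clbarb{a}$ (or $\ov{a}$), contradicting $\infty$-channel safety of $\EqTypes$. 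The main obstacle is precisely this bridging lemma between the standard and symbolic type semantics and the accompanying bookkeeping of $\symsemCur$; modulo that, the argument is a direct lifting.
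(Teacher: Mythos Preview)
Your proposal follows essentially the same route as the paper: Subject Reduction to obtain a typing for the reduct, then the barb correspondence (Lemma~\ref{lem:barb}) to transport the process-level barbs to the type level and derive a contradiction with type safety. The paper's proof is only a slight repackaging: it argues directly (via the contrapositive of Lemma~\ref{lem:barb}) rather than by contradiction, and it invokes the Inversion Lemma (Lemma~\ref{lem:inversion}) to strip the restriction $(\nu\tilde{c})$ and work with a typing of $Q$ itself, which avoids your bookkeeping around $a\notin\tilde{c}$. The paper also glosses over the bridging between the standard type LTS and the $\infty$-symbolic semantics that you make explicit; your identification of that step is accurate and is indeed the only non-mechanical point, but neither proof spells it out beyond what you have.
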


\subsection{Liveness of Limited Programs}
\label{sec:liveness:challenges}
The development in \S~\ref{sec:types} performs an analysis on our
abstract representation of processes (i.e. the \emph{types}),
verifying liveness (Definition~\ref{def:type-live}) for fenced types.
Our goal is to ensure liveness of a general class of
\emph{typable programs}. We divide programs into three classes to
discuss the issue of liveness.

The first class is a set of programs which have a path to terminate.
In this class, a program that is typable with a live type can always satisfy
liveness.

\begin{definition}[May Converging Program] \label{def:termination}\rm
Let $\G \vdash \prog \ts \EqTypes$. 
We write $\prog \in \convm$ if 
for all $X_0\ENCan{} \tra{}^\ast P$, $P \tra{}^\ast \zero$.  
\end{definition}

\begin{restatable}{proposition}{proptermination}
  \label{pro:termination}
  Assume $\G \vdash \prog \ts \EqTypes$ and 
  $\EqTypes$ is live. 
      {\em (1)} 
  Suppose  
  there exists $P$ 
  such that $X_0\ENCan{}\tra{}^\ast P \not\tra{}$.
  Then $P\equiv \zero$; and {\em (2)} If $\prog\in \convm$, then 
  $\prog$ is live. 
      \end{restatable}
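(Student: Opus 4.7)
The plan is to prove parts (1) and (2) independently, both relying on Subject Reduction (Theorem~\ref{thrm:subred}) and the barb correspondence (Lemma~\ref{lem:barb}).

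For part (1), I will argue by contradiction: suppose $P\not\equiv\zero$. Using structural congruence to normalise $P$ into $(\nu\tilde c)(\prod_i R_i)$ with each $R_i$ neither a parallel composition nor a restriction, I first observe that if every $R_i$ is $\zero$ or a buffer then, by pushing each $\nu$ inward and applying $(\nu c)\buff{\tilde v}{c}{\sigma}\equiv\zero$ and $(\nu c)\closedbuff{\tilde v}{c}{\sigma}\equiv\zero$, we obtain $P\equiv\zero$, contradicting the assumption. Hence some $R_i$ must be non-trivial. Since $P\not\tra{}$, this $R_i$ cannot be an exposed $\tau$-prefix, conditional, $\newch{y{:}\sigma};Q$, unguarded recursion call, close on an open buffer, or select with a $\tau$-guard or matched communication guard --- each would enable a reduction. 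What remains is an exposed input/output prefix or a select with only non-$\tau$ unmatched guards, which yields $P\barb{o}$ for some $o\in\{a,\bar a,\tilde a\}$. By Subject Reduction there exists $T$ with $\typevara_0\ENCan{}\tra{}^\ast T$ and $\G\vdash P\ts T$. I will then argue that syntax-directedness of the typing rules forces $T$'s top-level structure to mirror that of $P$; hence (i) $T\barb{o}$ on the same channel(s) and (ii) $T$ has no top-level $\tau$-transition (no $\tau$-prefix, no internal choice, no $\tynew{\tychana}$, no exposed recursion variable, and no matching synchronisation pair). Consequently $T\wbarbt{\labsync{a}}$ can hold only if already $T\barb{\labsync{a}}$, but the absence of matching pairs in $T$ rules this out, so $T\not\wbarbt{\labsync{a}}$ for every $a$. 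Combined with $T\barb{o}$, this contradicts the liveness of $\EqTypes$.

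For part (2), I take an arbitrary reachable state $(\nu\tilde c)Q$ of $\prog$. For clause (a), if $Q\barb{a}$ (the $\barb{\bar a}$ case is symmetric), the exposed input on $a$ must be consumed along any convergent sequence $Q\tra{}^\ast\zero$ provided by $\prog\in\convm$, since $\zero$ exhibits no barbs. Consumption occurs by rule \rulename{scom} (synchronising with a matching send) or \rulename{sclose} (synchronising with a closed buffer); inspecting the rules defining $\barb{\labsync{a}}$, the state $Q'$ immediately preceding such a reduction satisfies $Q'\barb{\labsync{a}}$, hence $Q\wbarb{\labsync{a}}$. For clause (b), $Q\barb{\tilde a}$ witnesses a top-level select with only non-$\tau$ guards; the convergent path to $\zero$ must eliminate this select (no other reduction removes it), and its firing step via \rulename{sel} produces $Q'\barb{\labsync{a_i}}$ for some $a_i\in\tilde a$, giving $Q\wbarb{\labsync{a_i}}$.

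The hardest step will be part (1)'s structural-mirror claim: $P\not\tra{}$ must lift to the absence of any internal transition on $T$. While typing is syntax-directed, care is needed to handle structural congruence symmetrically on both sides and verify that each typing rule preserves the shape I rely on. A delicate secondary issue arises when the only stuck residuals of $P$ are close prefixes over already-closed buffers: these exhibit $\clbarb{u}$ but no $\{a,\bar a,\tilde a\}$ barb, so liveness alone does not constrain them. I expect to dispatch such configurations either via an auxiliary observation that they cannot arise from well-typed programs whose types are live, or by strengthening the structural normal-form analysis to absorb them into restricted-buffer congruence alongside ordinary buffers.
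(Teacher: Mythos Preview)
Your approach matches the paper's strategy for both parts: (1) by contradiction, extracting a barb from a stuck non-$\zero$ state and confronting it with liveness of $\EqTypes$; (2) directly from the existence of a convergent path to $\zero$. The paper's proof of (1) is a three-line sketch that simply says ``by Lemma~\ref{lem:barb}, it contradicts $\EqTypes$ is live'', and for (2) says only ``by definition of liveness (there is always a path to reach $\zero$, which is live)''. Your structural-mirror argument for (1) --- that syntax-directed typing forces $T$ to share $P$'s top-level shape, so $P\not\tra{}$ entails that $T$ has no $\tau$-move and hence $T\wbarbt{\labsync{a}}$ collapses to the (false) strong barb $T\barb{\labsync{a}}$ --- is exactly the detail the paper's sketch omits; Lemma~\ref{lem:barb} is only the forward direction (process barbs to type barbs) and does not by itself close the contradiction.

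The close-on-closed-buffer issue you flag is genuine and is not addressed by the paper either: a stuck state such as $(\nu a)(\close{a};R \parr \closedbuff{\emptyset}{a}{\sigma})$ is not $\equiv\zero$, yet exhibits only $\clbarb{a}$ and $\cclbarb{a}$, which type liveness does not constrain. Strictly, this means statement (1) needs a channel-safety hypothesis (or an argument that such states are unreachable); your instinct to discharge it via an auxiliary observation is correct, and you are being more careful than the paper here.

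For (2), your argument and the paper's coincide, and neither actually uses the liveness hypothesis on $\EqTypes$: convergence to $\zero$ alone forces every exposed prefix or non-$\tau$ select to fire along the path, yielding the required $\labsync{a}$ barb.
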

\noindent 
We note that the above statement does not restrict the programs to be
finite. 
A program with infinite reduction sequences 
can satisfy liveness by Proposition~\ref{pro:termination}.
For instance:
\[
\pin{\{ D_1,D_2 \}}{\newch{b};\newch{c};
(X_1\langle b,c\rangle \parr
  X_2\langle b,c \rangle)}
\]
\[
\begin{array}{rcl}
  \hspace{-2mm}\mbox{with}\quad
D_1 & = 
&  X_1(b,c) \triangleq  
\sel{\psend{c}{v};X_1\langle b,c
  \rangle \, ,\, \psend{b}{w};\zero}{}\\
D_2 & = 
& X_2(b,c) \triangleq \sel{\precv{c}{x};X_2\langle b,c
  \rangle \, , \, \precv{b}{y};\zero}{}
\end{array}
\]
is live, as is its corresponding type. 

The second class is a set of programs which do not contain
\emph{infinitely occurring conditional branches} (we discuss at length
the issues raised by the interplay of conditional branching and
recursion in \S~\ref{subsec:liveness}).  If such a process is assigned
a live type, then it is itself live.  For example, any program which
does not include conditionals or one with conditionals containing only
finite processes in both branches belong to this class.  Consider a
program obtained from the one above by replacing $\zero$ in $D_1$ and
$D_2$ by $X_1$ and $X_2$, respectively.  Despite this program
executing forever, both program and type liveness hold.  This
class of programs is made precise in Proposition~\ref{pro:liveness},
along with the issue of infinitely occurring conditionals, which are
explained below.

\subsection{Liveness of Infinitely Occurring Conditionals}
\label{subsec:liveness}
As the third class, we consider infinitely running
programs that contain 
recursive variables in conditional branches.  
The behaviours of conditionals in a program rely on data to decide 
which branch is taken. On the other hand,  
at the type level,    
this information is abstracted as an internal choice. 
This causes a mismatch between 
program and type behaviours. 

Revisiting the prime sieve example of \S~\ref{sect:ex_proc_sieve}, 
consider the definition of the filter process:
\[
\begin{array}{rcl}
F(n,i,o) & \triangleq & \precv{i}{x};\pite{(x\% n \neq 0)}{\psend{o}{x};F\langle
               n,i,o \rangle\\ & &\hspace{2.6cm}}{\ F\langle n , i , o\rangle}\\
\end{array}
\]
whose type is given as:
$\mathbf{t}_F(i,o) = \trecv{i}{};\oplus\{
\tsend{o}{};\mathbf{t}_F\langle i,o \rangle,\, \mathbf{t}_F\langle i,o
\rangle \}$.

Our analysis on types does indeed determine
the types of the prime sieve as live,  even in the absence of
terminating reduction sequences. 
In 
$\mathbf{t}_F(i,o)$, we have an internal choice between a branch
that recurses back to $\mathbf{t}_F$ and another that outputs along
$o$ and recurses back to $\mathbf{t}_F$. Thus, if we compose a call
$\mathbf{t}_F$ with a type that denotes an infinite
sequence of inputs along $o$, we deem such a composition as live since
all the inputs \emph{can} eventually be synchronised with an output
from $\mathbf{t}_F$, given that the semantics of internal choice state
that we may indeed move to either branch. 

However, the type $\EqTypes$ of the prime sieve program 
is an abstract approximation of the actual
prime sieve implementation, where the test $x\% n \neq 0$ is not
obviously guaranteed to ever succeed given that it depends on received
data (which is sent by either the generator process $G$ or a previous
filter process). Thus, the interplay of conditional branching and
infinite recursion may in general cause a disconnect between the semantics of
the types and those of the concrete processes.
For instance, if the test $x\% n \neq 0$ is replaced by $\false$ in
the prime sieve example, its type is live while the program is not.
In the remainder of this section, we 
make precise the conditions under which the semantics of infinite processes
and types simulate one another, thus implying
liveness (even in the presence of infinite branching).

To achieve our liveness results, we proceed as follows:
\begin{description}
\item[Step 1.] Define the notion of {\em infinite conditional}
  ($\ic$), identifying a class of programs where conditional branches
  are executed infinitely often.

\item[Step 2.] Fill the gap between internal choices of types 
and conditionals by defining a $\ast$-conditional 
($\pite{\ast}{P}{Q}$)
which non-deterministically reduces to either $P$ or $Q$ (as the internal choice
$\oplus \{T,S\}$), allowing us to identify the subclass of $\ic$,
dubbed alternating conditionals ($\ac$), where
programs simulate their non-deterministic conditional counterparts.

\item[Step 3.] Prove that liveness of types  implies
  liveness of programs in $\ac$.
\end{description}

\paragraph{Alternating and Non-deterministic Conditionals}
For {\bf Step 1},  we begin by defining a notion of \emph{infinite
 conditional} (Definition~\ref{def:ic}) in programs. Intuitively, we
identify programs that reduce forever and where conditional
branches appearing under recursion have their branches taken
infinitely often.

\begin{definition}[Marked Programs]\label{def:mark}
  Given a program $\prog$ we define its \emph{marking}, written
  $\m{mark}(\prog)$, as the program obtained by deterministically
  labelling every occurrence of a conditional of the form
  $\pite{e}{Q}{R}$ in $\prog$, as $\mpite{e}{Q}{R}{n}$, such that $n$
  is distinct natural number for all conditionals in $\prog$.
\end{definition}

\begin{definition}[Marked Reduction Semantics]\label{def:mred}
\rm
We define a marked reduction semantics, written $P
\markred{l} Q$, stating that program $P$ reduces to $Q$ in a single
step, performing action $l$. The grammar of action labels is defined as:
\[
l \coloneqq \emp \mid \iflab{n}{\llab} \mid \iflab{n}{\rlab}
\]
\noindent where $\emp$ denotes an unmarked action,
$\iflab{n}{\llab}$ denotes a conditional branch marked
with the natural number $n$ in which
the $\m{then}$ branch is chosen, and $\iflab{n}{\rlab}$
denotes a conditional branch in which the $\m{else}$ branch is
chosen. We write $P\tra{} Q$ for $P\tra{\emp} Q$.  The marked
reduction semantics replace rules \rulename{ift} and \rulename{iff}
with:
\[
\begin{array}{c}
\small
\rulename{iftm}
    \inferrule
    {e\conv \true}
    {{\mpite{e}{P}{Q}{n}}\tra{\iflab{n}{\llab}}
    {P}}
\quad
\rulename{iffm}
    \inferrule
    {e\conv \false}
    {{\mpite{e}{P}{Q}{n}}\tra{\iflab{n}{\rlab}}
    {Q}}
\end{array}
\]
\end{definition}

\begin{definition}[Trace]\rm 
We define an execution trace $\trace$ of a process $P$ as the potentially
infinite sequence of action labels $\vec{l}$ such that $P\markred{l_1}
P_1 \markred{l_2} \dots$, with $\vec{l} = \{l_1 \, l_2 \dots \}$.
We write $\traceset_P$ for the set of all possible traces of a process $P$.
\end{definition}

A trace of the marked reduction semantics identifies exactly which
branches were selected during the potentially infinite execution of a
program.

We now define infinitely recurring conditionals.
We use a reduction context $\tycontxt_r$ given by:
\[
\begin{array}{lcl}
\tycontxt_r & := & [\,] \,\mid\, (P\parr \tycontxt_r)  \,\mid\,  
   (\tycontxt_r \parr P) \,\mid\, (\nub a)\tycontxt_r
\end{array}
\]
We write $\tycontxt_r[P]$ for the process obtained by replacing $P$
for the hole $[\,]$ in $\tycontxt_r$.

\begin{definition}[Infinite Conditional]~\label{def:ic}
\rm We say that $P$ has infinite conditional branches, written $P\in\ic$,
iff $\m{mark}(P) \tra{}^*
\tycontxt_r[\mpite{e}{Q_1}{Q_2}{n}] = R$, for some $n$,
and $R$ has an infinite trace where
$\iflab{n}{\llab}$ or $\iflab{n}{\rlab}$ appears infinitely often.
We
say that such an $n$ is an \emph{infinite conditional mark} and write
$\infcond(P)$ for the set of all such marks.
\end{definition}

The following statement implies that even programs which contain only
infinite executions can be live if none of its conditionals appear in
traces infinitely often (i.e. our second class of programs).

\begin{restatable}[Liveness for Finite Branching]{proposition}{propliveness}
  \label{pro:liveness}
  \label{pro:finite:liveness} 
  Suppose $\G \vdash \prog \ts \EqTypes$ and $\EqTypes$ is live 
  and $\prog\not\in \ic$.
  Then $\prog$ is live.
    \end{restatable}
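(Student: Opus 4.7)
My plan is to proceed by contraposition: assume $\prog \not\in \ic$ and $\prog$ is not live, and derive that $\EqTypes$ is not live, contradicting the hypothesis. The setup relies on Subject Reduction (Theorem~\ref{thrm:subred}) together with the barb correspondence of Lemma~\ref{lem:barb}: every reduction $\prog \tra{}^\ast (\nub \tilde{c}) Q$ is mirrored by a type reduction $\typevara_0\ENCan{} \tra{}^\ast T$ in which the barbs of $Q$ are reflected as barbs of $T$. Non-liveness of $\prog$ therefore yields a reachable $Q$ exhibiting some barb $o \in \{a, \ov{a}, \tilde{a}\}$ that no program trace from $Q$ discharges to the corresponding $\labsync{\cdot}$-barb, and the goal becomes establishing the analogous failure at the level of $T$.

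The technical heart of the argument is to transport this failure of discharge from the program to its type. The rules of Figure~\ref{fig:types-sem} mirror those of Figure~\ref{fig:redsem} one-for-one, with the sole mismatch being rule \ltsrulename{sel}: a program conditional $\pite{e}{P_1}{P_2}$ is abstracted as an internal choice $T_1 \oplus T_2$ that may reduce to either branch, whereas in the program the branch is deterministically forced by the value of $e$. I would prove, by induction on the length of a discharging type trace $T \semtyTr T'$, that its non-internal-choice steps lift directly to program reductions, and that its internal-choice steps correspond to conditionals in $Q$ whose guards evaluate consistently with the selected branch. The assumption $\prog \not\in \ic$ is used to rule out the only remaining obstruction: if no program trace from $Q$ realises the synchronisation for $o$, then by König's lemma applied to the finitely branching reduction tree of $Q$ (each \rulename{sel} ranges over a finite index set $I$, and the other rules are finitely branching), one can extract an infinite trace along which some conditional mark $n$ is forced to fire infinitely often, contradicting $\prog \not\in \ic$.

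The main obstacle will be making the simulation between type and program traces precise when the discharging type witness traverses internal-choice steps that correspond to conditionals in $\prog$. The core bookkeeping consists of pairing each type step with a compatible program step and, at each mismatch, showing either that an alternative type reduction also discharges $o$ (enabling the simulation to proceed) or that the mismatch itself is evidence of a recurrent conditional mark, hence $\prog \in \ic$. Additional care is needed for the select barb case $\barb{\tilde{a}}$, where the program must exhibit $\labsync{a_i}$ for \emph{some} $a_i \in \tilde{a}$: the simulation must be phrased so that producing any one such synchronisation at the type level suffices, and the König argument should be run uniformly over all choices of $a_i$ to ensure that failure to discharge \emph{any} element of $\tilde{a}$ forces a recurring conditional mark.
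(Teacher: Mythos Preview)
Your overall strategy---lift a type-level discharging trace back to a program trace, with the hypothesis $\prog\notin\ic$ controlling the single point of mismatch (program conditionals versus type internal choice)---is the right idea and is considerably more explicit than the paper's own argument. The paper's proof is essentially one line: it asserts that when $\prog\notin\ic$ the weak barbs of $P$ and its type $T$ coincide ``since the reduction of $T$ coincides with the reduction of $P$'', and concludes. So the decomposition you sketch (subject reduction, strong-barb correspondence, then a lifting argument) is already more than what the paper spells out.

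There is, however, a real gap in your use of K\"onig's lemma. You claim: if no program trace from $Q$ realises the required synchronisation, then K\"onig applied to the (finitely branching) reduction tree of $Q$ yields an infinite trace along which some conditional mark fires infinitely often. But K\"onig only hands you \emph{some} infinite path, and an arbitrary infinite path through $Q$'s reduction tree need not fire any conditional at all---for instance a parallel component looping forever on a conditional-free send/receive cycle. Nothing about K\"onig forces the extracted path to visit the conditionals that obstruct the lifting.

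What you actually need is a direct construction, not K\"onig. From $Q$ (with type $T$, which is live and hence has a \emph{finite} discharge trace to a $[a]$-barb), attempt to replay that trace in the program step by step. Non-internal-choice steps lift; at the first internal-choice step where the program's conditional evaluates to the other branch, take the program's branch, land in some $Q_1$, note that you have just fired one conditional mark, and restart: $Q_1$'s type is reachable from $\typevara_0\langle\rangle$ by subject reduction, so by liveness it again has a finite discharge trace. Iterate. If this procedure never terminates you have built a single infinite program trace in which a conditional fires at every restart, hence infinitely often; since $\m{mark}(\prog)$ contains only finitely many marks, pigeonhole gives one mark recurring infinitely often, contradicting $\prog\notin\ic$. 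So the finiteness that matters is the finiteness of the set of marks, used via pigeonhole on the constructed trace---not K\"onig on the raw reduction tree.
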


\noindent The main purpose of Definition~\ref{def:ac} is to identify
infinitely running processes where the behaviour of conditional
branching approximates that of non-deterministic internal choice
(i.e. the type-level semantics of internal choice). To make this
relationship precise, we define a mapping from \fgo programs to programs
where conditional branching is replaced by a form of
non-deterministic branching. This step corresponds to {\bf Step 2}.

\begin{definition}\label{def:starmapping}\rm
The mapping $(P )^\ast$ replaces all occurrences of 
$\mpite{e}{Q}{R}{n}$, such that $n
\in\infcond(P)$, with $\pite{\ast}{Q}{R}$.
The reduction semantics of $\pite{\ast}{Q}{R}$ is defined as follows: 
\[
\begin{array}{ll}
\hspace{-2mm}
\rulename{ift$\ast$}\ \ 
\pite{\ast}{P}{Q}\tra{}P\quad 
\rulename{iff$\ast$}\ \ 
\pite{\ast}{P}{Q}\tra{}Q 
\end{array}
\]
\end{definition}

\begin{definition}[Alternating Conditionals]~\label{def:ac}\rm We say
  that $P$ has {\em alternating conditional branches}, written
  $P\in\ac$, iff $P\in \ic$ and 
  if $P \tra{}^* (\nub \tilde{c}) Q$ then
  $\MAPAST{Q}\wbarb{o}$ implies $Q \wbarb{o}$.
  \end{definition}

 Recall that $o$ ranges over
any barbs, including $\tilde{a}$.
Moreover, observe that the mapping $\MAPAST{P}$ only affects
conditionals that are executed infinitely often (i.e. those whose
behaviour may fail to be captured by the type-level analysis).  We do
not require conditionals that are not in $\infcond(P)$ to necessarily
match the barbs of their non-deterministic counterpart, since their
behaviour is already over-approximated by the corresponding types.

\begin{proposition}[$\ast$-properties]\label{prop:starprop}
\label{lem:ast-barb}
Suppose $\G \vdash_B P \ts T$. Then {\em (1)} 
if $\MAPAST{P}\in \ic$ then $\MAPAST{P}\in \ac$; 
{\em (2)}  If $P \wbarb{o}$, then  
$\MAPAST{P} \wbarb{o}$; {\em (3)}     
if $\MAPAST{P} \wbarb{o}$ then $T\wbarbt{o}$.
\end{proposition}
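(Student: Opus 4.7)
The plan is to prove (2), then (3), then (1), since the latter two build on the simulation intuition developed in (2). The central semantic observation is that $\pite{\ast}{Q_1}{Q_2}$ reduces non-deterministically to either branch via rules \rulename{ift$\ast$} and \rulename{iff$\ast$}, matching exactly the internal choice type $\oplus\{S_1,S_2\}$ under rule \ltsrulename{sel}.

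For (2) I would induct on the length of the reduction $P \tra{}^\ast P'$ witnessing $P \wbarb{o}$. The base case $P' \barb{o} \Rightarrow \MAPAST{P'}\barb{o}$ follows by an inner induction on the barb derivation: all rules in Definition~\ref{def:barbs} target prefixes, close actions, buffers, select and parallel composition, none of which are rewritten by $\MAPAST{\cdot}$; conditionals themselves emit no barb, so the mapping preserves every rule's conclusion. The inductive step simulates one reduction step: rules other than \rulename{iftm} and \rulename{iffm} commute with $\MAPAST{\cdot}$ syntactically; a step on a conditional $\mpite{e}{Q_1}{Q_2}{n}$ with $n \in \infcond(P)$ is matched by reducing $\pite{\ast}{Q_1}{Q_2}$ to the same chosen branch via the star rules, while for $n \notin \infcond(P)$ the conditional is preserved and reduces identically. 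Thus $\MAPAST{P} \tra{}^\ast \MAPAST{P'}$ up to structural congruence, and $\MAPAST{P} \wbarb{o}$ follows from the base case.

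For (3) I would first observe that $\G \vdash_B \MAPAST{P} \ts T$: the starred conditional is typable by a minor variant of rule \trulename{if} (dropping the boolean check on the guard $e$) with the same internal choice type $\oplus\{S_1,S_2\}$. Because the operational behaviour of $\pite{\ast}{\cdot}{\cdot}$ coincides with that of $\oplus\{\cdot,\cdot\}$ modulo $\tau$ (both reduce non-deterministically to either continuation via a silent transition), subject reduction (Theorem~\ref{thrm:subred}) and the barb correspondence (Lemma~\ref{lem:barb}) extend to the star-enhanced semantics without alteration, giving $\MAPAST{P} \wbarb{o} \Rightarrow T \wbarbt{o}$.

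For (1), $\MAPAST{P} \in \ic$ is the first conjunct of membership in $\ac$; the remaining obligation is $\MAPAST{Q}\wbarb{o} \Rightarrow Q\wbarb{o}$ for every reduct $Q$ of $\MAPAST{P}$. The intended argument is that $\MAPAST{\cdot}$ is idempotent on reducts of $\MAPAST{P}$: since marks from $\m{mark}(\prog)$ are static labels attached to syntactic conditionals in the process definitions, every conditional appearing in $Q$ inherits a mark from $P$; marks already in $\infcond(P)$ have been starred in $\MAPAST{P}$ and hence in $Q$, while any residual non-star conditional in $Q$ carries a mark outside $\infcond(P)$. Once it is established that these residual marks also remain outside $\infcond(Q)$, $\MAPAST{Q}$ coincides with $Q$ up to $\alpha$-equivalence and the implication collapses. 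The hard part I expect is precisely this last step: formally ruling out that relaxing conditionals to non-deterministic choice promotes a finite-occurrence mark to an infinite-occurrence one in some reduct of $\MAPAST{P}$. This requires a careful lifting argument showing that any infinite trace of $\m{mark}(Q)$ exhibiting a mark $n$ infinitely often can be projected back to a trace of $\m{mark}(P)$ with the same property, exploiting that star-conditional reductions contribute only $\emp$-labels and so do not disturb the pattern of marked transitions.
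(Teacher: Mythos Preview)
The paper does not actually supply a proof of this proposition; it is stated in \S~\ref{sec:properties} and used in the proof of Theorem~\ref{thm:liveness}, but neither the main text nor the appendix contains an argument for it. So there is no paper proof to compare against, and your proposal must be assessed on its own merits.

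Your arguments for (2) and (3) are sound and standard: the starred conditional strictly enlarges the set of reductions, so weak barbs of $P$ transfer to $\MAPAST{P}$; and since $\pite{\ast}{Q_1}{Q_2}$ is typable by the same internal-choice type as its unstarred counterpart, Lemma~\ref{lem:barb} (together with subject reduction) yields (3). One point worth making explicit in (2) is that your simulation should be phrased with respect to a \emph{fixed} starring set (namely $\infcond(P)$), rather than recomputing $\infcond$ at each step; otherwise the target of the simulation is not literally $\MAPAST{P'}$ in the sense of Definition~\ref{def:starmapping}.

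For (1) there is a genuine gap, and your proposed lifting argument does not close it. You want to conclude that a mark $n \notin \infcond(P)$ remains outside $\infcond(Q)$ for every reduct $Q$ of $\MAPAST{P}$, by projecting an infinite $n$-trace of $Q$ back to one of $P$. But $Q$ is a reduct of $\MAPAST{P}$, not of $P$, and $\MAPAST{P}$ has strictly more behaviours: a starred conditional may take a branch that the original guard never permits. Concretely, suppose $P$ contains a conditional $\mpite{e_1}{R}{Z}{m}$ with $e_1$ always true and $m \in \infcond(P)$, while $Z$ itself contains a recursive conditional with mark $n$. Then $n \notin \infcond(P)$ because $Z$ is unreachable in $P$; yet in $\MAPAST{P}$ the $m$-conditional is starred and $Z$ becomes reachable, so some reduct $Q$ has $n \in \infcond(Q)$ and $\MAPAST{Q} \neq Q$. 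Your idempotence strategy therefore fails, and the barb implication $\MAPAST{Q}\wbarb{o} \Rightarrow Q\wbarb{o}$ cannot be discharged by syntactic identity. Whether (1) holds at all under the paper's definitions is in fact unclear from the text; at minimum, a correct argument would need to reason directly about the barbs of $Q$ versus $\MAPAST{Q}$ rather than rely on $\MAPAST{Q} = Q$.
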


\paragraph{Liveness for Infinite Conditionals}
We now have defined the conditions under which programs simulate the
behaviour of their types. More precisely, when a program $\prog$ is
well-typed with some live type $T$ and $\prog\in \ac$ holds, then
$\prog$ must itself be live.

\begin{restatable}[Liveness]{theorem}{thmliveness}
  \label{thm:liveness}
  \label{theorem:liveness} 
  Suppose $\G \vdash \prog \ts \EqTypes$ and $\EqTypes$ is live 
  and $\prog\in \ac$.
  Then $\prog$ is live.
    \end{restatable}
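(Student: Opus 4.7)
The plan is to transfer liveness from the type level down to the program level via the intermediate $\ast$-version $\MAPAST{Q}$, chaining together subject reduction, type liveness, and the alternating-conditional hypothesis. Fix a reachable configuration $\prog \tra{}^* (\nub \tilde{c}) Q$. By iterating Theorem~\ref{thrm:subred} (Subject Reduction) along this reduction sequence, there exists a type $T$ such that $\typevara_0\ENCan{} \tra{}^* (\nu \tilde{a}) T$ and $\G \vdash_B (\nub \tilde{c}) Q \ts (\nu \tilde{a}) T$. I focus on case (a) with $Q\barb{a}$; the cases $Q\barb{\ov a}$ and $Q\barb{\tilde a}$ are analogous.

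From $Q \barb{a}$ we have $Q \wbarb{a}$ trivially, so by Proposition~\ref{prop:starprop}(2) $\MAPAST{Q} \wbarb{a}$, and by Proposition~\ref{prop:starprop}(3) the type inherits the barb: $T \wbarbt{a}$. Thus $T \tra{}^* T'$ with $T' \barb{a}$, and the composed sequence $\typevara_0\ENCan{} \tra{}^* (\nu \tilde{a}) T \tra{}^* (\nu \tilde{a}) T'$ witnesses that $T'$ is reachable from the entry type. Liveness of $\EqTypes$ then gives $T' \wbarbt{\labsync{a}}$, hence $T \wbarbt{\labsync{a}}$.

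The next step is to push this synchronisation barb back down to the $\ast$-process. Here the design of the $\ast$-conditional is essential: rules \rulename{ift$\ast$}/\rulename{iff$\ast$} match exactly the semantics of type-level internal choice $\oplus$, so the mapping $\MAPAST{\cdot}$ removes precisely the type/process mismatch introduced by data-driven infinitely-occurring conditionals (the finitely-occurring ones are already absorbed by subject reduction). I would therefore establish an \emph{operational correspondence} lemma: for each type reduction from $T$ used to witness $T \wbarbt{\labsync a}$, there is a matching reduction sequence of $\MAPAST{Q}$ that preserves the typing relation and the specific barb on $a$. Combined with the barb preservation of Proposition~\ref{prop:starprop}, this yields $\MAPAST{Q} \wbarb{\labsync a}$. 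Finally, since $\prog \in \ac$, Definition~\ref{def:ac} applied to $Q$ gives $\MAPAST{Q} \wbarb{\labsync a}$ implies $Q \wbarb{\labsync a}$, completing case (a). Case (b) proceeds identically, replacing $\labsync a$ with $\labsync{a_i}$ for some $a_i \in \tilde a$ whose existence is guaranteed by liveness of $\EqTypes$.

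The main obstacle is the operational correspondence from types to $\ast$-processes: subject reduction only provides the process-to-type direction, and the naive converse may fail because the type LTS allows purely non-deterministic moves (including internal choices and conditional-branch selections) that need not be directly realisable as process reductions of an arbitrary $Q$. The key insight making this tractable is that after applying $\MAPAST{\cdot}$ the only remaining branching in the process is either non-deterministic (via $\pite{\ast}{\cdot}{\cdot}$, which matches $\oplus$ step-for-step) or comes from selects and external choices whose enabling conditions at the process and type level coincide; hence a careful induction on the length of the witness reduction of $T \wbarbt{\labsync a}$, together with the barb-preservation machinery of Proposition~\ref{prop:starprop}, suffices.
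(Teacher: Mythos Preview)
Your overall chaining strategy (process $\to$ $\ast$-process $\to$ type $\to$ liveness $\to$ back down) is the right shape, but the step you flag as the ``main obstacle'' contains a genuine gap, and your proposed resolution of it is incorrect.

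You claim that after applying $\MAPAST{\cdot}$ the only remaining branching in the process is either non-deterministic (via $\pite{\ast}{\cdot}{\cdot}$) or comes from selects. This is false: by Definition~\ref{def:starmapping}, $\MAPAST{\cdot}$ replaces \emph{only} those conditionals whose mark lies in $\infcond(P)$. Any conditional that is \emph{not} executed infinitely often remains an ordinary, data-driven $\pite{e}{\cdot}{\cdot}$ in $\MAPAST{Q}$, yet is still typed as an internal choice $\oplus$ at the type level. For such a conditional the type can take either branch while $\MAPAST{Q}$ can take only the one determined by $e$, so your ``operational correspondence'' lemma (simulating the witness of $T\wbarbt{\labsync a}$ by a reduction of $\MAPAST{Q}$) does not go through. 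The parenthetical ``the finitely-occurring ones are already absorbed by subject reduction'' is the wrong direction: subject reduction lets the type follow the process, not the other way around.

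The paper closes exactly this gap by a case split. When \emph{every} marked conditional of $\prog$ lies in $\infcond(\prog)$, the $\ast$-mapping does erase all deterministic conditionals and one obtains the tight barb correspondence $P\wbarb{o}$ iff $T\wbarbt{o}$ (Lemma~\ref{lem:allif_live}), from which liveness is immediate. When some conditional is finite, one first uses the observation that $\prog\in\ac$ has only infinite traces, so a finitely-occurring conditional must eventually disappear along execution (Lemma~\ref{lem:finifaway}); one therefore runs $\prog$ forward until no finite conditionals remain, invokes subject reduction along that prefix, and then applies Lemma~\ref{lem:allif_live} to the residual. Your argument would need an analogous device to eliminate the surviving deterministic conditionals before the type-to-process direction can be established.
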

To summarise, we identified three significant classes of programs for
which type liveness implies liveness: those with at least
one terminating path (Definition~\ref{def:termination} and
Proposition~\ref{pro:termination}) such as Fibonacci, cf.\ \S~\ref{sec:sync-fib}; 
those for which their infinite traces
do not contain infinite occurrences of a given conditional
(Proposition~\ref{pro:liveness})  such as
Dining Philosophers, cf.\ \S~\ref{sec:eval}; and, 
those with infinite traces containing infinite occurrences of
conditional branches (Definition~\ref{def:ac} and
Theorem~\ref{thm:liveness}) such as Prime Sieve, cf.\
\S~\ref{sect:ex_proc_sieve}.

While a reasonable percentage of real-world programs are in the first
two classes, our empirical observations show that a substantial amount
of infinitely running programs (with infinitely occurring
conditionals) that are not in $\ac$ have redundant or erroneous
conditionals.

 \section{Bounded Asynchrony in \fgo}
\label{sec:asynchrony}
Our framework extends with relative ease to the asynchronous
communication variant of the Go language. As mentioned in
\S~\ref{sect:lang}, communication channels in Go are implemented as
\emph{bounded} FIFO queues, where by default the buffer bound is $0$ --
synchronous communication. For bounds greater than $0$, communication
is then potentially asynchronous -- sends do not block if the buffer
is not full and inputs do not block if the buffer is not empty.

Asynchrony significantly affects a program's liveness. Consider the
following example:
\[
\begin{array}{lcl}
P(x,y) & \triangleq & \psend{x}{1};\precv{y}{z} \parr \psend{y}{2};\precv{x}{z}
\end{array}
\]
A program that instantiates $P(x,y)$ with synchronous communication
channels will necessarily not be live since the output and input actions
in $P$ are mismatched. However, with asynchronous channels, the output
actions become non-blocking and the program is indeed \emph{live} --
the output on $x$ on the left-hand side can fire asynchronously, exposing the
input on $y$ which may then fire. Similarly for the output on $y$ and
input on $x$ on the right-hand side.

\paragraph{Processes and Typing} To account for the buffer bounds in the syntax of \fgo we
add a bound $n$ to channel creation, $\newch{y{:}\sigma,n};P$. This
number must be equal or greater to zero and must be a literal. We also
carry this information in runtime 
buffers: $\buff{\tilde{\vval}}{c}{\sigma,n}$ and
$\closedbuff{\tilde{\vval}}{c}{\sigma,n}$ 
(also replacing $\buff{\emptyset}{c}{\sigma}$ and 
$\closedbuff{\tilde{\vval}}{c}{\sigma}$ 
by $\buff{\emptyset}{c}{\sigma,0}$ 
and $\closedbuff{\tilde{\vval}}{c}{\sigma,n}$ for synchronous
channels).  
We add
the reduction rules for asynchronous communication:
\[
\begin{array}{c}
\rulename{out}
 \inferrule {\buflen{\tilde{\vval}} < n\quad e \downarrow v}
 {{\psend{c}{e};P \parr \buff{\tilde{\vval}}{c}{\sigma,n}}\tra{}
 {P \parr \buff{\vval \cdot \tilde{\vval}}{c}{\sigma,n}}}
 \\[1.5em]
\rulename{ina}\quad 
 {\precv{c}{y};P \parr \buff{\tilde{\vval} \cdot \vval }{c}{\sigma, n}}\tra{}
 {P\subs{\vval}{y} \parr \buff{\tilde{\vval} }{c}{\sigma, n} }
\end{array}
\]
In all other rules that use buffers we add the buffer bound
straightforwardly.
The type system is fundamentally unchanged, now accounting for
buffer bounds:
\[
\begin{array}{c}
\trulename{new}
\inferrule
{\G , y{:}\m{ch}(\sigma,n) \vdash_s P \ts T \quad c\not\in \domain{\G}\cup s
  \cup \fn(T)}{\G \vdash_{s} {\newch{y{:}\sigma,n}};P \ts
  \astynew{c}{n}T\subs{c}{y} }\\[1em]
\trulename{buff}
\inferrule{\lvert \tilde{\vval} \rvert = k}
{\G ,a{:}\m{ch}(\sigma,n)
  \vdash_{\{a\}}\buff{\tilde{\vval}}{a}{\sigma, n}  \ts \astyOpenbuf{\tychana}{k}{n}}
\end{array}
\]
In contrast with the types and rules in Figure~\ref{fig:typing}, 
${\tynew{\tychana}T}$ and $\tyopenbuffer{\tychana}$ are replaced by
$\astynew{\tychana}{n}T$ and ${\astyOpenbuf{\tychana}{k}{n}}$,
respectively, where $k$ stands for the number of elements in the
buffer.

Liveness and safety of types are defined as in
\S~\ref{sub:ty-live-safe}, with two extra rules for the definition of
type barbs, pertaining to buffers. In particular we need barbs for writing to
a non-full buffer ($P\barb{\astyBufSend{\tychana}}$) and reading from
a non-empty buffer ($P\barb{\astyBufRcv{\tychana}}$), combined with
  the following additional rules:
\[
\begin{array}{c}
\inferrule[]
{\buflen{\tilde{\vval}} < n}
{\buff{\tilde{\vval}} {a}{\sigma,n} \barb{\astyBufSend{\tychana}}}
  \quad
\inferrule[]
{\buflen{\tilde{\vval}} \geq 1}
{\buff{\tilde{\vval}}{a}{\sigma,n} \barb{\astyBufRcv{\tychana}}}
\\[5mm]
\inferrule
{P\barb{\ov{a}}\quad Q\barb{\astyBufSend{\tychana}}}
{P\parr Q\barb{[a]}}
\quad
\inferrule
{P\barb{\astyBufRcv{\tychana}} \quad \pi_i\barb{a}}
{P\parr \sel{\pi_i ; Q_i}{i \in I} \barb{[a]}}
\end{array}
\]
The barbs for asynchronous types, $T\barb{o}$, 
are given below:
\[
\inferrule
{k < n}
{{\astyOpenbuf{\tychana}{k}{n}} \barb{\astyBufSend{\tychana}}}
\quad
\inferrule
{k \leq 1}
{{\astyOpenbuf{\tychana}{k}{n}} \barb{\astyBufRcv{\tychana}}}
\quad
\inferrule
{
  T\barb{\ov{\tychana}}
  \quad
  T'\barb{\astyBufSend{\tychana}}}
{T\parr T'\barb{\labsync{\tychana}}}
\quad
\inferrule
{
  T\barb{{\astyBufRcv{\tychana}}}
      \quad \typrefix_i \barb{\tychana}
}
{T\parr \tbr{}{\typrefix_i ; S_i}{i \in I} \barb{\labsync{\tychana}}}
\]

\paragraph{Verification of Types} The changes to the semantics of types are straightforward. It is based
on the LTS of \S~\ref{sub:sync-ty-sem}, where rule \ltsrulename{new}
and \ltsrulename{buf} are replaced by their counterparts below, and
four additional rules \ltsrulename{in-b}, \ltsrulename{out-b},
\ltsrulename{push} and \ltsrulename{pop}.
\[
\begin{array}{c}
    \ltsrulename{new}\ \  
  {\astynew{\tychana}{n}T}\tra{\tau}
  {(\nu \tychana)(T \parr {\astyOpenbuf{\tychana}{0}{n})}}
  \quad
  \ltsrulename{buf}\ \
  {\astyOpenbuf{\tychana}{k}{n}}\tra{\labclosedual{\tychana}}
  {\tyclosedbuffer{\tychana}}
  \end{array}
\]
\[
\begin{array}{c}
    \ltsrulename{in-b}
  \inferrule
  {k < n}
  {{\astyOpenbuf{\tychana}{k}{n}}\tra{\astyBufSend{\tychana}}{\astyOpenbuf{\tychana}{k+1}{n}}}
  \quad
  \ltsrulename{out-b}
  \inferrule
  {k \geq 1}
  {{\astyOpenbuf{\tychana}{k}{n}}\tra{\astyBufRcv{\tychana}}{\astyOpenbuf{\tychana}{k-1}{n}}}
\end{array}
\]
\[
\begin{array}{c}
    \ltsrulename{push}
  \inferrule
  {\typesemF{T} \semty{\labtysnd{\tychana}} \typesemF{T'} 
    \quad
    \typesemF{S} \semty{\astyBufSend{\tychana}} \typesemF{S'}
  }
  {{T \parr S}\tra{\labsync{\tychana}}
    {T'\parr S'}}
  \quad
  \ltsrulename{pop}
  \inferrule
  {\typesemFone{T} \semty{\astyBufRcv{\tychana}} \typesemFone{T'} 
    \quad
    \typesemFtwo{S} \semty{\labtyrcv{\tychana}} \typesemFtwo{S'}
  }
  {{T \parr S}\tra{\labsync{\tychana}}
    {T'\parr S'}}
\end{array}
\]

Observe that since types abstract away from values and channels are
attributed a unique payload type, the semantics does not model message
ordering.

With all the technical machinery in place for the \emph{bounded}
asynchronous setting, we replicate our main results. The proofs are
essentially identical to those in the synchronous setting.
  Indeed, asynchrony affects our analysis only in the size of the models
to be checked (larger buffer sizes give larger LTSs). The
symbolic semantics executes the types up-to a limited number of
channels, which is orthogonal to the number of message a buffer can
store, cf.\ Figure~\ref{fig:type-sym-sem}.

\begin{theorem}[Decidability -- Asynchrony]\label{thm:decidability-aty}
  For all $\EqTypes$ s.t.\ $\tytyJudgeE{\EqTypes}$, it is decidable
  whether or not $\EqTypes$ is $k$-live (resp. channel safe), for any
  $k \geq 0$.
\end{theorem}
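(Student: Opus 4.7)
The plan is to mirror the proof of Theorem~\ref{thm:decidability-ty} from the synchronous setting, adapting it to accommodate the new runtime structures introduced by bounded asynchrony (namely the buffers $\astyOpenbuf{\tychana}{k}{n}$ and their associated transitions). The overall strategy remains: (i) establish an asynchronous analogue of the Finite Control Lemma (Lemma~\ref{lem:finite-lts}), (ii) observe that $k$-liveness and $k$-channel safety are decidable predicates over any finite LTS, and (iii) combine these to obtain decidability.

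For step (i), I would restate and re-prove finite control for the asynchronous symbolic LTS of \S~\ref{sec:asynchrony}. The crucial observation is that fencing constrains the set of reachable terms \emph{up to structural congruence} in exactly the same way as in the synchronous case: the argument from Lemma~\ref{lem:finite-lts} bounding the number of occurrences of each type variable $\typevara$ in the maximal unfolding up-to a set $\symsemCur$ of at most $k$ names is entirely syntactic, and depends only on rule \symrulename{def} together with the rules \symrulename{\resoneless}--\symrulename{\restwogeq}, which are untouched by the move to asynchrony. Hence the ``process part'' of reachable terms inhabits a finite set of equivalence classes. To close the gap, I would then argue that the ``buffer part'' contributes only finitely many additional states: since each bound $n$ appearing in a type $\astynew{\tychana}{n}T$ is a numeric literal fixed by the program, a buffer $\astyOpenbuf{\tychana}{k}{n}$ ranges over at most $n+2$ configurations (values of $k \in \{0, \dots, n\}$ plus the closed state $\tyclosedbuffer{\tychana}$). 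Because the set of channels that can be simultaneously active under the symbolic semantics is bounded by fencing together with the parameter $k$, the product of process-configurations and buffer-configurations remains finite.

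For step (ii), I would note that the asynchronous barb predicates added in \S~\ref{sec:asynchrony} (namely $\astyBufSend{\tychana}$ and $\astyBufRcv{\tychana}$, and the corresponding $\labsync{\tychana}$ rules using them) are syntactic checks on terms of the finite-state LTS, hence $k$-liveness (Definition~\ref{def:type-live}) and $k$-channel safety (Definition~\ref{def:type-safe}), being reachability-plus-liveness queries of the form ``from every reachable state satisfying some barb predicate, some synchronisation barb is eventually reachable'', reduce to standard model-checking problems on finite Kripke structures. Step (iii) is then immediate: combine the finite control lemma with the decidability of these properties over finite LTSs.

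The main obstacle I expect is the finite control argument in the presence of buffers: one must be careful to verify that asynchronous sends and receives do not allow terms to accumulate unboundedly (for example, by repeatedly spawning buffered channels through recursion). This is avoided precisely because fencing restricts which recursive calls may be unfolded under rule \symrulename{def}, and because newly created channels via $(\nu \tychana)T$ are only added to $\symsemCur$ while $|\symsemCur|<k$, so the total number of live buffers is bounded by $k$ at any reachable symbolic state. Once this is made precise, the remaining work is routine and parallels \S~\ref{sub:proof-types}.
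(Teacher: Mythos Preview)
Your proposal is correct and follows essentially the same approach as the paper: both argue that the synchronous finite-control argument (Lemma~\ref{lem:finite-lts}) carries over unchanged because the symbolic rules \symrulename{def} and \symrulename{\resoneless}--\symrulename{\restwogeq} are untouched, and that the bounded buffers $\astyOpenbuf{\tychana}{k}{n}$ contribute only finitely many additional states per channel. One minor imprecision: your claim that ``the total number of live buffers is bounded by $k$'' is not quite how the argument goes---buffers may be created for names outside $\symsemCur$, but such names (and their buffers) are garbage-collected via structural congruence exactly as in the synchronous proof, so the bound on distinct channels comes from the finite-unfolding argument rather than directly from $|\symsemCur| \leq k$.
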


\begin{theorem}[Process Channel Safety and Liveness -- Asynchrony]
\label{thm:a:safety:liveness} 
Suppose $\G \vdash \prog \ts \EqTypes$. 
\begin{enumerate}
\item If $\EqTypes$ is channel safe, then $\prog$ is channel safe.
\item If $\EqTypes$ is live and either 
$\prog\in \convm$, 
$\prog\not\in \ic$ or 
$\prog\in \ac$, then $\prog$ is live.
\end{enumerate}
\end{theorem}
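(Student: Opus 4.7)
The plan is to replay the synchronous development of \S\ref{subsec:typesafety}--\S\ref{subsec:liveness} with buffered semantics, exploiting the observation (already noted just before the statement) that the symbolic semantics is parameterised by the bound $k$ on the number of live channels, not by buffer sizes, so fencing still yields finite control in the asynchronous LTS. The two main ingredients we must re-establish are (i) an asynchronous subject reduction, adapting Theorem~\ref{thrm:subred}, and (ii) an asynchronous barb correspondence, adapting Lemma~\ref{lem:barb}, to the new set of barbs that include $\astyBufSend{\tychana}$ and $\astyBufRcv{\tychana}$.

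For \emph{subject reduction}, I would proceed by induction on the reduction $P \tra{} P'$. The only genuinely new cases are \rulename{out} and \rulename{ina}: a send on $c$ with a non-full buffer is simulated on types by firing $\tysend{c}$ in parallel with $\astyBufSend{c}$ via the new rule \ltsrulename{push} (and symmetrically for \rulename{ina} via \ltsrulename{pop}). The extended typing rule \trulename{buff} tracks the current occupancy $k$ and bound $n$, so one can check that typing is preserved exactly as in the synchronous case, with the buffer's type $\astyOpenbuf{\tychana}{k}{n}$ being replaced by $\astyOpenbuf{\tychana}{k{+}1}{n}$ or $\astyOpenbuf{\tychana}{k{-}1}{n}$.

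Given subject reduction and the extended barb rules, part~(1) follows by a direct transcription of the proof of Theorem~\ref{thm:chansafe}: if $\prog \tra{}^* (\nu \tilde c) Q$ with $Q\barb{\cclbarb{a}}$, then by the asynchronous Lemma~\ref{lem:barb} the derived type $T$ satisfies $T\wbarbt{\cclbarb{a}}$; if $Q\wbarb{\clbarb{a}}$ or $Q\wbarb{\ov{a}}$, we would similarly obtain $T\wbarbt{\clbarb{a}}$ or $T\wbarbt{\ov{a}}$, contradicting channel safety of $\EqTypes$. For part~(2), I would case-split on the three classes. For $\prog\in\convm$, the argument of Proposition~\ref{pro:termination} lifts verbatim once we note that the new asynchronous steps preserve the existence of a terminating path. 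For $\prog\not\in\ic$, Proposition~\ref{pro:finite:liveness} transfers because its proof only manipulates barbs and finite prefixes of traces, both of which behave identically in the buffered setting. For $\prog\in\ac$, Theorem~\ref{theorem:liveness} carries over after redoing Proposition~\ref{prop:starprop} with the enlarged barb set: the key clause ``$\MAPAST P \wbarb{o}$ implies $T\wbarbt o$'' must now account for $o\in\{\astyBufSend{\tychana},\astyBufRcv{\tychana}\}$.

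The main obstacle, and the only point where the synchronous argument does not literally apply, is the barb correspondence in the presence of buffers. In the synchronous world a $\labsync{a}$ barb of a process is matched one-to-one by a $\labsync{a}$ barb of its type; asynchronously, a program synchronisation on $a$ is decomposed into two steps at both the program and type levels (send-into-buffer followed by read-from-buffer), so the correspondence must be stated at the level of these intermediate barbs and then lifted to full synchronisations using the new type-level rules \ltsrulename{push} and \ltsrulename{pop}. I expect the cleanest formulation to be a buffer-indexed invariant relating $P\barb{\ov a}$ with $T\barb{\ov a}$ and $\buff{\tilde v}{a}{\sigma,n}\barb{\astyBufSend{a}}$ with $\astyOpenbuf{a}{|\tilde v|}{n}\barb{\astyBufSend{a}}$, from which the $\labsync{a}$-case follows by combining the two halves. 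Once this invariant is in place, the rest of the liveness argument reduces to the same reasoning about $\ic$, $\ac$, and $\convm$ as in \S\ref{sec:liveness:challenges}.
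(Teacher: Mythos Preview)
Your proposal is correct and follows the same route as the paper: lift Theorem~\ref{thrm:subred}, Lemma~\ref{lem:barb}, Theorem~\ref{thm:chansafe}, Propositions~\ref{pro:termination} and~\ref{pro:liveness}, and Theorem~\ref{thm:liveness} to the buffered setting, observing that the only new reduction cases are \rulename{out}/\rulename{ina} (matched by \ltsrulename{push}/\ltsrulename{pop}) and the only new barbs are the buffer barbs, which are defined uniformly at both process and type level. The paper's own proof is a one-line remark that ``the semantics are essentially isomorphic \ldots\ the proofs are essentially identical with the synchronous cases''; your write-up is simply a more explicit unpacking of that remark, and your ``main obstacle'' paragraph slightly overstates the difficulty---because the asynchronous barb rules are added symmetrically on both sides, the barb correspondence of Lemma~\ref{lem:barb} extends pointwise without needing a new buffer-indexed invariant.
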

\noindent With the revised semantics, the program
\[
\begin{array}{c}
\pin{\{P(x,y)\}}{\newch{x{:}\m{int},1};\newch{y{:}\m{int},1};P\langle x , y\rangle}
\end{array}
\]
is correctly deemed as live, with the typing given by:
\[
\{\typevara_{P}(x,y) =
(\tsend{x}{};\trecv{y}{} \parr \tsend{y}{};\trecv{x}{}) \} \,
\m{in}\,(\m{new}^1 x)(\m{new}^1 y)\typevara_{P}\langle x , y \rangle
\]

 \section{Implementation}\label{sec:impl}

We have implemented our static analysis as a verification tool-chain consisting
of two parts: First, we analyse Go source code and infer behavioural types
(\S~\ref{sect:core_typ}) based on a program's usage of concurrency
primitives.
The types are passed to a tool that implements the verification outlined in
\S~\ref{sec:types}, checking bounded liveness and channel safety of the types.
An outline
of our verification tool chain is shown in
\figurename~\ref{fig:workflow}.

\begin{figure}[th!]
\begin{tikzpicture}[
    tool/.style={draw,text width=7em,text centered},
    arrowlabel/.style={inner sep=0,fill=white},
    comment/.style={text width=\linewidth-10em,anchor=north west,
    fill=gray!10     }
  ]
  \scriptsize\sffamily
  \node (gong) [tool] {\checktool};
  \node (dingo) [rectangle split,rectangle split parts=2,below=0.8 of gong,tool]
        {\infertool\nodepart{two}\scriptsize{\sf go/ssa} package};
  \node (source) [below=0.7 of dingo]
        {Go source code};
  \path [draw,->,>=latex] (source) -- node [arrowlabel] {Load \texttt{main()}} (dingo);
  \path [draw,->,>=latex] (dingo)  -- node [arrowlabel] {Behavioural types} (gong);
  \node [rectangle split,rectangle split parts=2,right=0.3 of gong.north east,comment]
        {\checktool\hfill\textit{written in \textbf{Haskell}}\nodepart{two}
         The tool checks input behavioural types for $k$-liveness
         and $k$-channel safety.};
  \node [rectangle split,rectangle split parts=2,right=0.3 of dingo.north east,comment]
        {\infertool\hfill\textit{written in \textbf{Go}}\nodepart{two}
         The tool loads source code, type-checks and builds SSA IR using
         \texttt{go/ssa} package, then extracts communication from the SSA IR as
         behavioural types.};
\end{tikzpicture}
  \caption{Workflow of our verification tool chain.\label{fig:workflow}}
\end{figure}

\paragraph{Type Inference}  Our type inference tool \infertool is written
in Go, using the
\texttt{go/ssa}\footnote{\url{http://golang.org/x/tools/go/ssa}} package from Go
project's extra tools. The
package builds Go source code in Static Single Assignment (SSA)
representation, and provides an API to access the resulting SSA IR
programmatically. Starting from the program entry point, i.e.\ the
\lstgo{main()} function in the \lstgo{main} package, we transform the
SSA IR into a system of type equations $\EqTypes$ by converting each
SSA block into an individual type equation. The analysis and
conversions are context-sensitive, for example, channels created in
different instances of a function are different, and loops are
unrolled if it is possible to determine the bounds statically.
We note that our analysis is agnostic wrt. aliasing since we do not rely on
linearity of channels.
In addition to inference, our tool can also check for trivial
conditionals that do not belong to any of the three classes of programs
defined in \S~\ref{sec:properties}.

\paragraph{Verification} Our proof-of-concept verification tool
\checktool, written in Haskell, inputs a system of type equations
$\EqTypes$ representing a Go program's concurrent behaviour and
performs liveness and channel safety checks on the bounded symbolic
semantics. Our representation of $\EqTypes$ makes use of the {\tt
  unbound} package~\cite{unbound} to deal with the binding structure
of types.
First, it checks if $\EqTypes$ is fenced. If so, we generate all
$\symsemI{}{k}$-reachable terms, where $k$ is
heuristically-computed. Finally, each of these terms are checked for
$k$-liveness and $k$-channel safety by identifying their barbs and
successors.

\subsection{Evaluation}\label{sec:eval}
We tested our tool-chain on the examples from the paper, from works on
static deadlock detection in Go~\cite{NY2016},
on open-source Go programs from developer
guides~\cite{web:concur-patterns,web:pipelines-md5}
and GitHub~\cite{gh:concsys}, on classic concurrency
examples~\cite{book:concurrency,ewd}, and on concurrent
programs translated to idiomatic Go from from \cite{LTY2015}.

\tablename~\ref{tbl:results} summarises the experimental results.
The column ``Go programs'' shows the names of the programs.
In the columns ``Number of channels'', the number of channels given
for programs with bounded loops is precise since bounded loops are
unrolled and we can statically count the number of channels;
for programs with recursion, we count the channels that appear in the
source code.  The columns ``Time'' show the inference and verification
times in seconds.  We include a comparison with the tool
from~\cite{NY2016} to demonstrate the extra expressiveness of our
approach. If a program is ``Static'', it has no dynamic spawning of
goroutines (a requirement for the usage of the tool of~\cite{NY2016}).

{\tt forselect} is a pattern described in~\cite{web:advconc} where
an infinite {\tt for} loop and a {\tt select} statement with two cases are combined to
repeatedly receive (or send). In our example we spawn two goroutines, where
each goroutine has a for-select loop with compatible channel communication.
In the for-select loop, one of the select cases receives (or sends) a message
then continues to the next iteration of the infinite loop; the other case
breaks out of the loop upon sending (or receiving) a message from the other
goroutine so that both goroutines exit the loop together. The exit condition
is non-deterministic (because of select), but the program is both live
and safe.
{\tt cond-recur} is similar to {\tt forselect}, where one of the two goroutines
contains a for-select loop, but the other has an ordinary for-loop so that the
exit condition of the for-loop is deterministic.

\begin{table}[t]
  {\centering\scriptsize
  \setlength{\tabcolsep}{2.2pt}
  \nocaptionrule\caption{Go programs verified by our tool chain.\label{tbl:results}}
  \begin{tabular}{l|ccccc|c|cc}
  \toprule
    &
    &
    \multicolumn{2}{c}{\# chans}
    &
    &
    & Analysis
    &
    \multicolumn{2}{|c}{\cite{NY2016}}
    \\
    Examples
    & LoC       & unbuf. & buf.
    & Live
    & Safe
    & Time (ms)
    & Static & Safe
    \\
  \midrule
    {\tt sieve}~$\dagger$
    & 19
    & 2
    & 0
    & $\checkmark$
    & $\checkmark$
    & 209.55
    & $\times$
    &
    \\
    {\tt fib}~$\dagger$    & 23
    & 2
    & 0
    & $\checkmark$
    & $\checkmark$
    & 14638.4
    & $\times$
    &
    \\
    {\tt fib-async}~$\dagger$
    & 23
    & 1
    & 1
    & $\checkmark$
    & $\checkmark$
    & 32173.8
    & $\times$
    &
    \\
    {\tt fact}~$\dagger$
    & 19
    & 2
    & 0
    & $\checkmark$
    & $\checkmark$
    & 206.63
    & $\times$
    &
    \\
    {\tt dinephil}~\cite{ewd,book:concurrency}
    & 56
    & 3
    & 0
    & $\checkmark$
    & $\checkmark$
    & 646921.76
    & $\times$
    &
    \\
    {\tt jobsched}
    & 41
    & 0
    & 1
    & $\checkmark$
    & $\checkmark$
    & 48.12
    & $\times$
    &
    \\
    {\tt concsys}~\cite{gh:concsys}
    & 112
    & 2
    & 0
    & $\times$
    & $\checkmark$
    & 323.75
    & $\times$
    &
    \\
    {\tt fanin}~\cite{web:pipelines-md5,NY2016}
    & 36
    & 3
    & 0
    & $\checkmark$
    & $\checkmark$
    & 89.14
    & $\checkmark$
    & $\checkmark$
    \\
    {\tt fanin-alt}~\cite{NY2016}
    & 37
    & 3
    & 0
    & $\times^1$
    & $\checkmark$
    & 209.02
    & $\checkmark$
    & $\checkmark$
    \\
    {\tt mismatch}~\cite{NY2016}
    & 26
    & 2
    & 0
    & $\times$
    & $\checkmark$
    & 26.59
    & $\checkmark$
    & $\times$
    \\
    {\tt fixed}~\cite{NY2016}
    & 25
    & 2
    & 0
    & $\checkmark$
    & $\checkmark$
    & 24.58
    & $\checkmark$
    & $\checkmark$
    \\
    {\tt alt-bit}~\cite{Milner:1989:CC}    & 74
    & 0
    & 2
    & $\checkmark$
    & $\checkmark$
    & 405.78
    & $\checkmark$
    & $\checkmark$
    \\
    {\tt forselect}
    & 40
    & 3
    & 0
    & $\checkmark$
    & $\checkmark$
    & 31.01
    & $\checkmark$
    & $\checkmark$
    \\
    {\tt cond-recur}
    & 32
    & 2
    & 0
    & $\checkmark$
    & $\checkmark$
    & 34.08
    & $\checkmark$
    & $\checkmark$
    \\
    \bottomrule
  \end{tabular}
  {\flushleft
  \noindent
  \noindent
  ${}^{1}$: testing for channel close state is not supported in this version\\
  $\dagger$: examples that are \textit{not} finite control\\
  }
  {\scriptsize
The benchmarks were compiled with ghc 7.10.3 and go1.6.2 executed on Intel Core
i5 @ 3.20GHz with 8GB RAM.
  }}
\vspace{-5mm}
\end{table}

 \section{Related Work and Conclusion}
\label{sec:related}
\paragraph{Static Deadlock Detection in Synchronous Go}
There are two recent works on static deadlock detection for synchronous
Go~\cite{NY2016,Stadmuller16}. The work~\cite{NY2016} extracts {\em
  Communicating Finite State Machines}~\cite{Brand83} whose
representation corresponds to {\em session
  types}~\cite{THK,honda.vasconcelos.kubo:language-primitives} from Go
source code, and synthesises from them a global choreography using the
tool developed in~\cite{LTY2015}. If the choreography is well-formed,
a program does not have a (partial) deadlock.  This approach is
seriously limited due to the lack of expressiveness of (multiparty)
session types~\cite{CHY08} and its synthesis theory. The approach
expects all goroutines to be spawned before any communication happens
at runtime.  This is due to the fact that the synthesis technique
requires all session participants to be present from the start of the
global interaction, meaning that their work cannot handle most
programs with dynamic patterns, such as spawning new threads after
communication started. The analysis is also limited to unbuffered
channels and does not support asynchrony.
For instance, our prime sieve example cannot be verified by their
tool, and is in fact used to clarify the limitations of their
approach. Moreover, the work is limited to the tool implementation, no
theoretical property nor formalisation is studied in \cite{NY2016}.

The work of~\cite{Stadmuller16} uses the notion of {\em forkable
  behaviours} (i.e. a regular expression extended with a fork
construct) to capture spawning behaviours of synchronous Go programs,
developing a tool based on this approach to directly analyse Go
programs.  
Their technique is sound but has some significant
theoretical and practical limitations: (1) their analysis does not
support asynchrony (buffered channels), closing of channels or usages
of the select construct with non-trivial case bodies; (2) while their
liveness analysis (when restricted to synchrony) targets the \emph{sound}
fragment of our analysis, they are more conservative in their approach.
For instance,
the following program which is verified as live in our approach (this
program belongs to $\convm$ in our theory) is judged as a deadlock in
their approach (implemented as {\tt cond-recur} in 
Table~\ref{tbl:results}):
\[
\begin{array}{rl}
\pin{\{X(a,b) = & \pite{\expr_1}{\psend{a}{\expr_2}.X\langle a , b \rangle}{\precv{b}{z}; \zero},\\
Y(a,b) =  & \sel{\precv{a}{z}; Y\langle a , b \rangle,\ \psend{b}{};
  \zero}{}\}}{\\
& \newch{a};\newch{b};
(X\langle a,b\rangle \parr Y\langle a,b\rangle)}
\end{array}
\]
Finally, (3) it is unclear how their tool can deal with the ambiguity of
context sensitive inter-procedural analysis given their use of the
\texttt{oracle} tool and the syntactic approach taken in the implementation.

\paragraph{Behavioural Types} 
Behavioural type-based techniques 
(see~\cite{Huttel:2016} for a broad survey) 
have been developed for general
concurrent program analyses~\cite{DBLP:journals/tcs/IgarashiK04},
such as deadlock-freedom~\cite{Kobayashi06,GiachinoKL14},
lock-freedom~\cite{Kobayashi:2008,Padovani:2014}, resource
usages~\cite{DBLP:journals/lmcs/KobayashiSW06} and 
information flow analysis 
\cite{DBLP:journals/acta/Kobayashi05}. 

All of the type-based techniques above differ from ours in that we
perform an analysis on types akin to \emph{bounded} model-checking,
whereas their works take a type-checking based approach to deadlock
(or lock) freedom.   
Their techniques are sound against all possible inputs of processes, 
but often too conservative.   
Our approach is sound only for some subsets of
possible inputs, but less conservative. 
A potential limitation of their techniques is that subtle
changes in channel usage (that may not have a significant effect on a
program's outcome) can produce significantly different analysis
outcomes (see discussion in~\cite{GiachinoKL14}
and~\cite{Padovani:2014}). Moreover, the
dependency tracking can be quite intricate and hard to implement in a
real language setting.
Our fencing-based approach is more easily implemented as a
\emph{post-hoc} analysis, covering a wide range of Go programs, since
it only limits names in recursive call sites and does not explicitly
depend on the ordering of communications or on computing circularities
of channels (provided the programs are in one of the classes of
\S~\ref{sec:properties}).

The work \cite{GiachinoKL14,KL17} 
develops a deadlock detection analysis
of asynchronous CCS processes with recursion and new name
creation. The analysis is able to reason about infinite-state systems
that create networks with an arbitrary number of processes, going
beyond those of~\cite{Kobayashi06} and~\cite{Padovani:2014}.  Their
approach uses an extension of the typing system of~\cite{Kobayashi06}
as a way to extract a so-called lam term from a (typed) process. Lam
terms track dependencies between channel usages as pairs of level
names. Given a lam term, the authors develop a sound and complete
decision procedure for circularities in dependencies. By separating
this decision procedure from the type system, 
their system is able to accurately analyse deadlock-free processes that are 
not possible in~\cite{Kobayashi06} and~\cite{Padovani:2014}.

We first point out that the deadlock-freedom property of
\cite{GiachinoKL14,KL17} does not match with our notion of liveness
(which is closer to lock freedom
in~\cite{Kobayashi:2008,Padovani:2014}). For instance, their analysis
accepts program $\mathit{Fib}_{\mathit{bad}}$ in
\S~\ref{sec:sync-fib} as a deadlock-free process since a program that
loops non-productively is deadlock-free (but not lock-free).

While their analysis can soundly verify unfenced types, which is by
construction outside the scope of our work, we note that the reduction
of deadlock-freedom to circularity of lams in their work excludes some
natural communication patterns that are finite-control, which can be
soundly checked by our type-level analysis.  Consider the following
finite control program (described in~\cite{KL17}), which can be
directly interpreted as a \fgo type:
\[
\begin{array}{rl}
\pin{\{A(x,y) = & \precv{x}{};\psend{y}{};A\langle x,y\rangle ,\\
B(x,y) =  & \psend{x}{};\precv{y}{};B\langle x , y \rangle \}}{\\
& \newch{a};\newch{b};
  (A\langle a,b\rangle \parr B\langle a,b\rangle)}
\end{array}
\]
The program above consists of two threads that continuously send and
receive along the two channels $a$ and $b$. This program, despite
being finite control (and deadlock-free) is excluded by their
analysis.  As described in~\cite{KL17}, this happens due to their
current formulation of the type system assigning a finite number of
levels in recursive channel usages, which entails that finite-control
systems that use channels infinitely often (such as the one above) can be
assigned circular lams, despite being deadlock-free. Our approach, by
not relying on such notions of circularity can tackle these
finite-control cases in a sound manner.

The work of~\cite{Padovani:2014} studies a variation of 
\cite{Kobayashi:2008,Kobayashi06} 
that ensures deadlock freedom and lock
freedom of the linear $\pi$-calculus,    
with a form of channel polymorphism.  By relying on linearity, the
system in \cite{Padovani:2014} rules out many examples that are
captured by our work (although it can in principle analyse
  unfenced types). The $\mathtt{fib}$ and
$\mathtt{dining philosopher}$
examples denote patterns that are untypable in \cite{Padovani:2014}, but can be
verified in our tool.\fxnote{edited} Our tool can also verify programs
morally equivalent to most examples discussed in~\cite{Padovani:2014},
see Table~\ref{tbl:results} in \S~\ref{sec:eval}.

\paragraph{Session Types} The work on session types is another class of
behavioural typing systems that rely crucially on linearity in channel
types to ensure certain compatibility properties of structured
communication between two (binary
\cite{honda.vasconcelos.kubo:language-primitives}) or more (multiparty
\cite{CHY08}) participants.  Progress (deadlock-freedom on linear
channels) is guaranteed within a \emph{single} session, but not for
multiple interleaved sessions. Several extensions to ensure progress
on multiple sessions have been proposed,
e.g.~\cite{CDYP2014,CDY07,CarboneDM14}.  Our main examples are not
typable in these systems for the same reasons described in the above
paragraph.  Their systems do not ensure progress of shared names,
which are key in our examples.

A different notion of liveness called request-response 
is proposed in \cite{DeboisHSY15} based on binary session types.  Their 
liveness means that when a particular label of a
branching type (a request) is selected, a set of other labels (its
responses) is eventually selected.  The system requires \emph{a priori}
assumptions that a process must satisfy lock-freedom and annotations 
of response labels in types.

The works of \cite{CF10,Wadler2012,DBLP:journals/mscs/CairesPT16}
based on linear logic ensure progress in the presence of multiple
session channels, but the typing discipline disallows modelling of
process networks with cyclic patterns (such as prime sieve).  In these
works, progress denotes both deadlock and lock-freedom in the sense
of~\cite{Padovani:2014}. However, to ensure logical consistency
general recursion is disallowed.  In the presence of general recursion
\cite{DBLP:conf/esop/ToninhoCP13}, progress is weakened; ensuring all
typable processes are deadlock-free but not necessarily lock-free. The
work of \cite{DBLP:conf/tgc/ToninhoCP14} studies a restricted form of
corecursion that ensures both deadlock- and lock-freedom in the
context of logic-based processes. However, since the typing discipline
ensures termination of all computations it is too restrictive for a
more practical setting such as ours.

\paragraph{Effect Systems} 
The work~\cite{Nielson1994} introduces a type and effect system for a
fragment of concurrent ML (including dynamic channel creations
and process spawning) with a predicate on types which guarantees that
typed programs are limited to a finite communication topology.
Their types are only used to check whether a program has a
\emph{finite} communication topology, that is, whether a program uses
a bounded number of channels and a bounded number of processes. No
analysis wrt.\ safe or live communication is given, which is the
ultimate goal of our work.

 \paragraph{Conclusion and Future Work}
\label{sec:conclusion}
Since the early 1990s, 
behavioural type theories which formalise ``types as concurrent processes''
\cite{DBLP:journals/tcs/IgarashiK04} have been studied actively 
in models for concurrency \cite{Huttel:2016}. 
Up to this point, there have been few
opportunities to apply these techniques directly to a real
production-level language. The Go language opens up 
such a possibility. 
This work proposes a static verification framework for liveness and
safety in Go programs based on a bounded execution of \emph{fenced}
behavioural types. 
We develop a tool 
that analyses Go code 
by directly inferring behavioural types
with no need for additional annotations. 

In future work we plan to extend our approach to account for channel
passing, and also lock-based concurrency control, enabling us to
verify \emph{all} forms of concurrency present in Go.  The results of
\S~\ref{sec:types} suggest that it should be possible to encode our
analysis as a model checking problem, allowing us to: (1) exploit the
performance enhancements of state of the art model checking
techniques; (2) study more fine-grained variants of liveness; (3)
integrate model checking into the analysis of conditionals to, in some
scenarios, decide the program class (viz.~\S~\ref{subsec:liveness}).
Another interesting avenue of future work is to explore the
integration of type-checking based approaches into our framework,
including those aimed at liveness and 
termination-checking (such as \cite{DBLP:journals/acta/Kobayashi05, 
Kobayashi:2008,Honda:2007,DBLP:conf/concur/DemangeonHS10}). 
These techniques eliminate false positives arising due to
issues on divergence of processes,     
which are related to our 
classification of \S~\ref{sec:properties}, hence would be useful 
to identify a set of processes which conform, 
e.g.~Proposition~\ref{pro:termination}. 
This would enable a more fine-grained
analysis, taking advantage of the strong soundness properties of this
line of work. Moreover, the latter mentioned works could be applied in
order to soundly approximate the program classification studied in
\S~\ref{subsec:liveness}.

\acks
We gratefuly acknowledge Naoki Kobayashi for finding flaws in
Sections~\ref{sec:types} and~\ref{sec:properties} in earlier versions
of this work; as well as for detailed comments on
Sections~\ref{sec:impl} and~\ref{sec:related}.
The present version aims at correcting these errors. In particular, we
have revised the definition of liveness and safety for types
(Definitions~\ref{def:type-live} and~\ref{def:type-safe}) and removed
theorems related to soundness of our analysis of general liveness and
safety for types. We have revised several remarks in the comparison
between our work
and~\cite{DBLP:journals/acta/Kobayashi05,GiachinoKL14,Kobayashi:2008,Kobayashi06}
(\S~\ref{sec:impl} and~\ref{sec:related}).

We also would like to thank Elena Giachino, Raymond Hu and Luca Padovani 
for fruitful
discussions on this work, as well as the anonymous referees for their
comments.
This work is partially supported by EPSRC EP/K034413/1,
EP/K011715/1, EP/L00058X/1 and EP/N027833/1; and by EU FP7 
612985 (UPSCALE) and COST Action IC1405 (RC).

\balance
\bibliographystyle{abbrv} \bibliography{bib,session.bib}

\newpage 

\appendix
\section{Appendix for Section~\ref{sec:types}}

\subsection{Decidability}

\begin{definition}[Size of $T$]\label{def:size-of-T}
  Define $\tysize{T} \defi  \tysizeG{T}{\emptyset}$, where
  \[
  \tysizeD{T} \defi
  \begin{cases}
    0   & \text{if } T = \zero
    \\
    1 + \tysizeD{T'} & \text{if } T = \alpha ; T',
        \alpha \in \{\typrefix, \End[\tychana] \}
    \\
    1 + \tysizeD{T'} & \text{if } T = (\nu \tychana) T' \text{ or }  {\tynew{\tychana}T'}
    \\
    \sum_{i \in I} \tysizeD{T_i}
    & \text{if } 
    T = {\tch{}{T_i}{i\in I}} \text{ or }
    {\tbr{}{T_i}{i\in I}}
    \\
    \tysizeD{T_1} + \tysizeD{T_2}
    & \text{if } T = T_1 \parr T_2
    \\
    \tysizeG{
      T'\subs{\tilde{\tychana}}{\tilde{\tyvar}} 
    }{\tyunfenv \cup \{\typevara\}}
    & \text{if } 
    T = \tdefcall{\typevara}{\tilde{\tychana}},
    \typevara \notin \tyunfenv,
        \typevara(\tilde{\tyvar}) = {T'}         \\
    0
    & \text{if }   T = \tdefcall{\typevara}{\tilde{\tychana}},  \typevara \in \tyunfenv
  \end{cases}
  \]
\end{definition}

\begin{definition}[Limited Unfolding]\label{def:unfold-T}
  Let $\tyunfenv_k$ be the function from $\domain{\EqTypes}$ to
  $\naturals$ that always returns $k$.
    The $k^{th}$ unfolding of $T$ wrt. $\tilde{\tychana}$, written
  $\tyunfX{k}{\tilde{\tychana}}{T}$
  is given by $\tyunfX{\tyunfenv_k}{\tilde{\tychana}}{T}$, defined below.
    \[
  \tyunf{T} =
  \begin{cases}
    \alpha ; \tyunf{T'} & \text{if } T = \alpha ; T'
    \text{ and }
    \\
    &
    \quad
    \alpha \in \{\tau, \tsend{\tychana}{}, \trecv{\tychana}{}, \End[\tychana] \}
    \\
    0   & \text{if } T = \zero
    \\
    (\nu \tychana)
    \tyunfX
    {\tyunfenv}
    {\tilde{\tychana}}
    {T'} & \text{if } T = (\nu \tychana) T'
    \\
        \tynew{\tychana}
    \tyunfX
    {\tyunfenv}
    {\tilde{\tychana}}
    {T'} & \text{if } T = \tynew{\tychana} T'
    \\
    {\tch{}{\tyunf{T_i}}{i\in I}}
        & \text{if } 
    T = {\tch{}{T_i}{i\in I}}
    \\
    {\tbr{}{\tyunf{T_i}}{i\in I}}
    & \text{if } 
    T = {\tbr{}{T_i}{i\in I}}
    \\
    \tyunf{T_1} \parr \tyunf{T_2} 
    & \text{if } T = T_1 \parr T_2
    \\
    \tyunfX
    {\tyunfenv [ \typevara \mapsto i-1]}
    {\tilde{\tychana}}
    {T' \subs{\tilde{\tychanb}}{\tilde{\tyvar}}}
    & 
    \text{if } 
    T = \tdefcall{\typevara}{\tilde{\tychanb}},
    \tilde{\tychanb} \cap \tilde{\tychana} \neq \emptyset,
        \\
    & \quad  \typevara(\tilde{\tyvar}) = {T'} \in \EqTypes,
        \tyunfenv(\typevara) = i > 0
            \\
    \tdefcall{\typevara}{\tilde{\tychanb}}
    & 
    \text{if } \tyunfenv(\typevara) = 0
    \text{ or }
    \tilde{\tychanb} \cap \tilde{\tychana} = \emptyset
          \end{cases}
  \]
\end{definition}

Lemma~\ref{lem:ty-prec-ind} shows that there cannot be an infinitely
decreasing sequence $\tilde{\tychana}_1 \tytysucc \cdots \tytysucc
\tilde{\tychana}_k \tytysucc \cdots$, when unfolding a type.
We note that due to our convention that in $\typevara(\tilde{x}) = T$,
we have $\tilde{x} \subseteq \fv(T)$, a type $T$ can only remember the
names it received as parameters.
\begin{lemma}\label{lem:ty-prec-ind}
  Given a chain
  \[
  \tilde{\tychana}_1 \tytysucc \tilde{\tychana}_2  \tytysucc \cdots \tytysucc \tilde{\tychana}_k
  \qquad\quad
  k > \lvert \tilde{\tychana}_1 \rvert
  \]
    such that
  $\forall 1 \leq i < k \qst \lvert \tilde{\tychana}_i \rvert = \lvert
  \tilde{\tychana}_{i+1} \rvert$,
    and
    $\forall 1 \leq i < k \qst 
  \tychana \in \tilde{\tychana}_i \land \tychana \notin \tilde{\tychana}_{i+1} 
  \implies
  \forall i < j \leq k \qst  \tychana \notin \tilde{\tychana}_{j}.
  $
      There exists $n \leq \lvert \tilde{\tychana}_1 \rvert$ such that $\forall j \geq n \qst
  \tilde{\tychana}_1 \cap \tilde{\tychana}_{j} = \emptyset$.
\end{lemma}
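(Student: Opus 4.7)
The plan is to track, at each position $i$ in the chain, which names of $\tilde{\tychana}_1$ still occur in $\tilde{\tychana}_i$, and to show this set shrinks by at least one entry per step until it becomes empty. Writing $\tilde{\tychana}_1 = y_1 \cdots y_N$ with $N = |\tilde{\tychana}_1|$, the target is to find the first index $n$ at which no $y_j$ remains, and verify that $n$ lies within the chain.

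First, I would unpack one step of the chain. By definition of $\tytyprec$, the relation $\tilde{\tychana}_{i+1} \tytyprec \tilde{\tychana}_i$ says there is some $m_i \geq 1$ such that $\tilde{\tychana}_{i+1}$ is obtained from $\tilde{\tychana}_i$ by deleting its first $m_i$ entries and appending $m_i$ new entries that, by condition (3) of $\tytyprec$, are disjoint from $\tilde{\tychana}_i$. So each step removes at least one entry from the front, and appended entries are fresh with respect to the current sequence. Next, I would establish by induction on $i$ the structural invariant: the names of $\tilde{\tychana}_1$ still present in $\tilde{\tychana}_i$ form a \emph{prefix} of $\tilde{\tychana}_i$, of the form $y_{s_i+1} \cdots y_N$ for some $0 \leq s_i \leq N$, with $s_1 = 0$. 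The inductive step splits on whether the $m_i$ deleted entries fit inside the surviving $y$-prefix or exhaust it; in either case, two forms of disjointness together rule out reintroducing any $y$: the local $\tytyprec$-freshness forbids an appended name from equalling a $y$ still in $\tilde{\tychana}_i$, while the lemma's global ``no return'' hypothesis forbids it from equalling a $y$ already dropped at some earlier step. Hence $s_{i+1} \geq s_i + m_i \geq s_i + 1$ whenever $s_i < N$.

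Finally, since $s_i$ is non-decreasing, bounded above by $N$, and strictly increases while below $N$, it attains $N$ at some index $n$ bounded in terms of $|\tilde{\tychana}_1|$, which lies within the chain thanks to the hypothesis $k > |\tilde{\tychana}_1|$; from that index onward, $\tilde{\tychana}_1 \cap \tilde{\tychana}_j = \emptyset$. The main obstacle is precisely the structural invariant: the freshness guarantee of $\tytyprec$ is only local, so without the global no-return hypothesis an appended name at step $i+1$ could in principle collide with a $y$ dropped many steps earlier, breaking the clean prefix structure; the lemma's hypothesis is exactly what closes this loophole. Once the invariant is in hand, the counting argument is immediate.
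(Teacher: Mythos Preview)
Your proposal is correct and follows essentially the same approach as the paper: both observe that a $\tytyprec$-step deletes a nonempty prefix and appends names fresh with respect to the current sequence, then use the no-return hypothesis to prevent any element of $\tilde{\tychana}_1$ from reappearing, concluding by a worst-case count of at least one original name lost per step. Your version is more carefully formalised via the explicit prefix invariant and the counter $s_i$, whereas the paper argues the worst case directly and rather informally (and, like your hedged ``bounded in terms of $|\tilde{\tychana}_1|$'', does not quite match the stated bound $n \leq |\tilde{\tychana}_1|$, giving instead $i > |\tilde{\tychana}_1|$).
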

\begin{proof}
    Relation
  $\tilde{\tychana}
  \tytyprec
  \tilde{\tychanb}$ implies that we must have
  \[
  \tilde{\tychana} = \tychana_{j+1} \cdots \tychana_k \cdot \tychanb_1 \cdots \tychanb_j
  \tytyprec
  \tychana_1 \cdots \tychana_k =  \tilde{\tychanb}
  \]
  with $k \geq 1$ and $j \geq 1$.
        
  Hence, if we remove one element at each step of the chain (the worst
  case), we obtain a chain of the form:
  \[
  \tilde{\tychana}_1 =
  \tychana_1 \cdots \tychana_k 
  \tytysucc
  \tilde{\tychana}_2 =
  \tychana_2 \cdots \tychana_k 
  \cdot
  \tychanb_1
  \tytysucc
  \cdots
  \]
  and, in general: $ \tilde{\tychana}_i = \tychana_i \cdots \tychana_k
  \cdot \tychanb_{1} \cdots \tychanb_{i-1} $ if $i \leq k$; thus we
  have $ \tilde{\tychana}_i \cap \tilde{\tychana_1} = \emptyset$ when
  $ i > k$, as expected, since by assumption we have $\forall 1 \leq i
  < k \qst \tychana \in \tilde{\tychana}_i \land \tychana \notin
  \tilde{\tychana}_{i+1} \implies \forall i < j \leq k \qst \tychana
  \notin \tilde{\tychana}_{j} $ (i.e., once $\tychana$ has been
  removed from $\tilde{\tychana}_i$ it cannot appear again further in
  the chain).
    \end{proof}

Lemma~\ref{lem:ty-bounded-variables} shows that the number of
occurrences of a type variable in the $k$\textsuperscript{th}
unfolding of $T$ is bounded by a function of the size of the syntactical
tree of $T$.

\begin{definition}[Number of Occurrences of $\typevara$ in $T$]\label{def:ocur-in-T}
  \[
  \tyocc{T}{\typevara} =
  \begin{cases}
    0   & \text{if } T = \zero
    \\
    \tyocct{T'} & \text{if } T = \alpha ; T'
    \text{ and }
    \alpha \in \{\typrefix, \End[\tychana] \}
    \\
    \tyocct{T'} & \text{if } T = (\nu \tychana) T'  \text{ or } {\tynew{\tychana}T'}
    \\
    \max(\{\tyocct{T_i}\}_{i \in I})
    & \text{if } 
    T = {\tch{}{T_i}{i\in I}} \text{ or }
    {\tbr{}{T_i}{i\in I}}
    \\
    \tyocct{T_1} + \tyocct{T_2} 
    & \text{if } T = T_1 \parr T_2
    \\
    1
    & \text{if } 
    T = \tdefcall{\typevara}{\tilde{\tychana}} 
   \\
   0
    & \text{if } 
    T = \tdefcall{\typevarb}{\tilde{\tychana}} \text{ and } \typevarb \neq \typevara
      \end{cases}
    \]
\end{definition}

\begin{lemma}\label{lem:ty-bounded-variables}
  If $\tytyJudgeE{\EqTypes}$, then
  for all $\typevara(\tilde{x}) = T \in \EqTypes$,
  \[
  \forall k \geq 0 \qst   
  \tyocc{\tyunfX{k}{\tilde{x}}{T}}{\typevara} 
  \leq
  \tysize{T}^{\lvert \tilde{x} \rvert}
  \]
\end{lemma}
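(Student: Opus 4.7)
The plan is to analyse the recursive unfolding tree of $\typevara$ inside $\tyunfX{k}{\tilde{x}}{T}$, bounding its depth using the fencing hypothesis and its branching factor using the $\tyocc$ function. A case split on whether the body $T$ is single-threaded (no occurrence of $\parr$) or multi-threaded isolates the two very different reasons why the count cannot blow up.

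In the single-threaded case, a straightforward structural induction on $T$ establishes $\tyocc{T}{\typevara} \leq 1$, since the $\tyocc$ function uses $\max$ over choices and introduces addition only at $\parr$. This invariant is preserved by substitution and by the unfolding of Definition~\ref{def:unfold-T} (every sub-term encountered during the unfolding remains single-threaded), so $\tyocc{\tyunfX{k}{\tilde{x}}{T}}{\typevara} \leq 1$, which is bounded by $\tysize{T}^{\lvert \tilde{x} \rvert}$ for any non-trivial $T$; the degenerate $T = \zero$ case is handled separately and yields $0 \leq 0$.

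For the multi-threaded case, fencing via rules \frulename{par} and \frulename{axiom} in Figure~\ref{fig:fencing-pred-rule} forces every recursive call $\tdefcall{\typevara}{\tilde{y}}$ syntactically appearing in $T$ to satisfy $\tilde{y} \tytyprec \tilde{x}$. I would then establish the invariant that along any branch of the recursive unfolding tree of $\typevara$, consecutive argument sequences form a strictly decreasing chain under $\tytyprec$. This requires showing that $\tytyprec$ is preserved under the substitution performed at each unfolding step: if $\tilde{z} \tytyprec \tilde{x}$ and $\tilde{y} \tytyprec \tilde{x}$, then $\tilde{z}\subs{\tilde{y}}{\tilde{x}} \tytyprec \tilde{y}$, which follows by a direct calculation using the fact that the new names introduced via $\tynew{c}$ are fresh and hence unrelated to $\tilde{x}$. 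By Lemma~\ref{lem:ty-prec-ind}, any such strictly decreasing chain has length at most $\lvert \tilde{x} \rvert$ before the argument sequence becomes disjoint from $\tilde{x}$, at which point Definition~\ref{def:unfold-T} blocks further unfolding of the call (since the clause requires $\tilde{\tychanb} \cap \tilde{\tychana} \neq \emptyset$). Hence the recursive unfolding tree of $\typevara$ has depth at most $\lvert \tilde{x} \rvert$. At each level of this tree, the number of $\typevara$-occurrences grows by at most a factor of $\tyocc{T}{\typevara}$, which is easily shown by induction on $T$ to satisfy $\tyocc{T}{\typevara} \leq \tysize{T}$ (pointwise, the arithmetic operators used by $\tyocc$ are dominated by those used by $\tysize$). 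Compounding this branching bound across at most $\lvert \tilde{x} \rvert$ levels yields the required $\tysize{T}^{\lvert \tilde{x} \rvert}$.

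The main obstacle will be the substitution invariance of $\tytyprec$ together with a careful verification that the hypotheses of Lemma~\ref{lem:ty-prec-ind} are met throughout the unfolding tree --- in particular the condition that once a name from $\tilde{x}$ is dropped from an argument list, it does not reappear further down the chain. The freshness of names bound by $\tynew{c}$ is crucial here: these names cannot syntactically equal any element of $\tilde{x}$, so introducing them into subsequent argument lists does not disturb the decreasing chain of $\tilde{x}$-names.
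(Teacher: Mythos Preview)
Your approach is essentially the paper's: bound the depth of the unfolding tree of $\typevara$-calls by $\lvert \tilde{x} \rvert$ using fencing and Lemma~\ref{lem:ty-prec-ind}, bound the branching factor by $\tysize{T}$, and multiply. The paper packages the same two observations as a short contradiction argument rather than your explicit depth-times-branching calculation, and it does not separate out the single-threaded case, but the content is the same. Your explicit discussion of why the hypotheses of Lemma~\ref{lem:ty-prec-ind} are met (substitution preservation of $\tytyprec$, freshness of $\tynew{}$-bound names ensuring dropped names do not reappear) is more careful than the paper's sketch.

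One point to tighten: your single-threaded case assumes that if $T$ itself contains no $\parr$ then ``every sub-term encountered during the unfolding remains single-threaded''. This is not literally true, because Definition~\ref{def:unfold-T} also unfolds calls $\tdefcall{\typevarb}{\tilde c}$ to \emph{other} equations, and the body of $\typevarb$ may well contain $\parr$. Fencing handles this via rule \frulename{def-$\notin$}, which carries the $(\tytyRes;\tytyUnr)$ environment through $\typevarb$'s body, so any $\typevara$-call reached under a $\parr$ (even one introduced by $\typevarb$) still satisfies the $\tytyprec$ constraint. The clean formulation is therefore not a global case split on $T$, but a per-occurrence one: each $\typevara$-call in the unfolding either lies on a $\parr$-free path (and so does not increase $\tyocc{\cdot}{\typevara}$, since $\tyocc{\cdot}{\typevara}$ takes $\max$ there) or lies under some $\parr$ (and then $\tytyprec$ applies and Lemma~\ref{lem:ty-prec-ind} bounds its depth). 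With that adjustment your argument goes through and coincides with the paper's.
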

\begin{proof}
  To increase the number of occurrences of a variable, we must have
  parallel processes, which implies that we must strictly decrease the
  number of free names hence eventually reach process that do no use
  any variables in $\tilde{x}$.
  
    It is a simple observation that
  \[
  \forall k \geq 0 \qst   
  \tyocc{\tyunfX{k}{\tilde{x}}{T}}{\typevara} 
  \leq
  \tysize{T}^{k}
  \]
  Hence, the results holds trivially for any $k < {\lvert
    \tilde{x} \rvert}$.

  Assume that the result does not hold by contradiction and take
  $k \geq {\lvert
    \tilde{x} \rvert}$ such that
  \[ 
  \tyocc{\tyunfX{k}{\tilde{x}}{T}}{\typevara} 
  \leq
  \tysize{T}^{\lvert \tilde{x} \rvert}
    \quad \text{and} \quad
    \tyocc{\tyunfX{k+1}{\tilde{x}}{T}}{\typevara} 
  >
  \tysize{T}^{\lvert \tilde{x} \rvert}
  \]

  For the number of occurrences of $\typevara$ to increase strictly,
  we must have 
  \[ 
  T = \tycontxt\left[ 
  \tycontxt_1 [ \tdefcall{\typevara}{\tilde{\tychanb}} ]
  \parr
  \tycontxt_2 [ \tdefcall{\typevara}{\tilde{\tychanc}} ]
  \right]
  \]
  since $\tytyJudgeX{\tilde{x}}{\circ}{\typevara}{T}{\circ}$, we must
  have 
  $ {\tilde{\tychanb}} \tytyprec  {\tilde{x}} $
  and
  $ {\tilde{\tychanc}} \tytyprec  {\tilde{x}} $, 
    which implies, by Lemma~\ref{lem:ty-prec-ind}, that the unfolding
  will terminate.

  In particular, we have a strictly decreasing number of names from
  ${\tilde{x}}$ further down in the tree.
    Thus after $\lvert {\tilde{x}} \rvert$ unfoldings, no names
  from ${\tilde{x}}$ will be left in each occurrences of
  $\tdefcall{\typevara}{\tilde{\tychand}}$ (i.e., ${\tilde{\tychand}}
  \cap {\tilde{x}} = \varnothing$) and $\typevara$ cannot be
  unfolded further.
    \end{proof}

\begin{lemma}\label{lem:ty-freenames-decrease}
  For all types $T$, if $T \equiv T'$ or $T \semty{} T'$, then
  $\freenames{T} \supseteq \freenames{T'}$.
\end{lemma}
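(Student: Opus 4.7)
The plan is to prove both clauses by induction on the derivation, handling structural congruence first and then appealing to it in the reduction case.

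For the structural congruence part, I would inspect each defining axiom of $\equiv$ from the list given after Figure~\ref{fig:types-sem} and observe that every one actually preserves free names on the nose: commutativity, associativity, and unit laws for $\parr$ are trivially free-name preserving; $(\nu \tychana)(\nu \tychanb)T \equiv (\nu \tychanb)(\nu \tychana)T$ binds the same pair of names; the axioms $(\nu \tychana)\zero \equiv \zero$, $(\nu \tychana)\tyopenbuffer{\tychana} \equiv \zero$ and $(\nu \tychana)\tyclosedbuffer{\tychana} \equiv \zero$ are fine because the left-hand sides have $\fn = \emptyset$ by construction; scope extrusion $T \parr (\nu \tychana)S \equiv (\nu \tychana)(T \parr S)$ uses exactly the side condition $\tychana \notin \fn(T)$; and $\alpha$-conversion preserves free names by definition. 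Contextual closure is immediate by induction, so in fact $T \equiv T' \implies \fn(T) = \fn(T')$, from which the required inclusion follows.

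For the reduction part, I would induct on the derivation of $T \semty{\alpha} T'$ using the rules of Figure~\ref{fig:types-sem} (with $\alpha$ arbitrary; the notation $\semty{}$ covers any label, including $\tau$). The axiomatic cases \ltsrulename{snd}, \ltsrulename{rcv}, \ltsrulename{tau}, \ltsrulename{end}, \ltsrulename{buf}, \ltsrulename{cld}, \ltsrulename{sel}, and \ltsrulename{bra} each discard a prefix or select a subterm, so $\fn(T') \subseteq \fn(T)$ is immediate. For \ltsrulename{new} we compute $\fn(\tynew{\tychana} T) = \fn(T) \setminus \{\tychana\} = (\fn(T) \cup \{\tychana\}) \setminus \{\tychana\} = \fn((\nu \tychana)(T \parr \tyopenbuffer{\tychana}))$. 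The inductive cases \ltsrulename{par}, \ltsrulename{com}, \ltsrulename{close}, \ltsrulename{res-1}, \ltsrulename{res-2} follow by applying the induction hypothesis to the premise(s) and taking unions or subtracting a bound name on both sides. Case \ltsrulename{eq} is dispatched by the structural congruence part established above.

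The only case that requires a little care is \ltsrulename{def}: we have $\tdefcall{\typevara}{\tilde{\tychana}} \semty{\alpha} T'$ from $T\subs{\tilde{\tychana}}{\tilde{\tyvar}} \semty{\alpha} T'$ with $\typevara(\tilde{\tyvar}) = T$. Here I rely on the standing convention (analogous to the one imposed on \fgo definitions in \S~\ref{sect:core_syn}) that type equations are closed, i.e.\ $\fv(T) \subseteq \tilde{\tyvar}$ and $\fn(T) = \emptyset$. This gives $\fn(T\subs{\tilde{\tychana}}{\tilde{\tyvar}}) \subseteq \{\tilde{\tychana}\} = \fn(\tdefcall{\typevara}{\tilde{\tychana}})$, and by the induction hypothesis $\fn(T') \subseteq \fn(T\subs{\tilde{\tychana}}{\tilde{\tyvar}}) \subseteq \fn(\tdefcall{\typevara}{\tilde{\tychana}})$, as required. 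The main obstacle, such as it is, lies entirely in this invariant on definitions; once it is made explicit the whole induction is routine.
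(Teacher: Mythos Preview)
Your proposal is correct and is precisely the case analysis the paper summarises as ``straightforward from the rules in Figure~\ref{fig:types-sem}''; you have simply spelled out each case, including the one nontrivial point about \ltsrulename{def} relying on closedness of type equations. The approach is the same as the paper's.
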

\begin{proof}
  Straightforward from the rules in Figure~\ref{fig:types-sem}.
\end{proof}

\lemfinitelts*
\begin{proof}
  Under the symbolic semantics, by
  Lemma~\ref{lem:ty-bounded-variables}, there can be only a limited
  number of occurrences of each variable.
    Hence the processes cannot unfold infinitely wide, therefore, by
  Lemma~\ref{lem:ty-freenames-decrease}, only finitely many names are
  required. If infinitely many names are created, they can be garbage
  collected by structural congruence, in particular
  any type
  $(\nu \tilde{\tychana}) \ \tdefcall{\typevara}{\tilde{\tychanb}}$
  can be replaced by
    $(\nu \tilde{\tychanb}) \ \tdefcall{\typevara}{\tilde{\tychanb}}$
    if $\tilde{\tychana} \supseteq \tilde{\tychanb}$.
      Finally, processes that differ only from their bound names can be
  equated by alpha-equivalence.
  \end{proof}

\thmdecidability*
\begin{proof}
  By Lemma~\ref{lem:finite-lts}, the LTS generated by the $k$-symbolic
  semantics is finite state.
          Hence, checking $k$-liveness and $k$-safety is decidable since the
  $k$-symbolic LTS is finite and thus each term is using only finitely
  many names (after garbage collection).
\end{proof}

\section{Appendix for Section~\ref{sec:properties}}
\label{app:properties}

We use the following lemma for the proofs. 
\begin{lemma}[Inversion]\label{lem:inversion}
  \begin{enumerate}
  \item 
    If $\G \vdash_B P \ts T$ and $P \equiv (\nub c)P'$ then $T \equiv (\nub
    c)T'$, with $\G' \vdash_{B'} P' \ts T'$, for some $\G'$ and $B'$, with
    $\G \subseteq \G'$ and $B \subseteq B'$.
  \item
    If $\G \vdash_s P \ts T$ and $P \equiv P_1\parr P_2$ then $T \equiv
    T_1\parr T_2$, with $\G \vdash_{B_1} P_1 \ts T_1$ and 
    $\G \vdash_{B_2} P_2 \ts T_2$, with $B = B_1 \cup B_2$. 
  \end{enumerate}
\end{lemma}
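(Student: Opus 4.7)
The plan is to derive both clauses from a more basic preservation-under-congruence lemma: if $\G \vdash_B P \ts T$ and $P \equiv Q$, then $\G \vdash_B Q \ts T''$ for some $T'' \equiv T$. Once this is in place, inversion follows easily because the typing rules are syntax-directed on the outermost constructor: for a term of the form $(\nu c)P'$ only rule \trulename{res} applies (fixing $T = (\nu c)T'$), and for a term of the form $P_1 \parr P_2$ only \trulename{par} or \trulename{parr} apply (fixing $T = T_1 \parr T_2$). Thus, to prove (1), I apply the preservation lemma to obtain $\G \vdash_B (\nu c)P' \ts T''$ with $T'' \equiv T$, then invert rule \trulename{res} to extract the required $T''$, $\G'$ and $B'$; part (2) is analogous using \trulename{par}/\trulename{parr}.

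The preservation-under-congruence lemma itself is proved by induction on the derivation of $P \equiv Q$, with a case for each structural axiom. The commutative, associative and $\zero$-unit cases go through by the directly analogous type-level congruence axioms. For the buffer axioms $(\nu c)\buff{\tilde{v}}{c}{\sigma} \equiv \zero$ and $(\nu c)\closedbuff{\tilde{v}}{c}{\sigma} \equiv \zero$, the corresponding type-level congruence $(\nu c)\tyopenbuffer{c} \equiv \zero$ and $(\nu c)\tyclosedbuffer{c} \equiv \zero$ is available, and rules \trulename{buff} and \trulename{cbuff} immediately fit. The technically delicate case is scope extrusion, $P \mid (\nu c)Q \equiv (\nu c)(P \mid Q)$ with $c \notin \fn(P)$: on the left, \trulename{parr} followed by \trulename{res} produces $T_1 \parr (\nu c)T_2$, and I must argue that $c \notin \fn(T_1)$ so that the type-level congruence $T_1 \parr (\nu c)T_2 \equiv (\nu c)(T_1 \parr T_2)$ applies.

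To close that gap I would establish a routine auxiliary lemma: if $\G \vdash_B P \ts T$ then $\fn(T) \subseteq \fn(P) \cup \domain{\G}$, proved by straightforward induction on the typing derivation, since every typing rule either passes names through or introduces fresh ones under a corresponding binder. This gives $c \notin \fn(T_1)$ from $c \notin \fn(P_1)$ (noting $c$ is bound, hence not in $\domain{\G}$ without clashing by $\alpha$-conversion). A secondary bookkeeping point is the runtime buffer set $B$: under the \trulename{parr} rule the sets must be disjoint, and under \trulename{res} a name is removed from $B$; scope extrusion must be matched by moving the same singleton $\{c\}$ contribution through the parallel split, which is immediate since $B_1 \cap B_2 = \emptyset$ and $c \in B_2$.

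The main obstacle is the scope extrusion case, specifically transporting the side condition $c \notin \fn(P)$ to its type-level counterpart $c \notin \fn(T)$ and simultaneously tracking the active-buffer set $B$; once the free-name lemma above is in hand, the rest of the cases are mechanical. A final subtlety is the treatment of $\alpha$-renaming inside $\equiv$, which at the process level can rename restricted names but, by \trulename{res} together with the analogous $\alpha$-rule at the type level, preserves typing modulo type-level $\alpha$-equivalence and hence modulo $\equiv$ on types.
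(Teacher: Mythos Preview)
Your proposal is correct and is precisely the kind of argument one would carry out to justify the lemma; the paper's own proof reads in its entirety ``Straightforward from the typing rules,'' so you have in fact supplied considerably more detail than the authors. Your decomposition via a typing-preserved-under-$\equiv$ lemma, the auxiliary free-name containment to handle scope extrusion, and the bookkeeping of the buffer set $B$ are exactly the ingredients needed to make ``straightforward'' rigorous.
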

\begin{proof}
  Straightforward from the typing rules. 
\end{proof}

\thmsubreduction*
\begin{proof}
  We use Lemma \ref{lem:inversion} 
  and similar inversion lemmas for other constructs. 
  Then the rest is straightforward by induction on typing and case analysis on the
  semantics of processes. 
\end{proof}

\thmchansafe*
\begin{proof}
  Suppose $X_0\ENCan{}\tra{}^*(\nub \tilde{c}){Q}$ and 
  $Q\barb{\cclbarb{a}}$. 
  Then by Lemma~\ref{lem:inversion},  
  there exists $T$ such that 
  $\G' \vdash Q \ts T$. By Lemma~\ref{lem:barb}, 
  we have $T\barb{\cclbarb{a}}$. Since $T$ is safe, 
  $\neg(T\wbarbt{\clbarb{a}})$ and $\neg (T\wbarbt{\ov{a}})$.
  This implies, by applying Lemma~\ref{lem:barb} again,  
  $\neg(Q\wbarb{\clbarb{a}})$ and $\neg (Q\wbarb{\ov{a}})$.
\end{proof}

\proptermination*
\begin{proof}
  Assume $\G \vdash \prog \ts \EqTypes$ and 
  $\EqTypes$ is live. \\[1mm]
  {\bf (1)} Suppose by contradiction that 
  $X_0\ENCan{}\tra{}^\ast P \not\tra{}$ but $P\not\equiv \zero$. 
  Then there exists $Q$ such that $P\equiv (\nu\tilde{a})(Q)$ and 
  $Q\barb{b}$ or $Q\barb{\tilde{b}}$ for some $b$ or $\tilde{b}$. 
  Then by Lemma \ref{lem:barb}, it contradicts $\EqTypes$ is live. \\[1mm]
  {\bf (2)} By definition of liveness (there is always a path
  ($\tau$-actions) to reach the term $\zero$, which is live). 
\end{proof}

\propliveness*
\begin{proof}
  Suppose $\prog\not\in \ic$
  and $X_0\ENCan{} \tra{}^\ast (\nu \vec{a})P$. 
  Then by Inversion Lemma (Lemma~\ref{lem:inversion}), there exists $\G' \vdash_B P \ts T$. 
  Then $P \wbarb{b}$ iff $T\wbarbt{b}$ and $P \wbarb{\tilde{b}}$ iff
  $T\wbarbt{\tilde{b}}$ since  the reduction of $T$ coincides with 
  the reduction of $P$.  Hence $\prog$ is live. 
\end{proof}

\paragraph{Proof of Theorem~\ref{thm:liveness}}

\begin{lemma}\label{lem:noif_live}
Suppose $\G \vdash \prog \ts \EqTypes$,  $\EqTypes$ is live and
$\prog$ is conditional-free, then $\prog$ is live.
\end{lemma}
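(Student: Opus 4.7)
My plan is to exploit the fact that, modulo the abstraction of conditionals, the operational semantics of an \fgo process tightly matches that of its type. Inspecting the typing rules in Figure~\ref{fig:typing}, the rule \trulename{if} is the \emph{sole} source of the internal choice constructor $\tch{}{T_i}{i\in I}$ at the type level; every other typing rule induces a one-to-one correspondence between type and process constructs (select maps to external choice, buffers to buffers, restrictions to restrictions, $\tau$ to $\tau$, and so on). Consequently, for conditional-free programs the type-level transition system of Figure~\ref{fig:types-sem} mirrors step-by-step the process-level transition system of Figure~\ref{fig:redsem}.

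First, I would verify that being conditional-free is preserved under reduction. This is a routine induction on the reduction rules of Figure~\ref{fig:redsem}: the only delicate case is \rulename{def}, which unfolds a definition $X(\tilde x) = P$; since conditional-freeness of the program $\prog$ is by assumption inherited by every $D_i \in \{D_i\}_{i\in I}$, the substitution $P\subs{\tilde v,\tilde c}{\tilde x}$ remains conditional-free. Restriction, parallel composition, and the runtime buffer constructs are all clearly stable under this property.

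Second, I would complement subject reduction (Theorem~\ref{thrm:subred}) with a \emph{converse} simulation: whenever $\G \vdash_{B} P \ts T$ with $P$ conditional-free and $T \semty{} T'$, there exist $P'$, $\G'$, $B'$ such that $P \tra{} P'$ and $\G' \vdash_{B'} P' \ts T'$. The proof proceeds by induction on the derivation of $T \semty{} T'$; because the typing derivation for $T$ does not use \trulename{if}, the offending rule \ltsrulename{sel} (whose $\tau$-transition out of $\tch{}{T_i}{i\in I}$ has no process-level counterpart without conditionals) cannot apply at the top level. All the other rules \ltsrulename{snd}, \ltsrulename{rcv}, \ltsrulename{tau}, \ltsrulename{bra}, \ltsrulename{com}, \ltsrulename{new}, \ltsrulename{end}, \ltsrulename{buf}, \ltsrulename{close}, \ltsrulename{par}, \ltsrulename{res-1}, \ltsrulename{res-2}, and \ltsrulename{def} each admit a directly matching process reduction via the structurally analogous rules of Figure~\ref{fig:redsem}, preserving typing. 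A similar argument lifts the barb predicates of Definition~\ref{def:type-barbs} back to those of Definition~\ref{def:barbs}: i.e., $T\barb{o}$ implies the typed process $P$ satisfies $P\barb{o}$ whenever $P$ is conditional-free (a converse to Lemma~\ref{lem:barb} in this restricted setting).

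With this tight correspondence in hand, the conclusion is immediate. Given any reachable state $X_0\langle\rangle \tra{}^\ast (\nu \tilde c) Q$, Lemma~\ref{lem:inversion} and subject reduction yield a type $T$ with $\typevara_0\langle\rangle \semty{}^\ast (\nu \tilde c) T$ and $\G' \vdash_{B'} Q \ts T$; moreover $Q$ is still conditional-free by the first step above. If $Q\barb{a}$, $Q\barb{\ov a}$ or $Q\barb{\tilde a}$, then Lemma~\ref{lem:barb} gives the same barb on $T$; liveness of $\EqTypes$ then provides a type-level witness $T \semty{}^\ast T''$ with the required $\labsync{a_i}$ barb, and the converse simulation lifts this witness to a process-level sequence $Q \tra{}^\ast Q''$ exhibiting the corresponding synchronisation barb, establishing liveness of $\prog$. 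The main obstacle will be stating and proving the converse simulation cleanly in the presence of runtime restrictions and buffer states, and verifying that no hidden source of divergence between the two semantics survives once \trulename{if} is excluded; once that bookkeeping is settled, the argument is a direct transfer along the two-sided correspondence.
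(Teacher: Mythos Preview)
Your proposal is correct and follows essentially the same route as the paper, which dispatches the lemma in one line by appealing to the fact that, in the absence of conditionals, process and type moves are in strong correspondence. Your write-up is considerably more careful: you make explicit that the \emph{converse} simulation (type transitions reflected by process reductions) is the ingredient actually needed to pull type-level liveness back to the process level, whereas the paper's phrasing ``process moves are strongly matched by type moves'' literally states only the forward direction and leaves the required bisimulation implicit. One small point of presentation: your induction on the derivation of $T \semty{} T'$ will of course encounter sub-derivations carrying non-$\tau$ labels (e.g.\ \ltsrulename{snd}, \ltsrulename{rcv} under \ltsrulename{com}), so the inductive invariant must be phrased for arbitrary labelled transitions, not just the visible $\tau$/$[a]$ steps; this is standard and your sketch is compatible with it.
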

\begin{proof}
Straightforward from the fact that all process moves are
\emph{strongly} matched by type moves and $\EqTypes$ is live.
\end{proof}

\begin{lemma}\label{lem:allif_live}
Suppose $\G \vdash \prog \ts \EqTypes$, $\EqTypes$ is live, $\prog \in
\ac$ and $\prog$ and all marked conditionals in $\m{mark}(\prog)$ are in
$\infcond$, then $\prog$ is live.
\end{lemma}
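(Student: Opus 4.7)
The plan is to first show that $\MAPAST{\prog}$ is live and then lift liveness back to $\prog$ via the $\ac$ property. The crucial observation is that since every mark in $\m{mark}(\prog)$ lies in $\infcond(\prog)$, the mapping $\MAPAST{\cdot}$ (Definition~\ref{def:starmapping}) replaces \emph{every} conditional with a non-deterministic $\ast$-conditional; hence $\MAPAST{\prog}$ contains no data-driven branching.

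First I would establish: for any reachable $R$ of $\MAPAST{\prog}$, $R$ satisfies both clauses of Definition~\ref{def:plive}. The argument parallels Lemma~\ref{lem:noif_live}. By subject reduction (Theorem~\ref{thrm:subred}) together with Lemma~\ref{lem:inversion}, $R$ has some type $T'$ reachable from $\EqTypes$. For any barb $R\barb{o}$ (with $o$ an input, output, or set of barbs arising from a select), the straightforward extension of Lemma~\ref{lem:barb} to the $\ast$-reduction yields $T'\wbarbt{o}$; liveness of $\EqTypes$ then produces $T'\wbarbt{\labsync{a}}$ (respectively $\labsync{a_i}$ for some $a_i$ in the set). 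The critical fact is that every type-level reduction of $T'$ can be matched at the process level in $R$: rules \rulename{ift$\ast$} and \rulename{iff$\ast$} mirror exactly the internal-choice rule \ltsrulename{sel}, and since no data-driven conditional is present in $R$, there is no residual source of branching restriction. Hence $R\wbarb{\labsync{a}}$, giving liveness of $\MAPAST{\prog}$.

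Next, to transfer liveness to $\prog$: fix a reachable state $\prog\tra{}^*(\nub\tilde{c})Q$ and suppose $Q\barb{o}$. By Proposition~\ref{prop:starprop}(2) we have $\MAPAST{Q}\wbarb{o}$. A key auxiliary observation, which I would prove by induction on the reduction length, is that $\MAPAST{Q}$ is reachable (up to structural equivalence and fresh-mark renaming introduced by recursion unfoldings) from $\MAPAST{\prog}$: because all conditionals in $\prog$ have their marks in $\infcond$, every branching decision $\prog$ takes has a matching non-deterministic move in $\MAPAST{\prog}$. Applying the liveness of $\MAPAST{\prog}$ established above gives $\MAPAST{Q}\wbarb{\labsync{a}}$ (respectively $\labsync{a_i}$). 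Finally, since $\prog\in\ac$, Definition~\ref{def:ac} yields $Q\wbarb{\labsync{a}}$ (respectively $Q\wbarb{\labsync{a_i}}$), establishing the two liveness clauses for $\prog$.

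The main obstacle I anticipate is making precise the correspondence between reductions of $\prog$ and those of $\MAPAST{\prog}$---concretely, showing that $\MAPAST{Q}$ is reachable from $\MAPAST{\prog}$. This requires careful handling of how the marking interacts with recursion unfolding (since new conditionals appear as copies of old ones), and also careful handling of the fact that $\infcond(Q)$ may differ from $\infcond(\prog)$ on freshly unfolded copies. I expect to need an auxiliary simulation-style lemma stating that whenever $\prog\tra{}^{\ast}P'$, there is a matching $\MAPAST{\prog}\tra{}^{\ast}R'$ with $R'\equiv\MAPAST{P'}$ (modulo mark renaming), and conversely that every $\ast$-move of $\MAPAST{\prog}$ reflects a move available in some data-instantiation of $\prog$. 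Once this correspondence is in place, the rest of the argument proceeds by a direct application of liveness of $\EqTypes$ combined with clause~\ref{def:ac} of the $\ac$ property.
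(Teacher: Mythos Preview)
Your approach is correct and shares the paper's core idea---use the $\ast$-mapping together with $\ac$ to bridge type-level and program-level liveness---but you take an unnecessary detour that creates the very obstacle you worry about.

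The paper does not first establish liveness of $\MAPAST{\prog}$ and then chase reachability of $\MAPAST{Q}$ from $\MAPAST{\prog}$. Instead it argues directly at each reachable state: if $\prog\tra{}^\ast(\nub\tilde{c})Q$ then by subject reduction $Q$ has a type $T'$ reachable from $\typevara_0$. Since every conditional in $\prog$ lies in $\infcond$, the mapping $\MAPAST{Q}$ replaces \emph{all} conditionals by $\ast$-conditionals, so $\MAPAST{Q}$ and $T'$ have step-for-step matching semantics (internal choice on both sides); hence their weak barbs coincide. Combining this with $\ac$ (which gives $\MAPAST{Q}\wbarb{o}\Rightarrow Q\wbarb{o}$) and Proposition~\ref{prop:starprop}(2) (which gives the converse), one obtains $Q\wbarb{o}$ iff $T'\wbarbt{o}$, and liveness of $\EqTypes$ finishes the argument.

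The point is that the typing judgement $\G'\vdash_B Q\ts T'$ obtained from subject reduction already provides the link between $Q$ and the type world; you do not need a separate simulation between $\prog$-executions and $\MAPAST{\prog}$-executions. By routing through ``$\MAPAST{\prog}$ is live'' you force yourself to prove that $\MAPAST{Q}$ is reachable from $\MAPAST{\prog}$, which---as you correctly anticipate---is delicate because $\MAPAST{(\cdot)}$ is defined relative to $\infcond$ of the \emph{current} term, and marks interact with unfolding. All of that is avoidable: work state-by-state, using the type of $Q$ directly.
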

\begin{proof}
Since all conditionals in $\prog$ are infinite,
then by the definition of $\ac$ and $\EqTypes$ being live gives us
that $P\wbarb{o}$ iff $T\wbarbt{o}$, from which the result follows.
\end{proof}

\begin{lemma}\label{lem:finifaway}
Suppose $\G \vdash \prog \ts \EqTypes$, $\prog \in \ac$, $\EqTypes$ is live and
$\m{mark}(\prog)$ has a marked branch $n$ that is not in
$\infcond(\prog)$. Then $\prog$ has an infinite trace in which $n$ no
longer appears.
\end{lemma}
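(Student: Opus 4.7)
The plan is to construct an infinite execution of $\m{mark}(\prog)$ along which the mark $n$ fires only finitely often; its tail after the last $n$-labelled step (or all of it, if $n$ never fires) is then the desired infinite trace in which $n$ no longer appears.

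First, I would note that $\prog \in \ac \subseteq \ic$ guarantees that $\m{mark}(\prog)$ has at least one infinite trace. Indeed, by Definition~\ref{def:ic} there is some $m \in \infcond(\prog)$ together with a reachable $R = \tycontxt_r[\mpite{e}{Q_1}{Q_2}{m}]$ admitting an infinite trace from $R$ that fires $m$ infinitely often; prepending the reduction sequence leading to $R$ yields an infinite execution of $\m{mark}(\prog)$ itself.

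The key step is a contradiction argument against the assumption that \emph{every} infinite trace of $\m{mark}(\prog)$ fires $\iflab{n}{\llab}$ or $\iflab{n}{\rlab}$ infinitely often. Given such a trace $\tau$, let $P_k$ be the state along $\tau$ immediately preceding its first $n$-labelled reduction. The rules \rulename{iftm} and \rulename{iffm} require that this step fire a sub-term $\mpite{e'}{Q_1'}{Q_2'}{n}$ sitting under a reduction context, so $P_k$ has the shape $\tycontxt_r'[\mpite{e'}{Q_1'}{Q_2'}{n}]$, and the suffix of $\tau$ starting at $P_k$ is an infinite trace of $P_k$ in which $n$ is fired infinitely often. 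By Definition~\ref{def:ic} this places $n \in \infcond(\prog)$, contradicting the hypothesis.

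Hence some infinite trace $\tau^\ast$ of $\m{mark}(\prog)$ fires $n$ only finitely often, and its suffix past the last $n$-step (or the whole trace, if no $n$-step occurs) is the infinite execution we require. The main subtlety is purely bookkeeping around marks: I must justify that ``firing the mark $n$'' is a well-defined event along an arbitrary trace, i.e.\ that marks are stable under $\tra{l}$. This follows because $\m{mark}(\cdot)$ assigns marks once and for all to the syntactic conditionals of $\prog$ (including those inside definitions $D_i$), and reductions propagate marked bodies verbatim; in particular, unfolding a definition merely reintroduces an already-marked body, so every $n$-labelled transition corresponds to reducing some instance of a single, fixed syntactic conditional of $\prog$, making the argument above well-defined.
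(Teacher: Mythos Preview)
Your proof is correct and follows the same approach as the paper's: use $\prog\in\ac\subseteq\ic$ to obtain an infinite trace, then observe that $n\notin\infcond(\prog)$ rules out $n$ firing infinitely often along any such trace, so a tail of that trace witnesses the claim. Your version is considerably more careful than the paper's two-line sketch (in particular, the paper's phrase ``$\prog$ cannot have a finite trace'' is imprecise, whereas you correctly extract only the existence of \emph{one} infinite trace from $\ic$), but the underlying argument is the same.
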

\begin{proof}
By $\prog \in \ac$ it follows that $\prog$ cannot have a finite trace,
thus the result follows by $n \not\in \infcond(\prog)$.
\end{proof}

\thmliveness*

\begin{proof}
If all conditionals in $\prog$ are infinite, then the result follows
by Lemma~\ref{lem:allif_live}. Otherwise, by Lemma~\ref{lem:finifaway}
we can execute $\prog$ until there are no more finite conditionals,
from which then the result follows by Subject Reduction
(Theorem~\ref{thrm:subred}) and Lemma~\ref{lem:allif_live}.
\end{proof}

\section{Appendix for Section~\ref{sec:asynchrony} (Asynchrony)}
\label{app:async}

\begin{definition}[Asynchronous type barbs]~\label{def:async-type-barbs}
\[
\inferrule
{k < n}
{{\astyOpenbuf{\tychana}{k}{n}} \barb{\astyBufSend{\tychana}}}
\quad
\inferrule
{k \leq 1}
{{\astyOpenbuf{\tychana}{k}{n}} \barb{\astyBufRcv{\tychana}}}
\quad
\inferrule
{
  T\barb{\ov{\tychana}}
  \quad
  T'\barb{\astyBufSend{\tychana}}}
{T\parr T'\barb{\labsync{\tychana}}}
\quad
\inferrule
{
  T\barb{{\astyBufRcv{\tychana}}}
      \quad \typrefix_i \barb{\tychana}
}
{T\parr \tbr{}{\typrefix_i ; S_i}{i \in I} \barb{\labsync{\tychana}}}
\]
\end{definition}

\paragraph{Proofs of Theorems \ref{thm:decidability-aty} and \ref{thm:a:safety:liveness} }
Since the definitions of liveness and safety stay the same, and since
the semantics are essentially isomorphic (the symbolic semantics is
unaffected, notably) the proofs are essentially identical with the
synchronous cases.

 \section{Appendix for Section~\ref{sec:impl}}

Below we present source code of the examples we created for our evaluation.

\subsection{forselect}
The \texttt{forselect} program spawns two communicating for-select loop and
both goroutines terminate when channel \lstgo{term} is used.
The choice of sending or receiving from the channels inside the
for-select loop is non-deterministically decided by the \lstgo{select} primitive
in Go.
\begin{lstlisting}[language=Go,style=golang]
package main

import "fmt"

func sel1(term, ch chan int, done chan struct{}) {
	for {
		select {
		case <-term: // Receive termating message.
			fmt.Println("sel1: recv")
			done <- struct{}{}
			return
		case ch <- 1:
			fmt.Println("sel1: send")
		}
	}
}

func sel2(term, ch chan int, done chan struct{}) {
	for {
		select {
		case <-ch:
			fmt.Println("sel2: recv")
		case term <- 2: // Send terminating message.
			fmt.Println("sel2: send")
			done <- struct{}{}
			return
		}
	}
}

func main() {
	done := make(chan struct{})
	term := make(chan int) // Terminating channel.
	data := make(chan int)
	go sel1(term, data, done)
	go sel2(term, data, done)

	<-done
	<-done
}
\end{lstlisting}

\subsection{cond-recur}
The \texttt{cond-recur} program is similar to \texttt{forselect} but the
decision of when to terminate the for-select loop is determined by a
conditional in the \lstgo{x} goroutine, rather than a terminating channel.
\begin{lstlisting}[language=Go,style=golang]
package main

import "fmt"

func x(ch chan int, done chan struct{}) {
	i := 0
	for {
		if i < 3 { // This condition decides when to terminate.
			ch <- i
			fmt.Println("Sent", i)
			i++
		} else {
			done <- struct{}{} // Send terminate message.
			return             // Break out of loop.
		}
	}
}

func main() {
	done := make(chan struct{})
	ch := make(chan int)
	go x(ch, done) // Spawn a decision making goroutine.
FINISH:
	for {
		select {
		case x := <-ch:
			fmt.Println(x)
		case <-done:   // Terminate message.
			break FINISH // Break out of loop.
		}
	}
}
\end{lstlisting}

\subsection{jobsched}
The \texttt{jobsched} program sets up a shared job queue between two workers to
process the incoming jobs.
\begin{lstlisting}[language=Go,style=golang]
package main

import (
	"fmt"
	"time"
)

var i int

func worker(id int, jobQueue <-chan int, done <-chan struct{}) {
	for {
		select {
		case jobID := <-jobQueue:
			fmt.Println(id, "Executing job", jobID)
		case <-done:
			fmt.Println(id, "Quits")
			return
		}
	}
}

func morejob() bool {
	i++
	return i < 20
}

func producer(q chan int, done chan struct{}) {
	for morejob() {
		q <- i
	}
	close(done)
}

func main() {
	jobQueue := make(chan int)
	done := make(chan struct{})
	go worker(1, jobQueue, done)
	go worker(2, jobQueue, done)
	producer(jobQueue, done)
	time.Sleep(1 * time.Second)
}
\end{lstlisting}
 
\section{Additional Discussion of Related Work}
\label{app:related}
We discuss some additional related work. 

{\bf Session types.}
Another key limitation of session typing
is that conditional branching is usually typed with the following
rule:\\[1mm]
\hspace*{1.5cm}
$
\inferrule[]
{\G \vdash e {:}\m{bool} \quad \G \vdash P \ts \D \quad \G \vdash Q \ts \D}
{\G \vdash \pite{e}{P}{Q} \ts \D}
$\\[1mm]
The rule above (even in the presence of subtyping, e.g.~\cite{GH05})
enforces that the communication behaviour of the two branches must be
fundamentally the same. This turns out to be too restrictive in
practice, where branching is mostly used to define the conditions
under which behaviour should indeed be morally different.

{\bf Model Checking.} 
The classical work~\cite{DBLP:conf/popl/ChakiRR02} verifies progress properties
($\Box \varphi$) of the $\pi$-calculus 
applying a LTL model checking tool to types which take the form 
of CCS terms. To limit state-space explosion, the work relies 
on an assume-guarantee reasoning technique. The work 
requires ``user input'' type signatures (annotations) on 
processes. Our framework does not require such annotations.

The work \cite{DOsualdoKO13} studies a verification of Erlang programs
where processes communicate via unbounded mailboxes 
and can be spawned dynamically and potentially infinitely.
Erlang programs are modelled as vector addition systems (VAS)
and a VAS-based tool is used to check 
simple reachability properties. The model is limited to a bounded number of channels, 
hence the topologies represented in their work are more limited 
than those induced by fenced types.

The work~\cite{Yasukata0M14} uses higher-order model checking to
verify a concurrent calculus which features dynamic process creation.
They transform processes into higher-order recursion schemes, 
which in turn generates action trees on which 
they can check whether two threads enter the same critical section. 
Because the action trees model executions only at the single process level
(i.e.~ignoring how processes are interleaved and communicate with
each other), it is not straightforward to adapt their approach 
to verify our liveness and safety properties.

\end{document}